\newtheorem{thm}{Theorem}
\newtheorem{lem}{Lemma}
\newtheorem{cor}{Corollary}
\newtheorem{rem}{Remark}
\newtheorem{fact}{Fact}
\def\pd{{\pmb d}}\def\ph{{\pmb h}}\def\px{{\pmb x}}\def\py{{\pmb y}}\def\pz{{\pmb z}}
\def\pB{{\pmb B}}\def\pH{{\pmb H}}\def\pI{{\pmb I}}\def\pP{{\pmb P}}\def\pR{{\pmb R}}\def\pU{{\pmb U}}\def\p0{{\pmb 0}}
\long\def\comment#1{}
\newfont{\bbb}{msbm10 scaled 700}
\newfont{\bbc}{msbm10 scaled 1100}
\newcommand{\CC}{\mbox{\bbc C}}
\newcommand{\RR}{\mbox{\bbc R}}
\newcommand{\EE}{\mbox{\bbc E}}
\newcommand{\dv}{{\pmb d}}
\newcommand{\ev}{{\pmb e}}
\newcommand{\fv}{{\pmb f}}
\newcommand{\hv}{{\pmb h}}
\newcommand{\kv}{{\pmb k}}
\newcommand{\nv}{{\pmb n}}
\newcommand{\qv}{{\pmb q}}
\newcommand{\rv}{{\pmb r}}
\newcommand{\uv}{{\pmb u}}
\newcommand{\vv}{{\pmb v}}
\newcommand{\wv}{{\pmb w}}
\newcommand{\xv}{{\pmb x}}
\newcommand{\yv}{{\pmb y}}
\newcommand{\zv}{{\pmb z}}
\newcommand{\zerov}{{\pmb 0}}
\newcommand{\Am}{{\pmb A}}
\newcommand{\Bm}{{\pmb B}}
\newcommand{\Cm}{{\pmb C}}
\newcommand{\Dm}{{\pmb D}}
\newcommand{\Em}{{\pmb E}}
\newcommand{\Fm}{{\pmb F}}
\newcommand{\Gm}{{\pmb G}}
\newcommand{\Hm}{{\pmb H}}
\newcommand{\Id}{{\pmb I}}
\newcommand{\Jm}{{\pmb J}}
\newcommand{\Km}{{\pmb K}}
\newcommand{\Mm}{{\pmb M}}
\newcommand{\Om}{{\pmb O}}
\newcommand{\Pm}{{\pmb P}}
\newcommand{\Rm}{{\pmb R}}
\newcommand{\Sm}{{\pmb S}}
\newcommand{\Tm}{{\pmb T}}
\newcommand{\Um}{{\pmb U}}
\newcommand{\Vm}{{\pmb V}}
\newcommand{\Wm}{{\pmb W}}
\newcommand{\Xm}{{\pmb X}}
\newcommand{\Ym}{{\pmb Y}}
\newcommand{\Zm}{{\pmb Z}}
\newcommand{\Ac}{{\cal A}}
\newcommand{\Cc}{{\cal C}}
\newcommand{\Ic}{{\cal I}}
\newcommand{\Jc}{{\cal J}}
\newcommand{\Kc}{{\cal K}}
\newcommand{\Nc}{{\cal N}}
\newcommand{\Rc}{{\cal R}}
\newcommand{\Sc}{{\cal S}}
\newcommand{\muv}{\hbox{\boldmath$\mu$}}
\newcommand{\Lambdam}{\hbox{\boldmath$\Lambda$}}
\newcommand{\Sigmam}{\hbox{\boldmath$\Sigma$}}
\newcommand{\Phim}{\hbox{\boldmath$\Phi$}}
\newcommand{\Xim}{\hbox{\boldmath$\Xi$}}
\newcommand{\trace}{{\hbox{tr}}}
\newcommand{\SNR}{{\sf SNR}}
\renewcommand{\Re}{{\rm Re}}
\newcommand{\herm}{{\sf H}}
\newcommand{\transp}{{\sf T}}
\begin{document}

\title{Joint Spatial Division and Multiplexing}

\author{\authorblockN{Ansuman Adhikary\authorrefmark{2},
Junyoung Nam\authorrefmark{1}, Jae-Young Ahn\authorrefmark{1}, and Giuseppe Caire\authorrefmark{2}}
\thanks{
$^*$ Mobile Communications Division, Electronics Telecommunications Research Institute, Daejeon, Korea.

$^\dagger$ Ming-Hsieh Department of Electrical Engineering, University of Southern California, CA.

This work was supported by the IT R\&D program of MKE/KEIT in Korea [Development of beyond 4G technologies for smart mobile services].}}

\maketitle

\begin{abstract}
We propose Joint Spatial Division and Multiplexing (JSDM), an
approach to multiuser MIMO downlink that exploits the structure of
the correlation of the channel vectors
in order to allow for a large number of antennas at the
base station while requiring reduced-dimensional Channel State Information at the Transmitter (CSIT).
This allows for significant savings both in the downlink training and in the CSIT feedback from
the user terminals to the base station, thus making the use of a large number of base station antennas
potentially suitable also for Frequency Division Duplexing (FDD) systems, for which
uplink/downlink channel reciprocity cannot be exploited.
JSDM forms the multiuser MIMO downlink precoder by concatenating
a {\em pre-beamforming} matrix, which depends only on the channel second-order statistics,
with a classical multiuser precoder, based on the instantaneous knowledge
of the resulting reduced dimensional ``effective'' channels.
We prove a simple condition under which JSDM incurs no loss of optimality with respect
to the full CSIT case. For linear uniformly spaced arrays,  we show that such condition is closely approached when the number of antennas is large.
For this case, we use Szego's asymptotic theory of large Toeplitz matrices to design a DFT-based pre-beamforming scheme
requiring only coarse information about the users angles of arrival and angular spread.
Finally, we extend these ideas to the case of a two-dimensional base station antenna array,
with 3-dimensional beamforming, including multiple beams in the elevation angle direction.
We provide guidelines for the pre-beamforming optimization and calculate the system spectral efficiency under
proportional fairness and max-min fairness criteria,  showing extremely attractive performance.
Our numerical results are obtained via an asymptotic random matrix theory tool known as ``deterministic equivalent'' approximation,
which allows to avoid lengthy Monte Carlo simulations and provide accurate results
for realistic (finite) number of antennas and users.
\end{abstract}

{\bf Keywords:} Multiuser MIMO Downlink, Antenna Correlation, 3D Beamforming, Deterministic Equivalents.

\newpage

\section{Introduction}
\label{sec:intro}

In a Multiuser MIMO (MU-MIMO) downlink where
a base station (BS) with $M$ antennas serves $K$ single-antenna user
terminals (UTs) on the same time-frequency slot, and the
channel fading coefficients can be considered constant over coherence blocks of
$T$ channel uses,~\footnote{A channel use corresponds to an independent complex signal-space dimension in the time-frequency domain.}
the high-SNR system spectral efficiency behaves at best as $M^\star(1 -
M^\star/T) \log \SNR + O(1)$, where $M^\star = \min\{M,K,T/2\}$.
The upper bound yielding this behavior is obtained by letting
all UTs cooperate and using the result of \cite{zheng2002communication} on the high-SNR capacity of the non-coherent
block-fading MIMO point-to-point channel. A tight lower bound is obtained
by devoting $M^\star$ dimensions per block to training, in order to acquire the Channel State Information
at the Transmitter (CSIT), i.e., to estimate the downlink channel matrix on each fading
coherence block.
In Frequency Division Duplexing (FDD) systems, where the fading channel reciprocity cannot be exploited,
the lower bound is achievable by assuming ideal instantaneous CSIT feedback from the UTs to the BS,
between the downlink training phase and the data transmission phase.
If, more realistically, instantaneous feedback in the same fading coherence block is not possible,
a prediction error further decreases the system multiplexing gain by the factor $\max\{1 - 2 B_d T_s,0\}$,
where $B_d = v f_0/c$ is the Doppler bandwidth (Hz), ($v$ denoting the UT speed
(m/s), $f_0$ the carrier frequency (Hz) and $c$ the light speed
(m/s)), and $T_s$ is the slot duration (s) \cite{Caire-Jindal-Kobayashi-Ravindran-TIT10,kobayashi2011training}.

It is evident that, even not taking into account the cost of CSIT feedback (which may impact the
uplink system capacity), the MU-MIMO multiplexing gain for an FDD system based
on downlink training, channel estimation (and possibly prediction) at the UTs,
and CSIT feedback, is significantly reduced when $M^\star$ is not much smaller
than  $T$ and/or $2 B_d T_s$ is not much smaller than 1.
In particular, for large $M$ and $K$, the downlink training represent a significant bottleneck
(as quantified by the analysis in \cite{Huh-Tulino-Caire-TITsubmit}) and
the corresponding CSIT feedback yields an unacceptably high overhead for the uplink.

Alternatives that do not require CSIT \cite{gou2010aiming}  or require only outdated CSIT
\cite{maddah2010completely} (without requiring a strict one-slot prediction constraint)
have been proposed. Although these schemes may achieve better multiplexing gain
than the basic training and feedback scheme in certain conditions
(see for example the comparison in \cite{caire2012isit})
they do not scale well with the number of BS antennas and UTs,
since they require a precoding block length (in time slots) that grows very rapidly with the
number of system antennas.~\footnote{For example, \cite{maddah2010completely}
requires precoding over $M! \sum_{j=1}^M
\frac{1}{j}$ time slots in order to serve $M$ UTs with
$M$ BS antennas.}
Hence, these schemes are not suited for ``large'' MIMO
systems with many BS antennas serving many UTs.


In contrast, Time Division Duplexing (TDD) systems can exploit
channel reciprocity for estimating the downlink channels from uplink
training. In this case, the system multiplexing gain is still upper
bounded by $M^\star(1 - M^\star/T)$, but training in the same
coherence block is possible (hence, no extra degradation due to
prediction) and the training dimension is determined by the number
of total UT antennas, while the number of BS antennas can be made as
large as desired. By using $M \gg K$ antennas at the BS with TDD, as
proposed in \cite{Marzetta-TWC10} (see also the more refined
performance analysis and system optimization in
\cite{Huh11,debbah2012}), is very attractive for TDD systems both in
terms of achieved throughput and in terms of simplified downlink
scheduling and signal processing at the BS. Systems where the number
of BS antennas are much larger than the number of served UTs are
generally referred to as ``massive'' MIMO. A recent practical
testbed implementation of a 64 antenna massive MIMO system,
achieving transmitter clock stability and self-calibration in order
to effectively exploit TDD reciprocity, has been demonstrated in
\cite{argos}.


In this paper we consider a Joint Spatial Division and Multiplexing
(JSDM) approach to potentially achieve massive MIMO-like throughput
gains and simplified system operations also for FDD systems, which
still represent the far majority of currently deployed cellular
networks. We observe that, for a typical cellular
configuration, the channel from the $M$ BS antennas to any UT antenna is a {\em correlated} random vector with
covariance matrix that depends on the scattering geometry. Assuming a macro-cellular tower-mounted  BS with no significant local
scattering, the propagation between the BS antennas and any given UT antenna is characterized by the
local scattering around the UT, resulting in the well-known one-ring model \cite{Shi00}.
The main idea of JSDM consists of partitioning the user population into groups with {\em approximately}
the same channel covariance eigenspace, and split the downlink beamforming
into two stages: a pre-beamforming matrix that depends only on the channel
covariances, and a MU-MIMO precoding matrix  for the ``effective'' channel,
inclusive of pre-beamforming.  The pre-beamforming matrix is chosen in order to minimize the inter-group interference {\em for any instantaneous channel
realization}, by exploiting the linear independence of the {\em dominant eigenmodes}
of the channel covariance matrices of the different groups. Pre-beamforming can be considered as a
generalization of {\em sectorization}, widely used in current cellular technology.

The MU-MIMO precoding stage requires estimation and feedback of the instantaneous (effective) channel realization.
As we shall see, this may have significantly reduced dimension with respect to the original physical channel.
Therefore, both downlink training and uplink feedback overhead is greatly reduced, making this scheme attractive for FDD systems.
Notice that  the pre-beamforming stage requires only the channel covariance information,
which can be tracked with small protocol overhead.\footnote{In practice,
the channel covariance changes over time at a much slower time scale
with respect to the system slot rate, therefore we assume that this
is locally stationary and can be estimated and  tracked using some
standard subspace tracking technique \cite{vallet2012improved},
\cite{marzetta2011random}, \cite{hochwald2001adapting},
\cite{mestre2008improved}. See also the remark at the end of Section \ref{sec:jsdm}.}

We show that, under some conditions on the eigenvectors of the
channel covariance matrices, JSDM incurs no loss of optimality with
respect to the full CSIT case. When these conditions cannot be met,
we examine the design of the pre-beamforming matrix and the
performance of regularized zero forcing (linear) MU-MIMO precoding
for the resulting effective channel. Then, we specialize our system
design in the case of Uniform Linear Arrays (ULAs) and use Szego's
asymptotic results on Toeplitz matrices \cite{grenander1984toeplitz}
to show that the optimality conditions can be met by ULAs when $M$
is large, as long as the user groups have non-overlapping supports
of their Angle of Arrival (AoA) distributions. Using the Toeplitz
eigen-subspace approximation result of \cite{grenander1984toeplitz},
we argue that the pre-beamforming matrix for large ULAs can be
obtained by selecting blocks of columns of a unitary {\em Discrete
Fourier Transform} (DFT) matrix. DFT pre-beamforming achieves very
good performance and effective channel dimensionality reduction and
requires only a coarse knowledge of the support of the AoA
distribution for each user group, without requiring an accurate
estimation of the actual channel covariance matrix. Interestingly,
related eigen-structure properties of the covariance matrices were
independently derived in \cite{yin2012coordinated} for the purpose
of eliminating the {\em pilot contamination} effect which limits the
performance of TDD massive MIMO with the maximal-ratio single-user
beamforming advocated in \cite{Marzetta-TWC10}. Finally, we extend
our approach to the case of 2-dimensional ULAs (rectangular antenna
arrays) and three-dimensional (3D) beamforming, where we create
fixed beams also in the elevation angle direction, in addition to
the azimuth angle (planar) direction. The resulting beamforming
matrix takes on the appealing form of a Kronecker product. In this
way, we can serve simultaneously angular-separated groups of users
in different annular regions in a sector, at different distances
from the BS. We demonstrate the performance of such a system in a
realistic layout assuming a rectangular antenna array mounted on the
face of a tall building.

This paper focuses not only on the concept of JSDM, which is not
entirely new, but specifically on its performance analysis and system
design guidelines, i.e., how to choose the parameters of JSDM
for a given set of user groups that we wish to serve simultaneously, on the same
time-frequency slot. Since we focus on the large system regime,
we can leverage asymptotic random matrix theory results and in particular a
recently developed analytical tool referred to as ``deterministic
equivalent approximation'' (see \cite{couillet2011random} and
references therein), which is able to handle the rather complicated
class of structured random matrices arising in the JSDM context.
Thanks to this analytical tool, all numerical results presented here
are obtained in a semi-analytic way, by solving iteratively a provably
convergent system of fixed-point equations, without the need of
heavy Monte Carlo simulation.
For completeness, we provide the equations for the analysis of the basic JSDM schemes
without including channel estimation errors in Section \ref{sec:PERF}, and in Appendix \ref{sec:determ-equiv-nonideal-csi} the
corresponding general case including downlink estimation and noisy CSIT.

\emph{Notation :} We use boldface capital letters ($\Xm$) for matrices, boldface small letters for
vectors ($\xv$), and small letters ($x$) for scalars. $\Xm^\transp$ and $\Xm^\herm$  denote the transpose and the Hermitian
transpose of $\Xm$, $||\xv||$ denotes the vector 2-norm of $\xv$, $\trace(\Xm)$
and $|\Xm|$ denote the trace and the determinant of the square matrix $\Xm$.
The identity matrix is denoted by $\Id$ (when the dimension is clear from the context)
or by $\Id_n$ (when pointing out its dimension $n \times n$ improves clarity of exposition).
$\Xm \otimes \Ym$ denotes the Kronecker product of two matrices $\Xm,\Ym$.
$\|\Xm\|_F^2 = \trace(\Xm^\herm\Xm)$ indicates the squared Frobenius norm of a matrix $\Xm$.
We also use ${\rm Span}(\Xm)$ to denote the linear subspace generated by columns
of $\Xm$ and ${\rm Span}^\perp(\Xm)$ for the orthogonal complement
of ${\rm Span}(\Xm)$. $\xv \sim \Cc\Nc(\muv, \Sigmam)$ indicates that $\xv$ is a complex circularly-symmetric Gaussian vector
with mean $\muv$ and covariance matrix $\Sigmam$.

\section{Channel Model}  \label{sec:channel-model}

We consider the downlink of a single-cell FDD system
with a BS with $M$ antennas serving $K$ UTs equipped with a single
antenna each.  For simplicity, we consider a narrowband (frequency-flat) channel model.
By using the Karhunen-Loeve representation, a generic  downlink channel
vector from the $M$ BS antennas to a UT can be expressed as
\begin{align} \label{eq:SM-2}
   \ph={\pU}{\boldsymbol \Lambda}^{\frac{1}{2}} \wv,
\end{align}
where $\wv \in \mathbb{C}^{r \times 1} \sim\mathcal{CN} (\p0, \pI)$,
${\boldsymbol \Lambda}$ is an $r\times r$ diagonal matrix whose
elements are the non-zero eigenvalues of $\pR$, and $\pU \in
\mathbb{C}^{M\times r}$ is the tall unitary matrix of the eigenvectors of $\pR$ corresponding to the non-zero eigenvalues.
We consider the one-ring model of Fig.~\ref{one-ring-model}, where a UT located at
azimuth angle $\theta$ and distance ${\sf s}$ is surrounded by a
ring of scatterers of radius  ${\sf r}$ such that the AS is $\Delta \approx \arctan({\sf r}/{\sf s})$.
Assuming a uniform distribution\footnote{The uniform distribution is assumed here only for analytical convenience. It is
easy to show that similar performances and asymptotic behaviors are achieved by any AoA distribution (measurable non-negative
function integrating to 1) with limited support in $[\theta - \Delta, \theta + \Delta]$.}
of the received power from planar waves impinging on the BS antennas, the correlation
between the channel coefficients of antennas $1 \leq m, p \leq M$ is given by (see \cite{Shi00} and references therein)
\begin{align} \label{eq:SM-4}
   [\pR]_{m,p} = \frac{1}{2\Delta}  \int_{-\Delta}^{\Delta} e^{ j \kv^\transp(\alpha + \theta) (\uv_m - \uv_p) } d\alpha,
\end{align}
where $\kv(\alpha) =  - \frac{2\pi}{\lambda} (\cos(\alpha),
\sin(\alpha))^\transp$ is the wave vector for a planar wave
impinging with AoA $\alpha$, $\lambda$ is the carrier wavelength,
and $\uv_m, \uv_p \in \RR^2$ are the vectors indicating the position
of BS antennas $m,p$ in the two-dimensional coordinate system (see Fig.~\ref{one-ring-model}).

\begin{figure}[ht]
\centerline{\includegraphics[width=8cm]{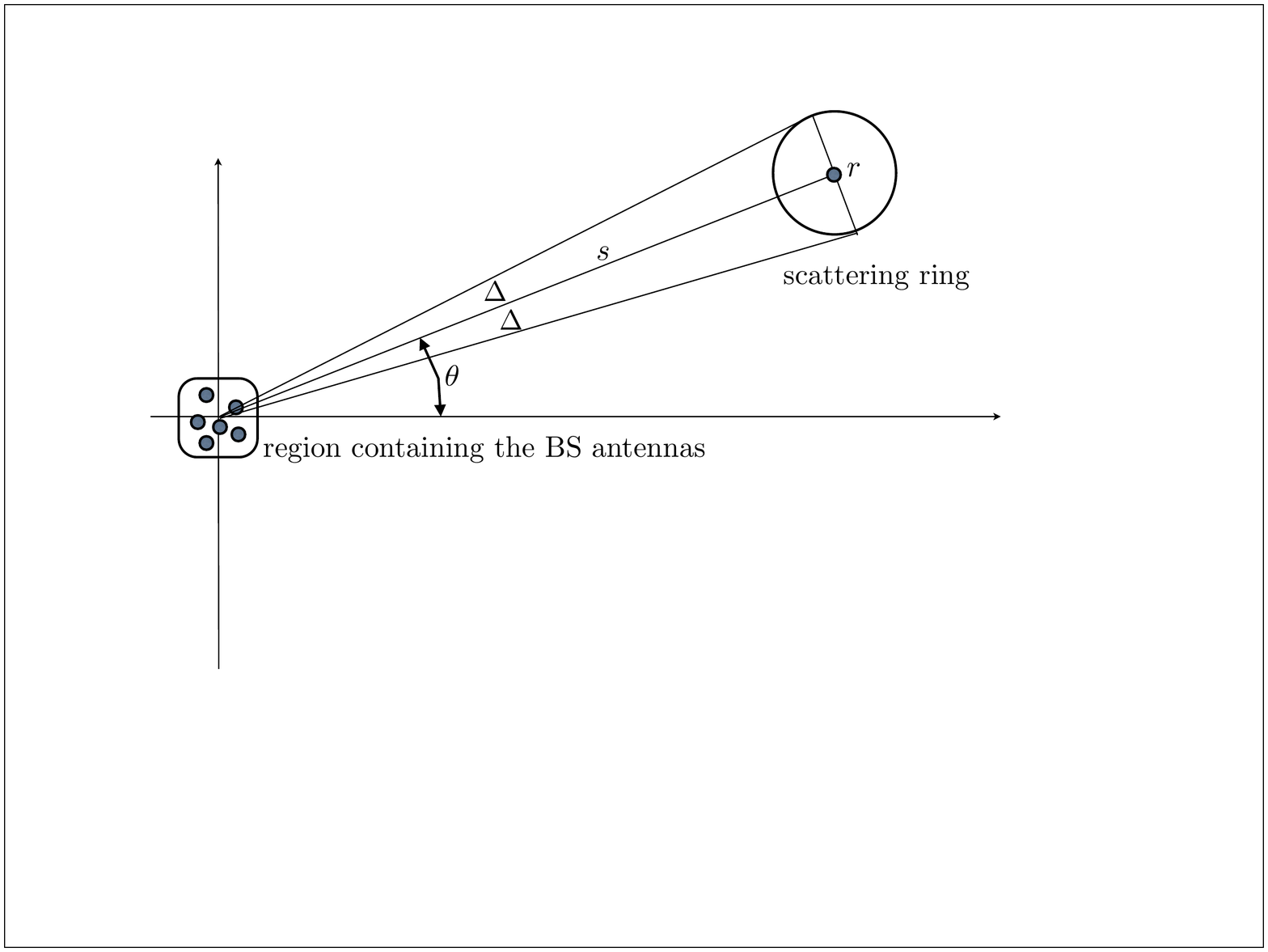}} \caption{A
UT at AoA $\theta$ with a scattering ring of radius ${\sf r}$
generating a two-sided AS $\Delta$ with respect to the BS at the
origin.} \label{one-ring-model}
\end{figure}

Let $\underline{\pH}$ denote the $M\times K$ system channel matrix
given by stacking the $K$ users channel vectors by columns. The
signal vector received by the UTs is given by
\begin{align} \label{eq:SM-3}
   \py=\underline{\pH}^\herm\Vm\pd +\pz =\underline{\pH}^\herm\px +\pz
\end{align}
where $\Vm$ is the $M\times S$ precoding matrix with $S$ the rank of
the input covariance $\boldsymbol{\Sigma}=\mathbb{E}[\Vm
\pd\pd^\herm\Vm^\herm]$ (i,e., the number of independent
data streams sent to the users), $\pd$ is the $S$-dimensional
transmitted data symbol vector, and $\pz  \sim\mathcal{CN} (\p0,
\pI)$ denotes the Gaussian noise at the UT receivers. The transmit
signal vector is given by $\px=\Vm \pd$.

\section{Joint Spatial Division and Multiplexing} \label{sec:jsdm}

JSDM exploits the fact that, after appropriate partitioning of the
UTs such that users in the same group are nearly co-located and
different groups are sufficiently well separated in the AoA domain,
the structure of the channel covariance matrices can be leveraged in
order to reduce the dimensionality of the effective channels and
therefore achieve large multiplexing gains with reduced dimension
channel training and CSIT feedback.

Suppose that $K$ UTs are selected to form $G$ groups based on the similarity of their channel covariance matrices.
We let $K_g$ denote the number of UTs in group $g$, such that $K = \sum_{g=1}^G K_g$, and define
the index $g_k = \sum_{g'=1}^{g-1} K_{g'} + k$, for $k = 1,\ldots,K_g$, to denote UT $k$ in group $g$.
Similarly, we let $S_g$ denote the number of independent data streams sent to users in group $g$, such that
$S = \sum_{g=1}^G S_g$.
We assume for simplicity that all UTs in the same group $g$ have identical covariance matrix
$\Rm_g = \Um_g \Lambda_g \Um_g^\herm$, with rank $r_g$ and $r^\star_g \leq r_g$
dominant  eigenvalues. In practice, this condition is not verified exactly, but we can select groups such that
this condition is closely approximated. Also, the notion of ``dominant eigenvalues'' is intentionally left fuzzy,
since $r^\star_g$ is a design parameter that depends on how much signal power outside the subspace
spanned by the corresponding eigenvectors can be tolerated.
For future reference, we denote by $\Um_g^\star$ the $M \times r^\star_g$ matrix collecting
the dominant eigenvectors, and let $\Um_g = [\Um_g^\star , \Um_g']$, with $\Um_g'$ of dimension
$M \times (r_g - r^\star_g)$, containing the eigenvectors corresponding to the weakest eigenvalues.
Notice that, by construction, we have that $0 \leq S_g \leq \min\{K_g, r^\star_g\}$, since we cannot deliver more
independent symbol streams than the multiplexing gain $\min\{K_g, r^\star_g\}$ of each group $g$.

The channel vector of user $g_k$ is given by $\ph_{g_k} = {\pU}_{g}\Lambdam_{g}^{\frac{1}{2}} \wv_{g_k}$. We let $\pH_g =
\big[{\ph}_{g_1}, \cdots,{\ph}_{g_{K_g}}\big]$ and $\underline{\pH}=\big[{\pH}_{1}, \cdots,{\pH}_{G}\big]$ denote the
group $g$ channel matrix and the overall system channel matrix, respectively.
As anticipated in Section \ref{sec:intro}, JSDM is based on two-stage precoding. Namely, we let
$\Vm = \pB \pP$, where $\pB \in\mathbb{C}^{M \times b}$ is a {\em pre-beamforming} matrix,
$\Pm \in \CC^{b \times S}$ is a MU-MIMO precoding matrix, and where $b \geq S$ is an integer design parameter,
to be optimized. The pre-beamforming matrix $\Bm$ is a function of the channels second-order statistics, i.e., it depends on the set $\{ \pU_{g}, \Lambdam_{g}\}$,
or on some directional information extracted from the channel covariance matrices (AoA and AS of the different groups).
In any case, $\Bm$ is independent of the instantaneous realization of the channel matrix $\underline{\pH}$.
The MU-MIMO precoding matrix $\pP$ is allowed to depend on the instantaneous realization of the reduced dimensional
\emph{effective channel} $\underline{\textsf{\pH}} \triangleq  \pB^\herm \underline{\pH}$.
We let $b = \sum_{g=1}^G b_g$ such that $b_g \geq S_g$,
and let $\pB_g$ be the $M\times b_g$ pre-beamforming matrix of group $g$. The received signal (\ref{eq:SM-3}) can be rewritten as
\begin{align} \label{eq:SM-4b}
\py=\underline{\textsf{\pH}}^\herm\pP\pd +\pz
\end{align}
where
$$\underline{\textsf{\pH}}^\herm =\left [\begin{matrix} \pH_1^\herm\pB_1 & \pH_1^\herm\pB_2 & \cdots & \pH_1^\herm\pB_G \\ \pH_2^\herm\pB_1 & \pH_2^\herm\pB_2 & \cdots & \pH_2^\herm\pB_G \\  \vdots & \vdots & \ddots & \vdots \\\pH_G^\herm\pB_1 & \pH_G^\herm\pB_2 & \cdots & \pH_G^\herm\pB_G \end{matrix} \right ], $$
and where $\pH_{g}^\herm \pB_{g'}$ is the $K_{g} \times b_{g'}$ effective channel matrix connecting the users of group $g$ with the effective channel
inputs of group $g'$.

If the estimation and feedback of the effective channel $\underline{\textsf{\pH}}$ can be afforded,
the precoding matrix $\Pm$ is determined as a function of the whole $\underline{\textsf{\pH}}$.
We refer to this approach as {\em Joint Group Processing} (JGP). However, this
may still be too costly in terms of transmission resource. Hence, a lower complexity and generally  more attractive approach
consists of estimating and feeding back only the $G$ diagonal  blocks $\textsf{\pH}_g = \Bm_g^\herm \Hm_g$, of dimension $b_g \times K_g$,
and treating each group separately. We refer to this approach as {\em Per-Group Processing} (PGP). In this case, the precoding matrix takes on the block-diagonal form
$\pP=\mathrm{diag}(\pP_1,\cdots,\pP_G)$, where $\pP_g \in \mathbb{C}^{b_g \times S_g}$, resulting in the vector broadcast plus
interference Gaussian channel
\begin{align} \label{eq:SM-5-approx}
   \py_g = \textsf{\pH}_g^\herm  \pP_g \pd_g +  \sum_{g' \neq g}  {\pH}_g{}^\herm\pB_{g'} \pP_{g'}\pd_{g'}   +  \pz_g,  \;\;\; \mbox{for} \;\;  g=1,\ldots, G.
\end{align}
With PGP, it is interesting to choose the groups and design the pre-beamforming matrix  such that, with high probability,
\begin{align} \label{eq:SM-10}
   {\pH}_g{}^\herm\pB_{g'}\approx \p0, \ \text{for all} \ g'\neq g.
\end{align}
Exact Block Diagonalization (BD) is possible if ${\rm Span}(\Um_g) \not\subseteq {\rm Span}(\{\Um_{g'} : g' \neq g\})$
for all $g = 1,\ldots, G$. In particular, multiplexing gain $S_g$ (i.e., the number of
interference-free data streams) can be achieved for group $g$ if and only if
\begin{equation}
{\rm dim} \left ( {\rm Span}(\Um_g) \cap {\rm Span}^\perp(\{\Um_{g'} : g' \neq g\}) \right ) \geq S_g.
\end{equation}
Approximate BD can be achieved by selecting $r^\star_g$ dominant eigenmodes for each group $g$, such that
${\rm Span}(\Um^\star_g) \not\subseteq {\rm Span}(\{\Um^\star_{g'} : g' \neq g\})$ for all $g = 1,\ldots, G$.
In this case, in order to deliver $S_g$ streams to group $g$ we require
\begin{equation}
{\rm dim} \left ( {\rm Span}(\Um^\star_g) \cap {\rm Span}^\perp(\{\Um^\star_{g'} : g' \neq g\}) \right ) \geq S_g.
\end{equation}
However, these streams will be affected by some residual interference due to the weak eigenmodes  not included in
the matrices $\{\Um^\star_g: g = 1,\ldots, G\}$.

\begin{rem}
Notice that the PGP pre-beamforming creates \emph{virtual sectors}, i.e., a generalization of spatial sectorization commonly
used in current cellular technology. Each group corresponds to a virtual sector, and it is independently precoded
under a total sum power constraint, possibly incurring some residual {\em inter-group interference} in the case of
approximate  BD.
\hfill $\lozenge$
\end{rem}

\begin{rem}
It is reasonable to assume that the channel covariance matrix $\Rm_g$ for
each user group changes slowly with respect to the
coherence time of the instantaneous channel matrix $\Hm_g$.
The dominant eigenmodes $\Um_g^\star$ can be tracked
for each UT using a suitable subspace estimation and tracking algorithm
\cite{eriksson1994line}, by exploiting the downlink training phase,
and they can be fed back to the BS at a low rate.
Furthermore, for particularly designed BS antenna configurations,
these estimates can be refined at the BS by exploiting the uplink,
even though in an FDD system this takes place at a different carrier frequency
(see for example \cite{Hoc01}).
The estimation and tracking of the (slowly time-varying) channel statistics
is a topic of great interest in this context, but it is out of the scope of this paper.
Here, we assume that the channel covariance matrix for each user is known.
\hfill $\lozenge$
\end{rem}


\section{JSDM with Eigen-Beamforming}
\label{sec:MU}

\subsection{Achieving capacity with reduced CSIT}

Let $r = \sum_{g=1}^G r_g$ and suppose that the channel covariances of the $G$ groups are such that
$\underline{\pU}=[\pU_1,\cdots,\pU_G]$ is $M \times r$ \emph{tall unitary}
(i.e., $r \leq M$ and $\underline{\pU}^\herm \underline{\pU} = \Id_r$).
In order to obtain {\em exact} BD it is sufficient to let $b_g = r_g$ and $\Bm_g = \pU_g$. This choice for the pre-beamforming matrix
is referred to in the following as {\em eigen-beamforming}. In this case, the decoupled MU-MIMO channel
(\ref{eq:SM-5-approx}) takes on the form
\begin{align} \label{eq:SM-5}
   \py_g & = \textsf{\pH}_g{}^\herm \pP_g\pd_g +\pz_g = \Wm_g^\herm \Lambda_g^{1/2} \pP_g\pd_g +\pz_g, \;\;\; \;\;\; \mbox{for} \;\;  g=1,\ldots, G,
\end{align}
where $\Wm_g$ is a $r_g \times K_g$ i.i.d. matrix with elements $\sim
\Cc\Nc(0,1)$. In this case we have:

\begin{thm} \label{simple-opt}
For $\underline{\pU}$ tall unitary, JSDM with PGP achieves the same sum capacity of the corresponding MU-MIMO
downlink channel (\ref{eq:SM-3}) with full CSIT.
\end{thm}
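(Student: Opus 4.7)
The plan is to compute the sum capacity of the full-CSIT downlink, exploit the tall-unitary structure of $\underline{\pU}$ to show it decouples into $G$ independent per-group sum capacities, and then verify that JSDM with PGP and eigen-beamforming $\Bm_g=\pU_g$ attains exactly that decoupled expression. The single algebraic fact doing the heavy lifting is the factorization
\begin{equation*}
\underline{\pH} \;=\; \underline{\pU}\,\bar{\Lambdam}\,\underline{\Wm},
\end{equation*}
where $\bar{\Lambdam}=\mathrm{blkdiag}(\Lambdam_1^{1/2},\ldots,\Lambdam_G^{1/2})$ and $\underline{\Wm}=\mathrm{blkdiag}(\Wm_1,\ldots,\Wm_G)$ with the $\Wm_g$'s defined as in (\ref{eq:SM-5}).

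First I would invoke BC--MAC duality for the Gaussian vector broadcast channel (\ref{eq:SM-3}) with single-antenna receivers to write the sum capacity under the total power constraint $P$ as
\begin{equation*}
C_{\mathrm{sum}}(P)=\max_{\Pm\succeq 0\,\mathrm{diag},\,\trace(\Pm)\leq P}\log\det\bigl(\Id_M+\underline{\pH}\,\Pm\,\underline{\pH}^\herm\bigr),
\end{equation*}
where $\Pm=\diag(p_{1_1},\ldots,p_{G_{K_G}})$ collects the dual-MAC user powers. Substituting the factorization and applying Sylvester's identity $\log\det(\Id+\mathbf{AB})=\log\det(\Id+\mathbf{BA})$ together with $\underline{\pU}^\herm\underline{\pU}=\Id_r$ reduces the objective to $\log\det\bigl(\Id_r+\bar{\Lambdam}\,\underline{\Wm}\,\Pm\,\underline{\Wm}^\herm\bar{\Lambdam}^\herm\bigr)$. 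Since $\underline{\Wm}$ is block diagonal and $\Pm$ is diagonal, the argument is block diagonal as well, and the log-determinant splits as
\begin{equation*}
\sum_{g=1}^{G}\log\det\bigl(\Id_{r_g}+\Lambdam_g^{1/2}\Wm_g\,\Pm_g\,\Wm_g^\herm\Lambdam_g^{1/2}\bigr),
\end{equation*}
where $\Pm_g=\diag(p_{g_1},\ldots,p_{g_{K_g}})$. Maximizing jointly over $\Pm$ with $\sum_g\trace(\Pm_g)\le P$ therefore yields $C_{\mathrm{sum}}(P)=\max_{\sum_g P_g\le P}\sum_{g=1}^G C_{\mathrm{sum},g}(P_g)$, where $C_{\mathrm{sum},g}(P_g)$ is the sum capacity of a standalone MU-MIMO BC from $r_g$ transmit antennas to the $K_g$ users of group $g$ with channel matrix $\Lambdam_g^{1/2}\Wm_g$.

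For the achievability side with PGP, I would set $\Bm_g=\pU_g$ so that $\Bm=\underline{\pU}$. The tall-unitary hypothesis gives $\pU_{g'}^\herm\pU_g=\delta_{g,g'}\Id_{r_g}$, so the effective channel $\underline{\textsf{\pH}}=\Bm^\herm\underline{\pH}$ is exactly block diagonal with blocks $\textsf{\pH}_g=\Lambdam_g^{1/2}\Wm_g$; in particular (\ref{eq:SM-10}) holds with equality and the per-group model (\ref{eq:SM-5}) is interference-free. The transmit power satisfies $\EE[\|\Vm\pd\|^2]=\EE[\|\underline{\pU}\Pm\pd\|^2]=\sum_g\trace(\Pm_g\EE[\pd_g\pd_g^\herm]\Pm_g^\herm)$ again thanks to $\underline{\pU}^\herm\underline{\pU}=\Id_r$, so a sum power constraint on the physical transmitter is identical to a sum power constraint on the group-wise effective precoders. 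Running the optimal DPC scheme of group $g$ on its effective channel $\Lambdam_g^{1/2}\Wm_g$ under an allocated power $P_g$, with $\{P_g\}$ optimized so that $\sum_g P_g\le P$, then achieves $\sum_g C_{\mathrm{sum},g}(P_g)$, matching the full-CSIT sum capacity derived above.

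The only subtlety worth checking carefully, and the place I would be most careful, is that nothing is lost by restricting the full-CSIT transmit signal to $\mathrm{Span}(\underline{\pU})$: any component of $\px$ in $\mathrm{Span}^\perp(\underline{\pU})$ is annihilated by $\underline{\pH}^\herm$ and merely consumes power, so without loss of optimality the full-CSIT optimum uses $\px\in\mathrm{Span}(\underline{\pU})$, which is exactly the subspace explored by JSDM. This observation is also what makes the power-constraint bookkeeping in the achievability argument match the converse, closing the proof.
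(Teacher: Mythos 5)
Your proof is correct, and it arrives at the paper's key intermediate result --- the factorization of the full-CSIT dual-MAC objective into a sum of per-group log-determinants, i.e.\ (\ref{sum-full-csi1}) --- by a genuinely different algebraic route. The paper first establishes the $M$-dimensional determinant identity (\ref{determ-identity}), $\left|\Id_M+\sum_g \Um_g\Am_g\Um_g^\herm\right|=\prod_g\left|\Id_M+\Um_g\Am_g\Um_g^\herm\right|$, by induction over the groups using the matrix inversion lemma together with $\Um_{g'}^\herm\Um_g=\zerov$, and then reduces each factor to an $r_g\times r_g$ determinant. You instead factor $\underline{\pH}=\underline{\pU}\,\bar{\Lambdam}\,\underline{\Wm}$ and apply Sylvester's identity once to pull the entire objective down to the $r$-dimensional eigenspace, where $\underline{\pU}^\herm\underline{\pU}=\Id_r$ makes the argument exactly block diagonal and the factorization is immediate; this collapses the paper's two steps into one and avoids the matrix inversion lemma entirely. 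What the paper's route buys is the identity (\ref{determ-identity}) in its $M$-dimensional form, which the authors reuse verbatim in the subsequent remark on the whole capacity region; what your route buys is brevity and a more transparent role for the tall-unitary hypothesis. You are also somewhat more careful than the paper on the achievability side, explicitly checking that $\Bm=\underline{\pU}$ preserves the sum power constraint and that restricting the transmit signal to ${\rm Span}(\underline{\pU})$ loses nothing --- points the paper leaves implicit when it ``immediately recognizes'' (\ref{sum-full-csi1}) as the dual-MAC capacity of the decoupled channels (\ref{eq:SM-5}). I see no gaps.
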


\begin{proof}
Let $\Cc^{\rm sum}(\underline{\Hm}; P)$ denote the sum capacity of
(\ref{eq:SM-3}) with sum power constraint $P$ and fixed channel
matrix $\underline{\Hm}$, perfectly known to transmitter and
receivers. By the MAC-BC duality \cite{vishwanath2003duality},  we
have
\begin{align} \label{sum-full-csi}
\Cc^{\rm sum}(\underline{\Hm}; P) & = \max_{\Sm_g \succeq 0: \sum_g \trace(\Sm_g) \leq P}
 \;\; \log \left | \Id_M + \sum_{g=1}^G \Um_g \Lambdam_g^{1/2} \Wm_g \Sm_g \Wm_g^\herm \Lambdam_g^{1/2}\Um_g^\herm \right | \nonumber \\
\end{align}
where $\Sm_g$ denotes the diagonal $K_g \times K_g$ input covariance matrix for group $g$ in the dual MAC channel.
For any fixed set $\{\Sm_g\}$ of feasible input covariance matrices,
define for notation simplicity $\Am_g = \Lambdam_g^{1/2} \Wm_g \Sm_g \Wm_g^\herm \Lambdam_g^{1/2}$.
Notice that $\Am_g$ has dimension $r_g \times r_g$ and is invertible
with probability 1 over the random channel realization.
The theorem is proved by showing the
the determinant identity
\begin{equation} \label{determ-identity}
\left | \Id_M + \sum_{g=1}^G \Um_g \Am_g \Um_g^\herm \right | = \prod_{g=1}^G \left | \Id_M +  \Um_g \Am_g \Um_g^\herm \right |.
\end{equation}
This can be proved by induction, noticing the following step: for any $1 \leq g' \leq G$,
\begin{eqnarray}
\left | \Id_M + \sum_{g=g'}^G \Um_g \Am_g \Um_g^\herm \right | &  = & \left | \Id_M + \Um_{g'} \Am_{g'} \Um_{g'}^\herm \right | \left | \Id_M + (\Id_M + \Um_{g'} \Am_{g'} \Um_{g'}^\herm)^{-1} \sum_{g=g'+1}^G \Um_g \Am_g \Um_g^\herm \right | \nonumber \\
& = & \left | \Id_M + \Um_{g'} \Am_{g'} \Um_{g'}^\herm \right |
\left | \Id_M + (\Id_M - \Um_{g'} ( \Am_{g'}^{-1} + \Id_{r_g})^{-1} \Um_{g'}^\herm) \sum_{g=g'+1}^G \Um_g \Am_g \Um_g^\herm \right |\label{bunga} \\
& = & \left | \Id_M + \Um_{g'} \Am_{g'} \Um_{g'}^\herm \right | \left | \Id_M + \sum_{g=g'+1}^G \Um_g \Am_g \Um_g^\herm \right |, \label{bunga1}
\end{eqnarray}
where (\ref{bunga}) follows form the matrix inversion lemma and (\ref{bunga1}) follows from the the fact that, by assumption,
$\Um_{g'}^\herm \Um_g = \zerov$ for all $g' \neq g$. Using (\ref{determ-identity}) in (\ref{sum-full-csi}) we
obtain
\begin{align} \label{sum-full-csi1}
\Cc^{\rm sum}(\underline{\Hm}; P) & = \max_{\Sm_g \succeq 0: \sum_g \trace(\Sm_g) \leq P} \;\; \sum_{g=1}^G \log \left | \Id_{r_g} + \Lambdam_g^{1/2} \Wm_g \Sm_g \Wm_g^\herm \Lambdam_g^{1/2} \right |,
\end{align}
which is immediately recognized to be the capacity of the dual MAC (with sum power constraint)
for the set of decoupled MU-MIMO downlink channels (\ref{eq:SM-5}).
\end{proof}

\begin{rem}
In a similar manner it is possible to show that under the
orthogonality condition of Theorem \ref{simple-opt}, JSDM achieves
the whole capacity region \cite{weingarten2006capacity}, and not
only the sum capacity. In order to see this, for any user subset
$\Kc \subseteq \{1, \ldots, K\}$ define $\Hm_g(\Kc)$ as the sub
matrix of $\Hm_g$ obtained by selecting the columns $g_k \in \Kc$,
and let $\Sm_g(\Kc)$ denote the submatrix of $\Sm_g$ obtained by
retaining the rows and columns corresponding to users $g_k \in \Kc$.
Then, the capacity region of the dual MAC of (\ref{eq:SM-3}) subject
to the sum power constraint can be written as
\begin{equation} \label{dual-MAC-region}
\Cc(\underline{\Hm}; P) = \bigcup_{\substack{\Sm_g \succeq 0 : \\
\sum_{g=1}^G \mathrm{Tr}(\Sm_g) \le P}} \left \{ \rv \in \RR_K^+ : \sum_{g_k \in \Kc} r_{g_k} \leq
\log \left | \Id_M + \sum_{g=1}^G \Hm_g(\Kc) \Sm_g(\Kc) \Hm^\herm_g(\Kc) \right | , \; \forall \; \Kc \subseteq \{1,\ldots, K\} \right \}.
\end{equation}
The determinant identity (\ref{determ-identity}) can be applied to the partial sum-rate
bounds for each user subset $\Kc$, noticing that the tall unitary condition of the singular vectors
is retained by the new system matrix $\underline{\Hm}(\Kc) = [\Hm_1(\Kc), \ldots, \Hm_G(\Kc)]$.
\hfill $\lozenge$
\end{rem}

\begin{rem} \label{scheduling-remark}
Theorem \ref{simple-opt} has an important practical implication: in a situation where a large number of UTs, each of which has its own
AoA and AS, must  be served by the downlink, a good scheduling strategy consists of the following.
First, partition the users into groups with (approximately) identical eigenspaces.
Then, partition the collection of groups into disjoint and mutually exclusive sets, such that the groups in each set satisfy
the tall unitary condition of Theorem \ref{simple-opt}, and such that the number of sets is minimal,
over all possible partitions. Finally, schedule the groups in each set to be served simultaneously,
on the same time-frequency slot, using JSDM, and use time-frequency sharing across the groups.
Notice that this does not mean that, in general, JSDM is optimal. In fact, in order to meet the tall unitary condition
we may be obliged to reduce the number $G$  of simultaneously served groups in each set.
As already noticed for the problem of clustering users into groups,  also the problem of finding optimal partitions of the user groups
under JSDM with PGP is far from trivial, and goes beyond the scope of this paper.
\hfill $\lozenge$
\end{rem}

When achieving the tall unitary condition is too restrictive in
terms of multiplexing gain, the pre-beamforming matrix $\Bm$ can be
chosen as a function of the whole $\underline{\Um}$ in order to
achieve exact or approximated BD. This approach is presented in the
next section.

\subsection{Block diagonalization}
\label{sec:BD}

Recall that $\Bm = \left[\Bm_1,\ldots,\Bm_G \right]$ is an $M \times
b$ matrix consisting of $G$ blocks of dimension $M \times b_g$, each
corresponding to a particular group $g$. For given target numbers of
streams per group $\{S_g\}$ and dimensions $\{b_g\}$ satisfying $S_g
\leq b_g \leq r_g$, our goal is to design the blocks $\Bm_g$  such
that BD is achieved, i.e., $\Um_{g'}^\herm \Bm_g = \zerov$ for all
$g' \neq g$ and ${\rm rank}(\Um_g^\herm \Bm_g) \geq S_g$. A
necessary condition for exact zero-forcing of the off-diagonal
blocks is ${\rm Span}(\Bm_g) \subseteq  {\rm Span}^\perp(\{\Um_{g'}
: g' \neq g\})$. When ${\rm Span}^\perp(\{\Um_{g'} : g' \neq g\})$
has dimension smaller than $S_g$, the rank condition on the diagonal
blocks cannot be satisfied. In this case, $S_g$ should be reduced
or, as an alternative, approximated BD based on selecting $r^\star_g
< r_g$ dominant eigenmodes for each group $g$ can be implemented.
This consists of replacing $\Um_g$ with $\Um_g^\star$ in the above
conditions. When ${\rm Span}(\{\Um_{g'} : g' \neq g\})$ has
dimension $M$, then exact BD cannot be achieved even for $S_g = 1$,
and therefore approximated BD should be considered in any case.
Without loss of generality, we formulate the design of $\{\Bm_g\}$
for approximated BD with some feasible choice of the parameters
$\{r^\star_g\}$, $\{b_g\}$ and $\{S_g\}$. It should be noticed that
these are design parameters that should be optimized for a given
system configuration, in order to maximize the overall spectral
efficiency. This optimization is far from trivial. For the time
being, we consider an arbitrary feasible choice and postpone the
discussion on the tradeoff that governs the design of these
parameters in Sections \ref{subsec:res-jsdm} (see Remark
\ref{choice-of-rstar-remark}) and \ref{sec:prime-tradeoff} (see
Remark \ref{choice-of-bprime-remark}).

Following the approach of \cite{spencer2004zero}, we  define
\begin{equation}
\Xim_g =
\left[\Um_1^\star,\ldots,\Um_{g-1}^\star,\Um_{g+1}^\star,\ldots,\Um_G^\star\right],
\end{equation}
of dimensions $M \times \sum_{g' \neq g} r_{g'}^\star$ and rank $\sum_{g' \neq g} r_{g'}^\star$,
and let $[\Em_g^{(1)}, \Em_g^{(0)}]$ denote a system of left eigenvectors of $\Xim_g$
(e.g., obtained by Singular Value Decomposition (SVD)), such that
$\Em_g^{(0)}$ is $M \times \left (M - \sum_{g' \neq g} r_{g'}^\star \right )$ and forms a unitary basis for the orthogonal complement of
${\rm Span}(\Xim_g)$, i.e., such that ${\rm Span}(\Em_g^{(0)}) = {\rm Span}^\perp(\{\Um^\star_{g'} : g' \neq g\})$.

We obtain $\Bm_g$ by concatenating the projection onto ${\rm
Span}(\Em_g^{(0)})$ with eigen-beamforming along the dominant
eigenmodes of the covariance matrix of the resulting projected
channels of group $g$, i.e., of the columns of $(\Em_g^{(0)})^\herm
\Hm_g$. Recalling the Karhunen-Loeve decomposition (\ref{eq:SM-2}),
we have that the covariance matrix of $\widehat{\hv}_{g_k} =
(\Em_g^{(0)})^\herm \Um_g \Lambdam_g^{1/2} \wv_{g_k}$ is given by
\begin{equation}  \label{svdsvd}
\widehat{\Rm}_{g} = (\Em_g^{(0)})^\herm \Um_g \Lambdam_g \Um_g^\herm  \Em_g^{(0)} = \Gm_g \Phim_g \Gm_g^\herm,
\end{equation}
where the expression on the right of (\ref{svdsvd}) is the SVD of $\widehat{\Rm}_{g}$.
Letting $\Gm_g = [\Gm_g^{(1)}, \Gm^{(0)}_g]$ where $\Gm^{(1)}_g$ contains the
dominant $b_g$ eigenmodes of $\widehat{\Rm}_{g}$, we eventually obtain
\begin{equation}
\label{eq:design-Bg}
\Bm_g = \Em_g^{(0)} \Gm_g^{(1)}.
\end{equation}
The pre-beamforming matrix $\Bm_g$ can be interpreted as being orthogonal to the dominant
$r^{*}_{g'}$ eigenmodes  of groups $g' \neq g$, and matched to the $b_g$ dominant eigenmodes
of the covariance matrix of the projected channels $(\Em_g^{(0)})^\herm \Hm_g$ of group $g$.
By construction, we have that $b_g$ is less or equal to the rank of $\widehat{\Rm}_{g}$, given by
$\min\left \{ r_g, M - \sum_{g' \neq g} r_{g'}^\star \right \}$. In particular,
if $r = \sum_g r_g \leq M$, we can choose $b_g = r^\star_g = r_g$ and obtain exact BD.

\section{Performance analysis with linear precoding}
\label{sec:PERF}

In this section we provide expressions for the performance analysis
of JSDM with JGP and PGP and linear precoding, using the techniques
of deterministic equivalents \cite{debbah2012}. For simplicity of
exposition, we consider a symmetric scenario with the same number
$K_g = K'$ of users per group, the same number $S_g = S'$ of streams
per group, and the same dimension $b_g = b'$ of the pre-beamforming
matrix per group. However, the analysis can be immediately extended
to the general case considered before. This technique can be applied
as long as the users to be served in each group are selected
independently of their instantaneous channel realization. Hence, we
assume that for each group a subset of $S'$ out of the possible $K'$
users is pre-selected and scheduled for transmission over the
current downlink time-frequency slot. This simplified scheduling
requires only the instantaneous CSIT feedback from the pre-scheduled
users~\footnote{Unlike channel-based opportunistic user selection,
\cite{viswanath2002opportunistic,yoo2006optimality,al2009much,sharif2005capacity},
that requires to collect CSIT from many users and then select a
subset of users with quasi-orthogonal channel vectors.} and it is in
line with the massive MIMO concept, where hardware augmentation at
the BS allows significant simplification in the system operations.

Under these assumptions, the transformed channel matrix
$\underline{\textsf{\pH}}$ has dimension $b \times S$, with blocks
${\textsf{\pH}}_g$ of dimension $b' \times S'$. Also for the sake of
simplicity and in line with massive MIMO system simplification (see
for example \cite{Marzetta-TWC10,Huh11}) we allocate to all users
the same fraction of the total transmit power $P$, such that the
data vector covariance matrix is given by $\EE[\dv \dv^\herm] =
\frac{P}{S} \Id_S$. In the following, we present the deterministic
equivalent fixed-point equations for determining the {\em
Signal-to-Interference plus Noise Ratio} (SINR) at the UTs receivers
for the case of JSDM with JGP and PGP with linear {\em regularized
zero forcing precoding}. Along the same lines,
Appendix \ref{sec:determ-equiv-nonideal-csi} presents  the case of
regularized and non-regularized linear zero forcing precoding for PGP in the case of
noisy CSIT obtained from downlink training (see Section \ref{sec:nonperfect-csi}).
It is well-known that a discrete-time complex additive noise plus
interference channel with SINR  equal to $\gamma$ has capacity at
least as large as $\log(1 + \gamma)$ bit/symbol
\cite{merhav1994information}. Hence, in order to obtain an
asymptotically convergent approximation of the  achievable spectral
efficiency (in bit/symbol) per served user,  we compute $\gamma$ via
the deterministic equivalent  method, and plug the result into the
$\log(1 + \gamma)$ rate formula.

\subsection{JSDM with joint group processing}
\label{sec:JGP}

For fixed pre-beamforming matrix $\Bm$ and JGP, the regularized zero forcing precoding matrix is given by
\begin{equation}
\label{rzfbf-jsdm} \Pm_{{\rm rzf}} = \zeta \Km \underline{\textsf{\pH}},
\end{equation}
where $\Km =  \left [ \underline{\textsf{\pH}} \underline{\textsf{\pH}}^\herm
 + b \alpha \Id_{b} \right ]^{-1}$, $\alpha$ is a regularization factor, and $\zeta$ is a normalization factor chosen
 to satisfy the power constraint and is given by
\begin{equation}
\zeta^2 = \frac{S}{\trace \left ( \Pm_{\rm rzf}^\herm \Bm^\herm
\Bm \Pm_{\rm rzf} \right )}.
\end{equation}
The covariance matrix of the transformed channel of group $g$ is given by
\begin{equation}
\label{ch-cov-jsdm} \tilde{\Rm}_g = \left[
\begin{matrix}
\Bm_1^\herm \Rm_g \Bm_1 & \Bm_1^\herm \Rm_g \Bm_2 & \cdots & \Bm_1^\herm \Rm_g \Bm_G \\
\Bm_2^\herm \Rm_g \Bm_1 & \Bm_2^\herm \Rm_g \Bm_2 & \cdots & \Bm_2^\herm \Rm_g \Bm_G \\
\vdots & \vdots & \ddots  & \vdots \\
\Bm_G^\herm \Rm_g \Bm_1 & \Bm_G^\herm \Rm_g \Bm_2 & \cdots &
\Bm_G^\herm \Rm_g \Bm_G
\end{matrix}
\right ].
\end{equation}
The SINR for user $g_k$ is given by
\begin{equation}
\label{sinr-jsdm-rzf-1} {\rm \gamma}_{g_k, {\rm jgp,rzf}} =
\frac{\frac{P}{S}\zeta^2 |\hv_{g_k}^\herm \Bm \Km \Bm^\herm
\hv_{g_k}|^2}{ \frac{P}{S} \sum_{j \neq g_k} \zeta^2 |\hv_{g_k}^\herm \Bm \Km
\Bm^\herm \hv_j |^2 + 1}
\end{equation}
where the subscript ``jgp'' stands for {\em joint group processing}.

Following the approach of \cite{debbah2012}, assuming that as $M \rightarrow \infty$ the other system dimensions
$r, S$ and $b$ also go to infinity linearly with $M$, we have
\begin{equation}
\gamma_{g_k,{\rm jgp,rzf}} - \gamma_{g_k,{\rm jgp,rzf}}^o
\stackrel{M \rightarrow \infty}{\longrightarrow} 0 \;\;\; \mbox{with probability 1},
\end{equation}
where, for all users $g_k$, $\gamma_{g_k,{\rm jgp,rzf}}^o$ is a deterministic quantity that can be computed for any finite $M$ as
\begin{equation}
\gamma_{g_k,{\rm jgp,rzf}}^o = \frac{\frac{P}{S} \zeta^2
(m_g^o)^2}{\zeta^2 \Upsilon_g^o + (1 + m_g^o)^2},
\end{equation}
where $\zeta^2 = \frac{P}{\Gamma^o}$ and the quantities
$m_g^o$, $\Upsilon_g^o$ and $\Gamma^o$ are obtained by solving the system of fixed-point equations
\begin{eqnarray}
\label{fixed-pt-1-rzfbf-jsdm} m_g^o &=& \frac{1}{b} \trace \left ( \tilde{\Rm}_g \Tm \right ) \\
\label{fixed-pt-2-rzfbf-jsdm} \Tm &=& \left( \frac{S'}{b} \sum_{g=1}^{G} \frac{\tilde{\Rm}_{g}}{1 + m_{g}^o} +
\alpha \Id_{b}\right)^{-1}\\
\Gamma^o &=& \frac{1}{b} \frac{P}{G} \sum_{g = 1}^{G} \frac{
n_{g}}{(1 +
m_{g}^o)^2}\\
\Upsilon_g^o &=& \frac{1}{b} \frac{P}{G}
\left[\sum_{g'=1, g' \neq g}^{G} \frac{ n_{g',g}}{(1 +
m_{g'}^o)^2} + \frac{S' - 1}{S'} \frac{ n_{g,g}}{(1 + m_g^o)^2} \right], \nonumber \\
& &
\end{eqnarray}
with $\nv =
[n_1,n_2,\ldots,n_{G}]^\transp$ and $\nv_g =
[n_{1,g},n_{2,g},\ldots,n_{G,g}]^\transp$ defined by
\begin{eqnarray}
\nv &=& (\Id_{G} - \Jm)^{-1} \vv\\
\nv_g &=& (\Id_{G} - \Jm)^{-1} \vv_g,
\end{eqnarray}
where $\Jm,\vv$ and $\vv_g$ are given as
\begin{eqnarray}
[\Jm]_{g,g'} &=& \frac{\frac{S'}{b} \trace\left ( \tilde{\Rm}_g
\Tm \tilde{\Rm}_{g'} \Tm \right )}{b (1 + m_{g'}^o)^2} \label{funny} \\
\vv &=& \frac{1}{b}\left[  \trace \left ( \tilde{\Rm}_1 \Tm \Bm^\herm \Bm \Tm \right ),\ldots,
\trace\left ( \tilde{\Rm}_{G} \Tm \Bm^\herm \Bm \Tm \right ) \right]^\transp \\
\vv_g &=& \frac{1}{b} \left[  \trace\left ( \tilde{\Rm}_1
\Tm \tilde{\Rm}_g \Tm\right ),\ldots, \trace\left (
\tilde{\Rm}_{G} \Tm \tilde{\Rm}_g \Tm \right )\right]^\transp
\end{eqnarray}

\subsection{JSDM with per-group processing}
\label{sec:PGP}

The channel covariance matrix for a user $g_k$ is given
by $\bar{\Rm}_{g} = \Bm_g^\herm \Rm_g \Bm_g$. Focusing only on the
users in group $g$, the regularized zero forcing precoding matrix is given by
\begin{equation}
\label{rzfbf-jsdm-sect} \Pm_{g,{\rm rzf}} = \bar{\zeta}_g
\bar{\Km}_g
{\textsf{\pH}}_g,
\end{equation}
where $\bar{\Km}_g = \left [ {\textsf{\pH}}_g{\textsf{\pH}}_g^\herm + b' \alpha \Id_{b'} \right ]^{-1}$, $\alpha$ is a regularization factor, and
$\bar{\zeta}_g$ is the power normalization factor given by
\begin{equation} \label{power-factor-pgp}
\bar{\zeta}_g^2 = \frac{S'}{\trace\left( \Pm_{g,{\rm rzf}}^\herm
\Bm_g^\herm \Bm_g \Pm_{g,{\rm rzf}}\right )}.
\end{equation}
When $\Bm_g$ is given by (\ref{eq:design-Bg}), then it is the product of two tall unitary matrices
so that $\Bm_g^\herm \Bm_g = \Id_{b'}$. However, we use (\ref{power-factor-pgp}) for the sake of generality.

The SINR of user $g_k$ given by
\begin{equation}
\label{sinr-jsdm-sect-rzf-1}
{\rm \gamma}_{g_k,{\rm pgp}} =
\frac{\frac{P}{S} \bar{\zeta}_g^2 |\hv_{g_k}^\herm \Bm_g \bar{\Km}_g
\Bm_g^\herm \hv_{g_k}|^2} {\frac{P}{S} \sum_{j \neq k}
\bar{\zeta}_g^2 |\hv_{g_k}^\herm \Bm_g \bar{\Km}_g \Bm_g^\herm
\hv_{g_j}|^2  + \frac{P}{S} \sum_{g' \neq g} \sum_j \bar{\zeta}_{g'}^2
|\hv_{g_k}^\herm \Bm_{g'} \bar{\Km}_{g'} \Bm_{g'}^\herm
\hv_{g'_j}|^2 + 1}
\end{equation}
where the subscript ``pgp'' stands for {\em per-group processing}.

Proceeding similarly as before and applying the method developed in \cite{debbah2012},
and assuming that as $M \rightarrow \infty$ the other system dimensions
$r, S$ and $b$ also go to infinity linearly with $M$, we have
\begin{equation}
\gamma_{g_k,{\rm pgp,rzf}} - \gamma_{g_k,{\rm pgp,rzf}}^o
\stackrel{M \rightarrow \infty}{\longrightarrow} 0 \;\;\; \mbox{with probability 1},
\end{equation}
where, for all users $g_k$, $\gamma_{g_k,{\rm pgp,rzf}}^o$ is a deterministic quantity that can be computed for any finite $M$ as
\begin{equation}
\gamma_{g_k,{\rm pgp,rzf}}^o = \frac{\frac{P}{S} \bar{\zeta}_g^2
(\bar{m}_{g}^o)^2}{\bar{\zeta}_g^2 \bar{\Upsilon}_{g,g}^o + (1 +
\sum_{g' \neq g} \bar{\zeta}_{g'}^2 \bar{\Upsilon}_{g,g'}^o) (1 +
\bar{m}_{g}^o)^2},
\end{equation}
where $\bar{\zeta}_g^2 = \frac{P/G}{\bar{\Gamma}_g^o}$ and the
quantities $\bar{m}_{g}^o$, $\bar{\Upsilon}_{g,g}^o$,
$\bar{\Upsilon}_{g,g'}^o$ and $\bar{\Gamma}_g^o$ are given by
\begin{eqnarray}
\label{fixed-pt-1-rzfbf-jsdm-sect} \bar{m}_{g}^o &=& \frac{1}{b'} \trace \left ( \bar{\Rm}_{g} \bar{\Tm}_g \right )\\
\label{fixed-pt-2-rzfbf-jsdm-sect} \bar{\Tm}_g &=& \left(
\frac{S'}{b'}  \frac{\bar{\Rm}_{g}}{1 + \bar{m}_{g}^o} +
\alpha \Id_{b'}\right)^{-1}\\
\bar{\Gamma}_g^o &=& \frac{1}{b'} \frac{P}{G} \frac{\bar{n}_{g}}{(1 +
\bar{m}_{g}^o)^2}\\
\bar{\Upsilon}_{g,g}^o &=& \frac{1}{b'} \frac{S' - 1}{S'}
\frac{P}{G} \frac{\bar{n}_{g,g}}{(1 + \bar{m}_{g}^o)^2} \\
\bar{\Upsilon}_{g,g'}^o &=& \frac{1}{b'} \frac{P}{G} \frac{
\bar{n}_{g',g}}{(1 + \bar{m}_{g'}^o)^2}\\
\bar{n}_{g} &=& \frac{\frac{1}{b'} \trace \left ( \bar{\Rm}_{g}
\bar{\Tm}_g \Bm_{g}^\herm\Bm_{g} \bar{\Tm}_g  \right )}{1 -
\frac{\frac{S'}{b'} \trace \left ( \bar{\Rm}_{g}
\bar{\Tm}_g \bar{\Rm}_{g} \bar{\Tm}_g  \right )}{b' (1 + \bar{m}_{g}^o)^2}}\\
\bar{n}_{g,g} &=& \frac{\frac{1}{b'} \trace \left ( \bar{\Rm}_{g}
\bar{\Tm}_g \bar{\Rm}_{g} \bar{\Tm}_g  \right )}{1 - \frac{\frac{S'}{b'}
\trace\left ( \bar{\Rm}_{g} \bar{\Tm}_g \bar{\Rm}_{g} \bar{\Tm}_g  \right )}{b' (1 +
\bar{m}_{g}^o)^2}}\\
\bar{n}_{g',g} &=& \frac{\frac{1}{b'} \trace \left ( \bar{\Rm}_{g'}
\bar{\Tm}_{g'} \Bm_{g'}^\herm\Rm_{g}\Bm_{g'} \bar{\Tm}_{g'}  \right )}{1 -
\frac{\frac{S'}{b'} \trace \left ( \bar{\Rm}_{g'} \bar{\Tm}_{g'}
\bar{\Rm}_{g'} \bar{\Tm}_{g'}  \right )}{b' (1 + \bar{m}_{g'}^o)^2}}
\end{eqnarray}

\subsection{Validation of the asymptotic analysis}
\label{subsec:res-jsdm}

In this section we present some numerical examples focusing on
the case when the tall unitary condition is not satisfied, and we discuss the choice
of  the effective rank parameter $r^\star$ in the approximated BD for PGP
(more in general, the parameters $\{r^\star_g\}$, for an asymmetric case).
We also compare the results obtained via the method of deterministic equivalents with finite-dimensional Monte Carlo
simulations, in order to give an idea on the method accuracy.~\footnote{Precise statements on the order of convergence
with respect to $M$ of the actual finite dimensional SINRs to their deterministic equivalents are given in \cite{debbah2012}.}

In the following examples, the BS is equipped with a uniform circular array
with $M = 100$ isotropic antenna elements equally spaced on a circle of radius $\lambda D$,
for $D = \frac{0.5}{\sqrt{(1 - \cos(2\pi/M))^2 + \sin(2\pi/M)^2}}$, resulting in the
minimum distance between antenna elements equal to $\frac{\lambda}{2}$.
Users form $G = 6$ symmetric groups,  with AS $\Delta = 15^o$ and azimuth AoA
$\theta_g = -\pi + \Delta + (g-1)\frac{2\pi}{G}$  for $g = 1,\ldots, G$.
The user channel correlation is obtained according to (\ref{eq:SM-4}).
For the system geometry defined above, the transmit covariance
matrix for each group has rank $r = 21$. However, half of the non-zero eigenvalues are extremely small, yielding an effective
rank $r^\star = 11$. Somehow arbitrarily, we fixed to serve $S' = 5$ data streams per group,
so that the total number of users being served is $S = S' G = 30$, and chose $b' = 10$.

%

\begin{figure}
\centering \subfigure[$r^\star = 6$]{
  \includegraphics[width=8cm]{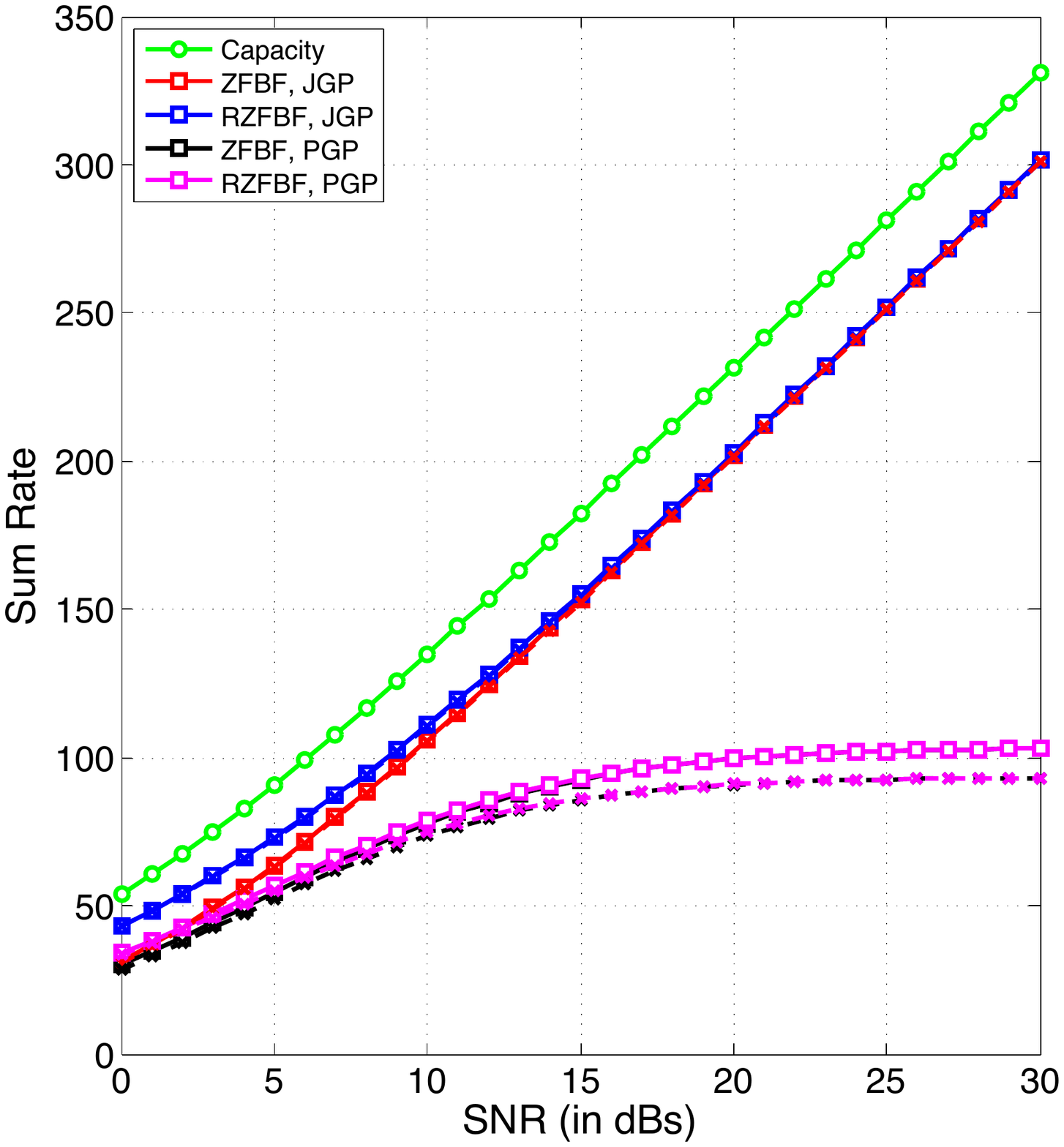}
  \label{fig:s'-5-r*-6}
  }
  \subfigure[$r^\star = 11$]{
  \includegraphics[width=8cm]{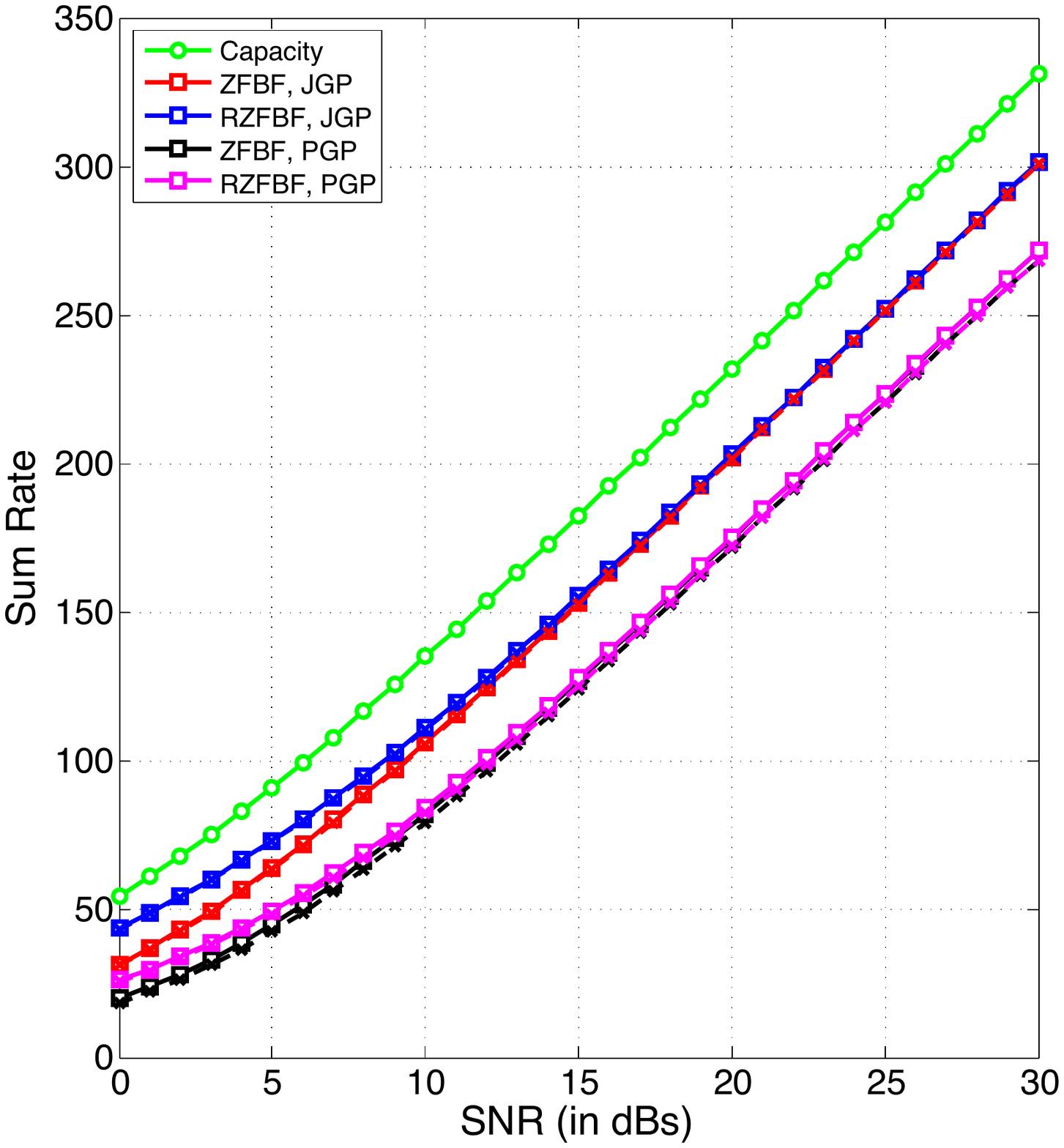}
  \label{fig:s'-5-r*-12}
  }
  \caption{Comparison of sum spectral efficiency (bit/s/Hz) vs. SNR (dB) for JSDM with their corresponding deterministic equivalents.
  ``JGP'' denotes JSDM with \emph{joint group processing} and ``PGP'' denotes JSDM with \emph{per-group processing}.}
  \label{fig:jsdm-results}
\end{figure}

Figs. \ref{fig:s'-5-r*-6} and \ref{fig:s'-5-r*-12} show the
performance of the JSDM schemes when the pre-beamforming matrix is
designed according to the approximate BD method described in Section
\ref{sec:BD}, choosing $r^\star = 6$ and $r^\star = 12$,
respectively. Given the noise unit variance normalization, we have
that ${\rm SNR} = P$. The solid ``squares'' are obtained through
simulations and the dotted ``x'' are obtained using the
corresponding deterministic equivalent approximations. The
regularization parameter is fixed to $\alpha = \frac{S}{bP}$ for
both JGP and PGP. The performance of JSDM with JGP in Figs.
\ref{fig:s'-5-r*-6} and \ref{fig:s'-5-r*-12} is identical, owing to
the fact that we use eigen-beamforming with $\Bm_g = \Um_g$,
independent of $r^\star$. For the sake of comparison, the sum
capacity of the MIMO BC channel with full CSIT (see (\ref{eq:SM-3}))
is also shown (solid ``circles'' in green), obtained by the
iterative waterfiling approach of \cite{yu2006sum}.

\begin{rem} \label{choice-of-rstar-remark}
By choosing $r^\star$ too small, such that significant eigenmodes are not taken into account
by the approximate BD pre-beamforming matrix, the resulting inter-group interference is
large and the performance of PGP is severely interference limited (e.g., Fig. \ref{fig:s'-5-r*-6}).
Instead, by choosing $r^\star$ large enough, in order to include all significant eigenmodes,
the performance of PGP does not show a noticeable  interference limited behavior over a wide range of SNR.
This is the case of Fig. \ref{fig:s'-5-r*-12}, where we
chose $r^\star = 12$ and the channel covariance matrix has rank $r = 21$, but only $11$ significant eigenvalues.
As a matter of fact, the PGP rate curves of Fig. \ref{fig:s'-5-r*-12} will eventually flatten, but this happens at
extremely large SNR, irrelevant for practical applications. This example shows that $r^\star$ should always be chosen in order
to include all strongest eigenmodes. However, making $r^\star = r$ is generally not a good choice
since many eigenmodes may be very close to zero (as in this example) and therefore including them in the count of $r^\star$ yields a dimensionality
bottleneck without any real benefit in terms of inter-group interference (recall that $r^\star G \leq M$, therefore if $r^\star$ is large we may have to decrease $G$,
i.e., serve less groups in parallel). We conclude that the choice of the effective rank $r^\star$ should be carefully optimized,
depending on  the specific channel covariance eigenvalue distribution.
\hfill $\lozenge$
\end{rem}


\section{Downlink training and noisy CSIT} \label{sec:nonperfect-csi}

In this section, we evaluate the impact of noisy CSIT by including the fact that the
effective channels are estimated by the UTs from the downlink training phase.
In the vast literature dedicated to CSIT feedback (see for example
\cite{Caire-Jindal-Kobayashi-Ravindran-TIT10} and references therein), methods that achieve the estimated channel
Mean-Square Error (MSE) decreases as $O(1/P^\beta)$ for some $\beta \geq 1$, even in the presence of
channel feedback noise and errors, are well-known.  In contrast, the MSE due to estimation from the downlink training phase
decreases at best as $O(1/P)$. In fact, this is given by the high-SNR behavior of the MMSE for a Gaussian signal (the channel vectors)
in Gaussian noise. If the CSIT feedback scheme is designed to achieve exponent $\beta > 1$ and the channel SNR
is sufficiently large, the feedback error is negligible with respect to the downlink estimation error \cite{Caire-Jindal-Kobayashi-Ravindran-TIT10}.
Hence, for simplicity, we consider the optimistic situation of ideal and delay-free CSIT feedback, and focus only on the effect of the downlink
channel estimation error and dimensionality penalty factor of the training phase (a similar approach is
followed in \cite{Huh-Tulino-Caire-TITsubmit}).

For brevity, we focus only on the case of
PGP.\footnote{Analogous results can be obtained for the case of JGP,
but these are practically less interesting since JGP requires
typically too large training and feedback overhead in FDD systems.}
From Section \ref{sec:PGP}, the channel covariance matrix for a user
$g_k$ is given by $\bar{\Rm}_{g} = \Bm_g^\herm \Rm_g \Bm_g$. In
order to estimate the effective channel vector $\textsf{\hv}_{g_k}$,
i.e., the column of the effective channel matrix $\textsf{\pH}_g$
corresponding to user $g_k$,  the BS sends unitary training
sequences of length $b'$, in parallel over the $b'$ virtual inputs
of the pre-beamforming of each group $g$. Hence, the training
phase with PGP spans $b'$ symbols. The UTs in each group make use of
linear MMSE estimation, which is the optimal estimator for
minimizing the MSE since the observation at each user and the
channel vector are conditionally jointly Gaussian given the training
sequences. The MMSE channel estimates are fed back to the BS and are
used to compute the linear precoders $\{\Pm_g\}$. Assuming that in each
coherence block of $T$ symbols the training phase makes use of $b'$
symbols, and the remaining $T - b'$ symbols are available for
downlink data transmission, it follows that the spectral efficiency
must be scaled by the dimensionality penalty factor $\max\{ 1 - b'/T , 0\}$.

We consider a scheme where a scaled unitary training matrix
$\Xm_{\rm tr}$ of dimension $b' \times b'$ is sent, simultaneously, to all groups in
the common downlink training phase. The corresponding received signal
at group $g$ receivers is given by
\begin{align} \label{eq:training1}
   \Ym_g = \textsf{\pH}_g^\herm  \Xm_{\rm tr} +  \sum_{g' \neq g}  {\pH}_g{}^\herm \pB_{g'}  \Xm_{\rm tr}   +  \Zm_g.
\end{align}
Multiplying from the right by $\Xm_{\rm tr}^\herm$ and using the fact that, by design, $\Xm_{\rm tr}\Xm_{\rm tr}^\herm = \rho_{\rm tr} \Id_{b'}$ where
$\rho_{\rm tr}$ is the power allocated to training, we obtain
\begin{align} \label{eq:training2}
   \Ym_g \Xm_{\rm tr}^\herm = \rho_{\rm tr} \textsf{\pH}_g^\herm +  \rho_{\rm tr}  \sum_{g' \neq g}  {\pH}_g{}^\herm \pB_{g'}    +  \Zm_g \Xm_{\rm tr}^\herm.
\end{align}
Extracting the $g_k$-th row, dividing by $\sqrt{\rho_{\rm tr}}$, using the fact that $\Zm_g \Xm_{\rm tr}^\herm$ has i.i.d. entries $\sim \Cc\Nc(0,\rho_{\rm tr})$
and taking Hermitian transpose of everything, we obtain the noisy observation for estimating  the $g_k$-th effective channel vector in the form
\begin{align} \label{eq:training3}
\widetilde{\textsf{\hv}}_{g_k}   = \sqrt{\rho_{\rm tr}} \textsf{\hv}_{g_k} + \sqrt{\rho_{\rm tr}} \left ( \sum_{g' \neq g} \Bm_{g'}^\herm \right ) \hv_{g_k}
+ \widetilde{\zv}_{g_k},
\end{align}
where $\widetilde{\zv}_{g_k} \sim \Cc\Nc(\zerov, \Id_{b'})$.
The MMSE estimator for $\textsf{\hv}_{g_k}$ based on  (\ref{eq:training3}) is given by
\begin{align}
\widehat{\textsf{\hv}}_{g_k} & = \EE\left [\textsf{\hv}_{g_k} \widetilde{\textsf{\hv}}_{g_k}^\herm \right ] \EE \left [ \widetilde{\textsf{\hv}}_{g_k} \widetilde{\textsf{\hv}}_{g_k}^\herm \right ]^{-1} \widetilde{\textsf{\hv}}_{g_k} \nonumber \\
& = \sqrt{\rho_{\rm tr}} \left [ \Bm_g^\herm \Rm_g \sum_{g'=1}^G \Bm_{g'} \right ]
\left [ \rho_{\rm tr} \sum_{g',g''=1}^G \Bm_{g'}^\herm \Rm_g \Bm_{g''}  + \Id_{b'} \right ]^{-1} \widetilde{\textsf{\hv}}_{g_k} \nonumber \\
& = \frac{1}{\sqrt{\rho_{\rm tr}}} \left ( \Mm_g \tilde{\Rm}_g \Om^\transp \right ) \left [ \Om \tilde{\Rm}_g \Om^\transp + \frac{1}{\rho_{\rm tr}} \Id_{b'} \right ]^{-1} \widetilde{\textsf{\hv}}_{g_k} \label{messy-mmse-estimator}
\end{align}
where we used the fact that $\textsf{\hv}_{g_k} = \Bm_g^\herm \hv_{g_k}$, where $\tilde{\Rm}_g$ is defined in (\ref{ch-cov-jsdm}) and we introduced
the $b' \times b$ block matrices
\begin{eqnarray*}
\Mm_g & = & [\zerov, \ldots, \zerov, \underbrace{\Id_{b'}}_{{\rm block} \; g}, \zerov, \ldots, \zerov] \\
\Om     & = & [\Id_{b'}, \Id_{b'}, \ldots, \Id_{b'} ].
\end{eqnarray*}
Notice that in the case of perfect BD we have that $\Rm_g \Bm_{g'} = \zerov$ for $g' \neq g$. Therefore,
(\ref{eq:training3}) and (\ref{messy-mmse-estimator}) reduce to
\begin{align} \label{eq:training4}
\widetilde{\textsf{\hv}}_{g_k}   = \sqrt{\rho_{\rm tr}} \textsf{\hv}_{g_k} + \widetilde{\zv}_{g_k},
\end{align}
and
\begin{align} \label{eq:training5}
\widehat{\textsf{\hv}}_{g_k} & =  \frac{1}{\sqrt{\rho_{\rm tr}}} \bar{\Rm}_g \left [ \bar{\Rm}_g + \frac{1}{\rho_{\rm tr}} \Id_{b'} \right ]^{-1} \widetilde{\textsf{\hv}}_{g_k}
\end{align}
respectively, where we recall the definition $\bar{\Rm}_g = \Bm_g^\herm \Rm_g \Bm_g$.

For this channel estimation scheme, the deterministic equivalent approximation of the SINR terms for
RZFBF and ZFBF precoding can be obtained following \cite{debbah2012,hoydis2011massive},
the approach of which can be directly applied to our case, and using the well-known MMSE decomposition
\begin{equation}
\label{eq:MMSE-decomp} \textsf{\hv}_{g_k} = \widehat{\textsf{\hv}}_{g_k} + \widehat{\textsf{\ev}}_{g_k},
\end{equation}
with $\EE[\widehat{\textsf{\hv}}_{g_k} \widehat{\textsf{\hv}}_{g_k}^\herm] = \widehat{\bar{\Rm}}_{g}$ and MMSE covariance matrix $\EE[\widehat{\textsf{\ev}}_{g_k}
\widehat{\textsf{\ev}}_{g_k}^\herm] = \bar{\Rm}_{g} - \widehat{\bar{\Rm}}_{g}$.
For completeness, the fixed-point equations leading to the deterministic equivalent SINR approximation for PGP with noisy CSIT
are given in Appendix \ref{sec:determ-equiv-nonideal-csi}. Eventually, the achievable rate of user $g_k$ is approximated by
\begin{eqnarray}
R_{g_k,{\rm pgp,csit}} & = & \max\left \{ 1 - \frac{b'}{T}, 0 \right \} \times \log(1 + \widehat{\gamma}_{g_k,{\rm pgp,csit}}^o ),
\end{eqnarray}
where $\widehat{\gamma}_{g_k,{\rm pgp,csit}}^o$ indicates either
$\widehat{\gamma}_{g_k,{\rm pgp,rzf,csit}}^o$ or  $\widehat{\gamma}_{g_k,{\rm pgp,zf,csit}}^o$, as detailed in
Appendix \ref{sec:determ-equiv-nonideal-csi}.

\begin{rem}
Assuming that, as $M \rightarrow \infty$, the other system dimensions
$r^\star, S$ and $b$ also go to infinity linearly with $M$, the achievable rate approximation error converges
to zero almost surely as $M \rightarrow \infty$. However, the dimensionality factor $\max\{1 - b'/T,0\}$ is equal to
zero for $b' \geq T$. Hence, in order to obtain mathematically meaningful results we assume that
also the coherence block length $T$ grows linearly with $M$, and we define the factor $\tau = b'/T$ as the {\em dimensionality crowding factor}
of the channel. In practice, this means that the method is valid in the regime of $b'$ large, but still significantly smaller than $T$.
\hfill $\lozenge$
\end{rem}

\subsection{Results with downlink channel estimation} \label{sec:prime-tradeoff}

\begin{figure}
\centering \subfigure[$S' = 4$]{
  \includegraphics[width=8cm,bb = 50 100 600 600]{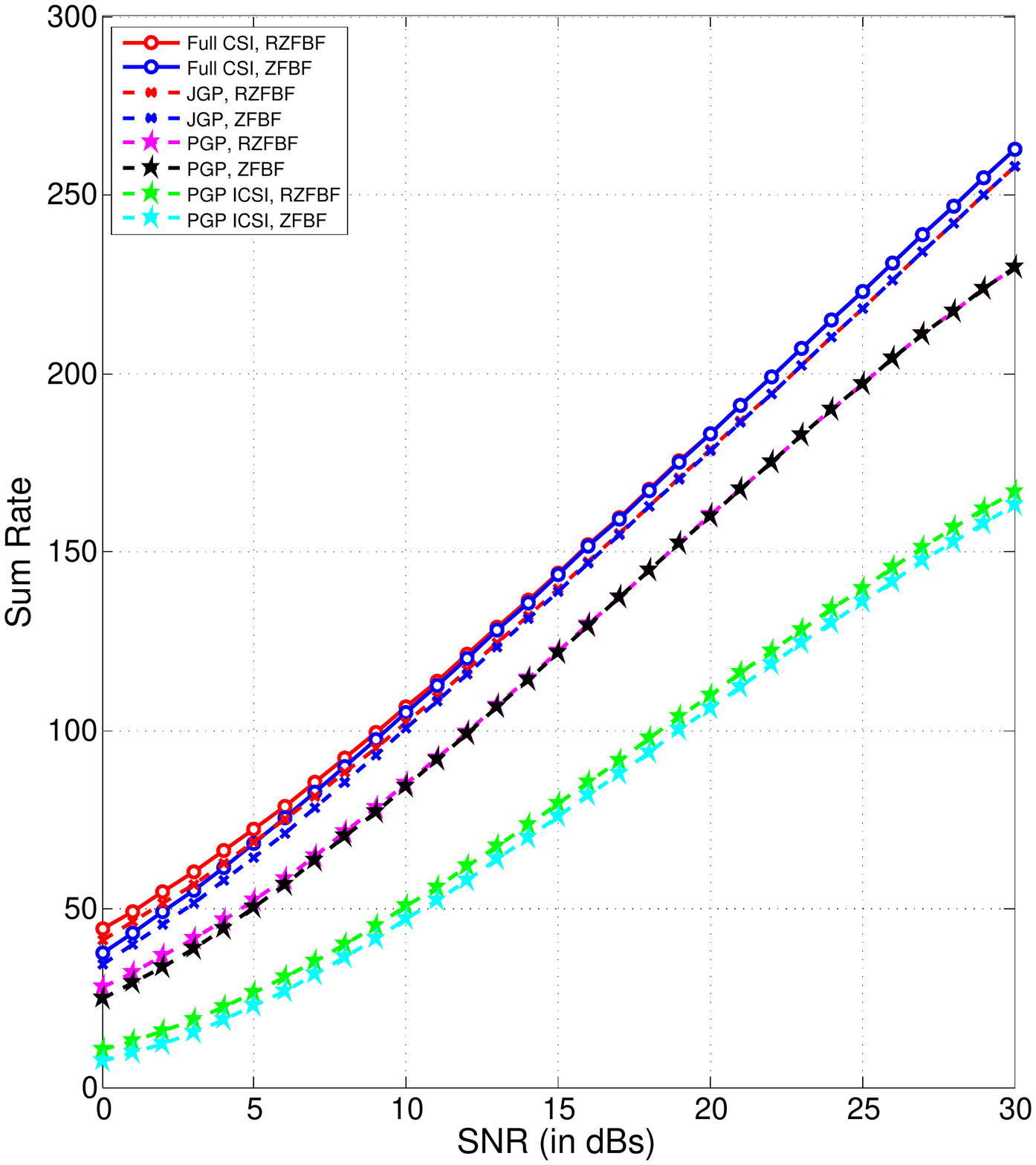}
  }
  \subfigure[$S' = 8$]{
  \includegraphics[width=8cm,bb = 50 100 600 600]{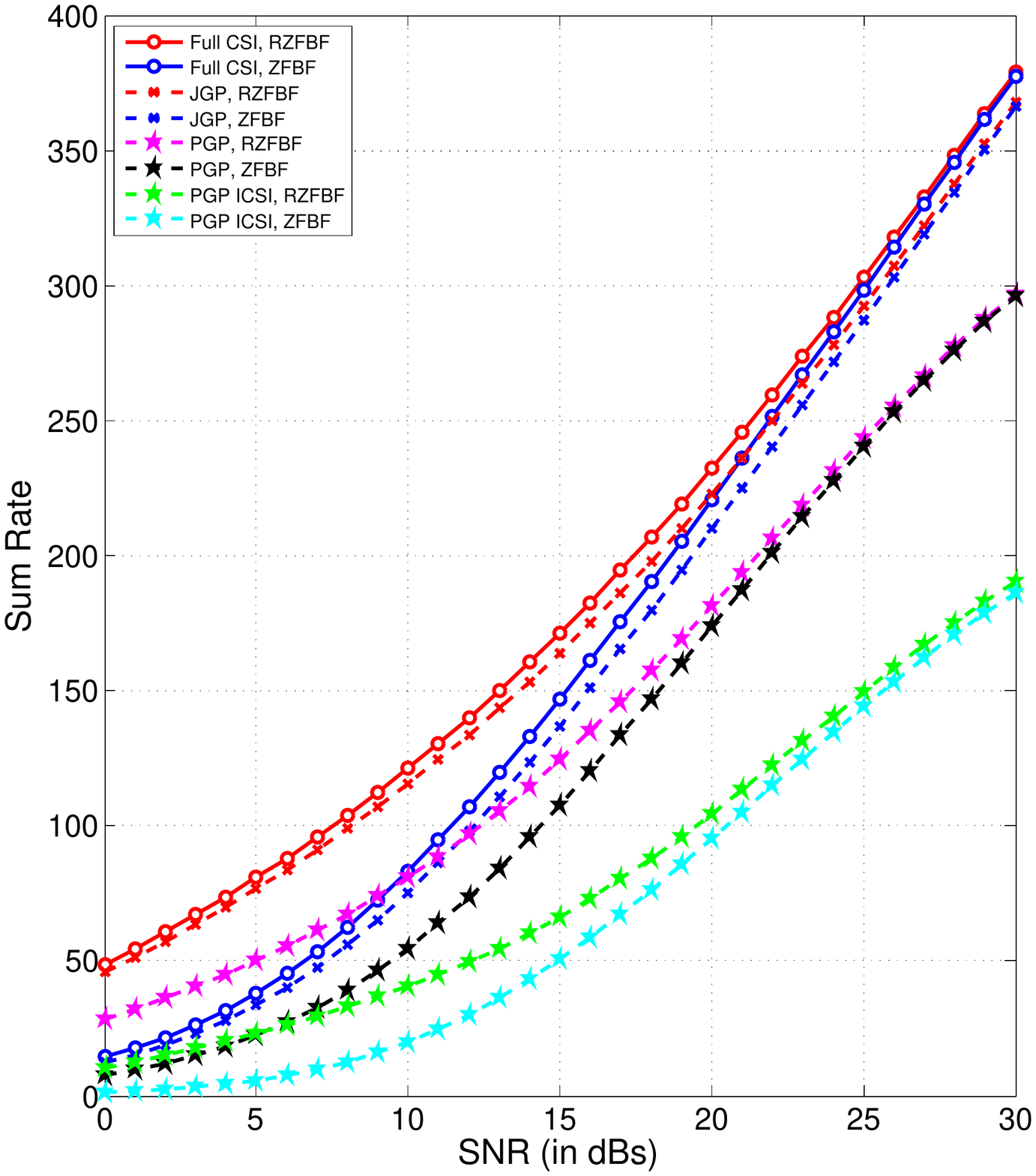}
  }
  \caption{Sum spectral efficiency (bit/s/Hz) vs. SNR (dB) for JSDM (computed via deterministic equivalents) with $r^\star=11$,
  for $S' = 4$ and $S' = 8$. The coherence block length is $T = 40$. The ``green'' and ``cyan'' curves denote the results for imperfect
  CSIT with optimized choice of $b'$. ``JGP'' denotes JSDM with \emph{joint group processing} and ``PGP'' denotes
  JSDM with \emph{per-group processing}.}
  \label{fig:icsi-results}
\end{figure}

We demonstrate the effect of noisy CSIT on the performance of RZFBF
and ZFBF in Fig. \ref{fig:icsi-results}, for the same antenna
configuration of Section \ref{subsec:res-jsdm} with $r^\star = 11$,
for $S' = 4$ and $S' = 8$ streams per group. For the sake of
comparison, the solid ``red'' (``blue'') curve denotes the sum
spectral efficiency achieved by RZFBF (ZFBF) with full noiseless
CSIT, i.e., by computing the precoding matrix in one step, directly
from the instantaneous channel matrix $\underline{\Hm}$. The dotted
lines represent the performance of JSDM for JGP with
eigen-beamforming and noiseless CSIT (i.e., perfect knowledge of the
effective channel $\underline{\textsf{\pH}}$). The ``magenta''
(``black'') curves denote the sum spectral efficiency for JSDM with
PGP and approximate BD, also in the case of noiseless CSIT. Finally,
the ``green'' (``cyan'') curves denote the achievable sum spectral
efficiency for JSDM with PGP and noisy CSIT, obtained by downlink training and
MMSE estimation as explained above. These curves are
obtained by optimizing the parameter $b'$, for given $S'$, $r^\star$
and SNR. Since a set of training sequences is sent simultaneously to all groups, the training power is given by $\rho_{\rm tr} =
\frac{P}{G}$, such that the total sum power constraint is preserved
also during the training phase.

\begin{rem} \label{choice-of-bprime-remark}
We examine now the optimization of the parameter $b'$ for fixed
target $S'$, in the case of downlink training and noisy CSIT. Having
fixed $r^\star$ as discussed in Remark \ref{choice-of-rstar-remark},
and assuming $0 \leq S' \leq b' \leq M - r^\star(G-1)$, for each
value of SNR and given JSDM precoding scheme there is an optimal
choice of $b'$. For example, Fig. \ref{fig:optimal-b-8} shows the
dependency of the sum spectral efficiency of JSDM with PGP with
respect to  $b'$ for $S' = 8$ and $\SNR = 10$ and 30 dB. We notice
that the sum spectral efficiency  including channel estimation is
not monotonically increasing with $b'$. In fact, letting $b'$ large
yields better conditioned effective channel matrices, but incurs a
larger dimensionality cost of the downlink  training phase. The
tension between these two issues yields a non-trivial choice for the
optimal value of $b'$ maximizing the system spectral efficiency.
Similar trends can be observed for different values of $S'$ and
different values of SNR. \hfill $\lozenge$
\end{rem}

\begin{figure}
\centering \subfigure[$S' = 8$, SNR = 10 dB]{
  \includegraphics[width=8cm,bb = 50 100 600 600]{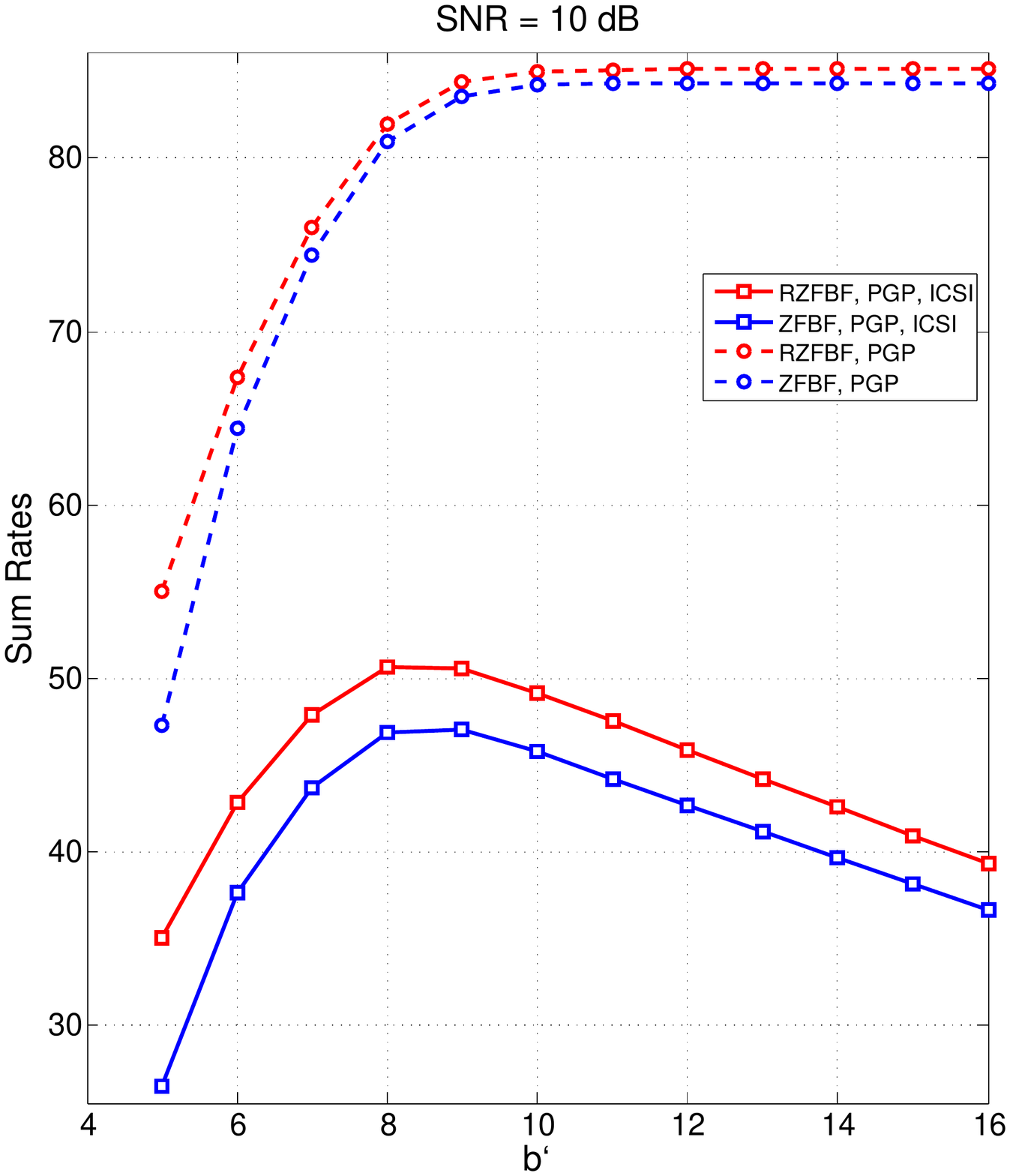}
  }
  \subfigure[$S' = 8$, SNR = 30 dB]{
  \includegraphics[width=8cm,bb = 50 100 600 600]{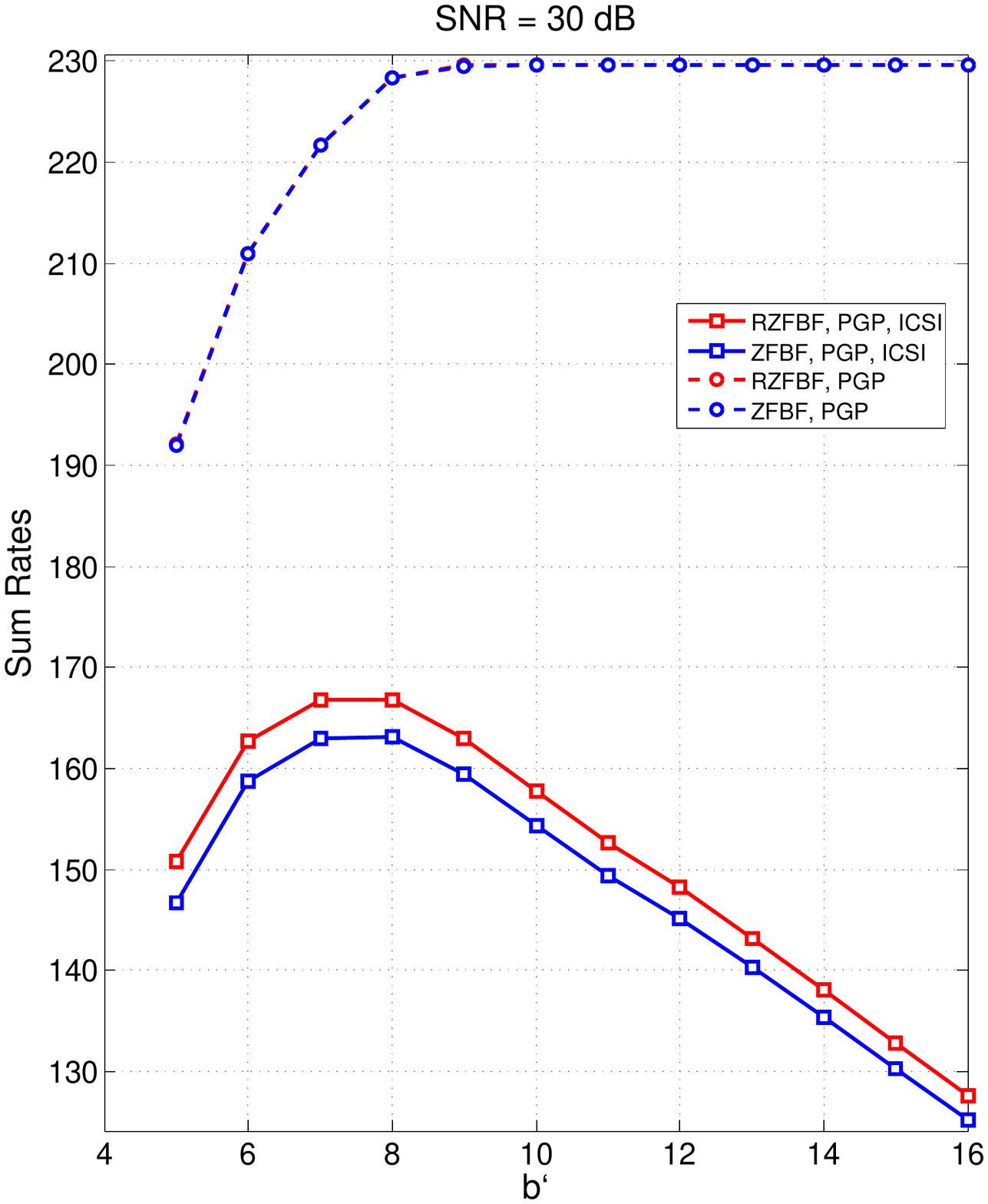}
  }
  \caption{Sum spectral efficiency (bit/s/Hz) vs. $b'$ for JSDM with $r^\star=11$,
  for $S' = 8$ (computed via deterministic equivalents).
  The coherence block length $T = 40$. The ``dashed'' curves denote the results for PGP with perfect CSIT,
  and the ``solid'' lines denote the same for imperfect CSIT.}
  \label{fig:optimal-b-8}
\end{figure}

\begin{rem} \label{choice-of-Sprime-remark}
Having chosen $b'$, we focus now on choice of optimal $S'$. This
depends heavily on the precoding scheme and the
operating SNR. For a given operating SNR, there is approximately a linear dependence between the optimal $S'$ and $b'$
for both the RZFBF and the ZFBF precoders considered here.
This linear dependence can be characterized by a single parameter, namely, the slope of the line relating
the optimal $S'$ and $b'$. In Fig.~\ref{fig:slope-SNR} we have plotted this slope versus SNR.
It can be seen that for RZFBF, at low values of SNR, the choice $S' = b'$ (slope equal to 1) is optimal.
In contrast, for ZFBF it is better to serve some $S' < b'$ number of users.  As the SNR increases, the ZFBF slope increases and
approaches that of the same slope of RZFBF at high SNR.
\end{rem}

\begin{figure}
\centering
  \includegraphics[width=10cm]{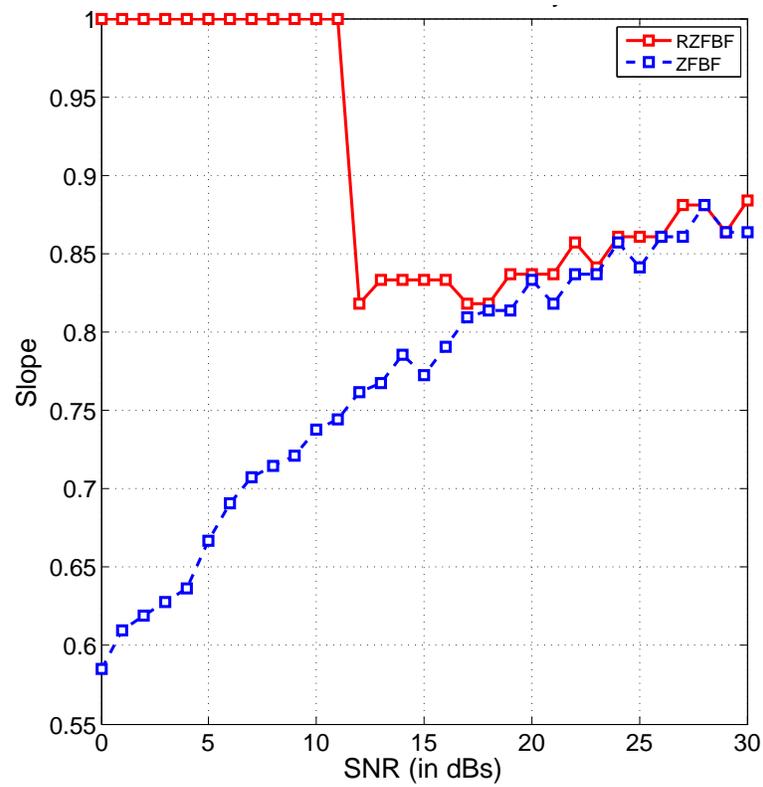}
  \caption{Ratio $S'/b'$ (slope) for the optimized  $S'$ and $b'$ versus the channel SNR for different precoders.}
  \label{fig:slope-SNR}
\end{figure}

\section{Uniform linear arrays: eigenvalues and eigenvectors}
\label{sec:dft-prebeamforming}

In this section we consider the antenna correlation model (\ref{eq:SM-4}) for the special but important
case of a Uniform Linear Array (ULA) of large dimension ($M \gg 1$), and obtain important insight on the
behavior of the normalized asymptotic rank $\rho = \lim_{M \rightarrow \infty} \frac{r}{M}$
and of the eigenvectors $\Um$ of the covariance matrix $\Rm$.
We consider a 120 deg sector, obtained by using directional radiating elements, and
assume that the sector is centered around the x-axis ($\alpha = 0$ azimuth angle),
and that no energy is received for angles $\alpha \notin [-\pi/3, \pi/3]$.
A ULA formed by $M$ such directional radiating elements is placed at the origin along
the y-axis. Denoting by $\lambda D$ the spacing of the antenna elements, the covariance matrix
of the channel for a user at AoA $\theta$ and AS $\Delta$ according to the model of Section \ref{sec:channel-model}
is given by the Toeplitz form
\begin{equation}  \label{correlation-matrix-ULA}
[\Rm]_{m,p} = \frac{1}{2\Delta} \int_{-\Delta+\theta}^{\Delta+\theta}   e^{-j2\pi D (m-p) \sin(\alpha)} d\alpha
\end{equation}
for $m,p \in \{0,1,\ldots, M-1\}$.
In order to characterize eigenvalues and eigenvectors of $\Rm$ with respect to $D, \Delta, \theta$, for large $M$,
we resort to the well-known results of \cite{gray2006toeplitz,grenander1984toeplitz}.

From \cite{grenander1984toeplitz}, we recall the following
fundamental result. Let $S(\xi)$ be a uniformly bounded absolutely
integrable function over $\xi \in [-1/2,1/2]$, i.e.,
\[ \int_{-1/2}^{1/2} |S(\xi)| d\xi < \infty, \;\; \kappa_1 \leq S(\xi) \leq \kappa_2, \]
where the bounds hold for all $\xi \in [-1/2,1/2]$ up to a set of measure zero.
Assume that we can write the sequence $r_m = [\Rm]_{\ell,\ell-m}$ as the
inverse discrete-time  Fourier transform of $S(\xi)$, i.e.,
\begin{equation} \label{PSD}
r_m = \int_{-1/2}^{1/2} S(\xi) e^{j2\pi \xi m} d\xi.
\end{equation}
Then, the Toeplitz matrix $\Rm$ can be approximated by the circulant matrix $\Cm$ defined by its first column with $m$-th element
\begin{equation} \label{circ-approx}
c_m = \left \{ \begin{array}{ll}
r_m + r_{m-M} & \mbox{for} \;\; m = 1,  \ldots, M - 1\\
r_0 & \mbox{for} \;\; m = 0 \end{array} \right . ,
\end{equation}
where the approximation holds in the following sense:

\begin{fact} \label{eigenvalue-approx}
The set of eigenvalues $\{\lambda_m(\Rm)\}$, $\{\lambda_m(\Cm)\}$ and the set of uniformly spaced samples $\{S(m/M): m = 0,\ldots, M-1\}$
are asymptotically {\em equally distributed}, i.e.,  for any continuous function $f(x)$ defined over $[\kappa_1, \kappa_2]$, we have
\begin{equation}
\lim_{M \rightarrow \infty} \frac{1}{M} \sum_{m=0}^{M-1} f(\lambda_m(\Rm)) = \lim_{M \rightarrow \infty} \frac{1}{M} \sum_{m=0}^{M-1} f(\lambda_m(\Cm)) =
\int_{-1/2}^{1/2}  f(S(\xi))  d\xi.
\end{equation}
\hfill \QED
\end{fact}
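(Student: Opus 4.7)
The plan is to prove the two limits separately and then link them via the classical notion of asymptotically equivalent matrix sequences. I would first handle the circulant matrix $\Cm$, where everything is explicit. By construction $\Cm$ is circulant with first column $(c_0, c_1, \ldots, c_{M-1})^\transp$, and thus it is diagonalized by the unitary DFT matrix: its eigenvalues are $\lambda_m(\Cm) = \sum_{n=0}^{M-1} c_n e^{-j 2\pi m n/M}$ for $m = 0, \ldots, M-1$. Using the definition of $c_n$ together with (\ref{PSD}) and the identity $\sum_{n=0}^{M-1} e^{j 2\pi (\xi - m/M) n} + \textrm{wrap-around terms}$, one recognizes $\lambda_m(\Cm) = S(m/M) + \epsilon_m$, where $\epsilon_m$ is a correction term arising from truncating the Fourier series of $S$ to $M$ terms. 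A standard argument (Cesaro averaging plus the absolute integrability of $S$) shows $\frac{1}{M}\sum_m |\epsilon_m| \to 0$. Hence, by continuity of $f$ on $[\kappa_1, \kappa_2]$ and Riemann sum convergence,
\[
\frac{1}{M} \sum_{m=0}^{M-1} f(\lambda_m(\Cm)) \longrightarrow \int_{-1/2}^{1/2} f(S(\xi))\, d\xi.
\]

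Next I would pass from $\Cm$ to $\Rm$ via the concept of asymptotic equivalence of matrix sequences (Gray's framework). The plan here is to show that
\[
\frac{1}{M} \|\Rm - \Cm\|_F^2 \longrightarrow 0 \quad \text{as } M \to \infty.
\]
From (\ref{circ-approx}), the entries of $\Rm - \Cm$ differ from zero only in the off-diagonal wrap-around corrections of the form $r_{m-M}$. Counting how many entries of each magnitude appear in the $M \times M$ matrix and using absolute summability of $\{r_m\}$ (which follows from $S \in L^1$ and $S$ bounded), the normalized Frobenius norm squared is bounded by $\frac{2}{M}\sum_{k=1}^{M-1} k |r_{M-k}|^2$, which vanishes by a Kronecker-lemma-type argument on the tails of the sequence.

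The third step is to deduce that asymptotically equivalent bounded Hermitian sequences share the same asymptotic eigenvalue distribution against any continuous test function. Both $\Rm$ and $\Cm$ have spectra in $[\kappa_1, \kappa_2]$ (this follows from the bounded spectral density, via the standard quadratic-form bound $\kappa_1 \|x\|^2 \leq x^\herm \Rm x \leq \kappa_2 \|x\|^2$ obtained by writing $x^\herm \Rm x$ as an integral against $S$). For polynomial $f$, one expands $f(\Rm) - f(\Cm)$ and bounds $\frac{1}{M} |\trace(f(\Rm)) - \trace(f(\Cm))|$ using $|\trace(\Am \Bm)| \leq \|\Am\|_F \|\Bm\|_F$ together with the uniform spectral bound; this shows the two Cesaro averages have the same limit for polynomial $f$. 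A Weierstrass approximation argument then extends this to all continuous $f$ on $[\kappa_1, \kappa_2]$, closing the chain of equalities.

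The main obstacle will be the middle step, specifically verifying that the normalized Frobenius-norm difference between $\Rm$ and $\Cm$ vanishes without further summability assumptions beyond $S \in L^\infty \cap L^1$. One needs to be careful because absolute summability of $\{r_m\}$ is not automatic from $S \in L^1$; however, boundedness of $S$ (via Parseval) gives $\sum_m |r_m|^2 < \infty$, which is exactly what is needed to control $\|\Rm - \Cm\|_F^2/M$ by a tail sum that vanishes. The Riemann-sum step for $\Cm$ and the polynomial-to-continuous extension via Weierstrass are routine, but should be executed carefully to keep the error uniform in $M$ on the compact spectral interval $[\kappa_1,\kappa_2]$.
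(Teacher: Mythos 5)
The paper does not actually prove Fact \ref{eigenvalue-approx}: it is recalled verbatim from Szeg\H{o}'s asymptotic theory of Toeplitz forms \cite{grenander1984toeplitz} (see also \cite{gray2006toeplitz}), so your proposal can only be measured against the standard literature argument. Your skeleton is indeed that argument: diagonalize $\Cm$ by the DFT, establish asymptotic equivalence via $\frac{1}{M}\|\Rm-\Cm\|_F^2\to 0$, and pass from traces of powers to continuous $f$ by Weierstrass approximation. Two remarks on the middle step: the correct entry count gives $\frac{1}{M}\|\Rm-\Cm\|_F^2=\frac{2}{M}\sum_{k=1}^{M-1}k|r_k|^2$, not $\frac{2}{M}\sum_{k}k|r_{M-k}|^2$ (the latter tends to $2\sum_j|r_j|^2\neq 0$); with the right indexing, $\{r_m\}\in\ell^2$ (Parseval, since $S\in L^1\cap L^\infty\subset L^2$) and Kronecker's lemma do finish this step, as you anticipate.

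There are, however, two genuine gaps. First, Step 1 asserts $\lambda_k(\Cm)=S(k/M)+\epsilon_k$ with $\frac{1}{M}\sum_k|\epsilon_k|\to 0$ ``by Ces\`aro averaging plus absolute integrability of $S$.'' Since $\lambda_k(\Cm)=\sum_{|m|\le M-1}r_m e^{-j2\pi km/M}$ is the Dirichlet partial sum of the Fourier series of $S$ at $\xi=k/M$, and $S\in L^1\cap L^\infty$ only guarantees $\{r_m\}\in\ell^2$ (not $\ell^1$), neither pointwise convergence nor the averaged bound on $|\epsilon_k|$ follows from what you have; this step is unsupported as written (and is also unnecessary once asymptotic equivalence is in place, since the limit $\int f(S(\xi))\,d\xi$ can be obtained by computing $\lim\frac{1}{M}\trace(\Rm^s)$ directly, or by comparing with the circulant whose eigenvalues are exactly the samples $S(k/M)$). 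Second, and more importantly, the polynomial-to-continuous extension requires both spectra to lie in a fixed compact set, and you assert $\lambda_k(\Cm)\in[\kappa_1,\kappa_2]$ without justification. For the wrap-around circulant (\ref{circ-approx}) this is false in general: its eigenvalues, being Dirichlet partial sums, can overshoot $[\kappa_1,\kappa_2]$ (Gibbs phenomenon) and, absent $\{r_m\}\in\ell^1$ or bounded variation of $S$, are only guaranteed to be $O(\log M)$. The standard repairs are to assume the Wiener class $\sum_m|r_m|<\infty$, to exploit that the specific $S(\xi)$ of (\ref{fikissimo}) has bounded variation (which yields uniformly bounded partial sums), or to replace $\Cm$ by the circulant $\Fm\,\diag\left(S(0),\ldots,S(\tfrac{M-1}{M})\right)\Fm^\herm$, whose spectrum trivially lies in $[\kappa_1,\kappa_2]$ and which is likewise asymptotically equivalent to $\Rm$. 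Without one of these, the Weierstrass step does not close.
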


\begin{fact} \label{eigenspace-approx}
The eigenvectors of $\Rm$ are approximated by the eigenvectors of
$\Cm$ in the following eigenspace approximation sense. Define the
asymptotic eigenvalue cumulative distribution function (CDF) of the
eigenvalues of $\Rm$ to be the right-continuous non-decreasing
function $F(\lambda)$ such that  $F(\lambda) = \int_{S(\xi) \leq
\lambda} d\xi$ for any point of continuity $\kappa_1 \leq \lambda
\leq \kappa_2$. Let $\lambda_0(\Rm) \leq \ldots, \leq
\lambda_{M-1}(\Rm)$ and $\lambda_0(\Cm) \leq \ldots, \leq
\lambda_{M-1}(\Cm)$ denote the set of ordered eigenvalues of $\Rm$
and $\Cm$, and let $\Um = [\uv_0, \ldots, \uv_{M-1}]$ and $\Fm =
[\fv_0, \ldots, \fv_{M-1}]$ denote the corresponding
eigenvectors.~\footnote{Notice that in the channel model defined in
Section \ref{sec:channel-model} we defined $\Um$ of dimensions $M
\times r$ to be the matrix of eigenvectors corresponding to the
non-zero eigenvalues of $\Rm$. In the statement of this result,
instead, $\Um$ denotes the whole $M \times M$ matrix of
eigenvectors, including the non-unique eigenvectors forming a
unitary basis for the nullspace of $\Rm$, in the case $r < M$.} For
any interval $[a, b] \subseteq [\kappa_1,\kappa_2]$ such that
$F(\lambda)$ is continuous on $[a,b]$, consider the eigenvalues
index sets $\Ic_{[a,b]} = \{m :  \lambda_m(\Rm) \in [a,b]\}$ and
$\Jc_{[a,b]} = \{m :  \lambda_m(\Cm) \in [a,b]\}$, and define
$\Um_{[a,b]} = (\uv_m : m \in \Ic_{[a,b]})$ and  $\Fm_{[a,b]} =
(\fv_m : m \in \Jc_{[a,b]})$ be the submatrices of $\Um$ and $\Fm$
formed by the columns whose indices belong to the sets $\Ic_{[a,b]}$
and $\Jc_{[a,b]}$, respectively. Then, the eigenvectors of $\Cm$
approximate the eigenvectors of $\Rm$ in the sense that
\begin{equation}
\lim_{M \rightarrow \infty} \frac{1}{M} \left \| \Um_{[a,b]} \Um_{[a,b]}^\herm  - \Fm_{[a,b]} \Fm_{[a,b]}^\herm \right \|_F^2 = 0.
\end{equation}
\hfill \QED
\end{fact}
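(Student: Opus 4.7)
The plan is to reduce the eigenvector-approximation claim to a convergence statement for spectral projectors of two asymptotically close Hermitian matrices, which I would handle via Lipschitz functional calculus. I would combine three ingredients: (i) a normalized Frobenius asymptotic equivalence $\frac{1}{M}\|\Rm-\Cm\|_F^2\to 0$; (ii) the matrix-Lipschitz inequality $\|f(A)-f(B)\|_F\le L\|A-B\|_F$ for $L$-Lipschitz $f$ acting on Hermitian arguments; and (iii) a smooth-indicator approximation that leverages the continuity of $F$ at the endpoints $a,b$.

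First I would establish ingredient (i) by direct computation from \eqref{circ-approx}: $\Rm-\Cm$ has nonzero entries only in the ``corner'' positions, and a short index manipulation yields $\|\Rm-\Cm\|_F^2 = 2\sum_{j=1}^{M-1} j\,|r_j|^2$. Since $S$ is bounded and therefore square-integrable, Parseval gives $\sum_j |r_j|^2<\infty$, after which a standard Cesaro argument concludes $\frac{1}{M}\|\Rm-\Cm\|_F^2\to 0$.

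Next, for each fixed $\epsilon>0$ I would introduce a piecewise-linear function $\phi_\epsilon:\RR\to[0,1]$ that equals $1$ on $[a+\epsilon,b-\epsilon]$, vanishes outside $[a-\epsilon,b+\epsilon]$, and has Lipschitz constant $1/\epsilon$. The Frobenius matrix-Lipschitz inequality then gives $\frac{1}{M}\|\phi_\epsilon(\Rm)-\phi_\epsilon(\Cm)\|_F^2 \le \epsilon^{-2}\cdot \frac{1}{M}\|\Rm-\Cm\|_F^2 \to 0$ for each fixed $\epsilon$. To pass from $\phi_\epsilon$ to the spectral projectors, I would diagonalise each of $\Rm,\Cm$ in its own eigenbasis: the squared Frobenius error between $\phi_\epsilon(\Rm)$ and $\Um_{[a,b]}\Um_{[a,b]}^\herm$ equals the number of eigenvalues of $\Rm$ falling in the two ``transition strips'' $[a-\epsilon,a+\epsilon]\cup[b-\epsilon,b+\epsilon]$, and likewise for $\Cm$ and $\Fm_{[a,b]}\Fm_{[a,b]}^\herm$. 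Applying Fact~\ref{eigenvalue-approx} to continuous approximations of the indicator of these strips bounds the normalized count, asymptotically in $M$, by $F(a+\epsilon)-F(a-\epsilon)+F(b+\epsilon)-F(b-\epsilon)$ on both sides. Triangle inequality, then sending $M\to\infty$ for fixed $\epsilon$ and finally $\epsilon\downarrow 0$, delivers the claim.

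The main obstacle I anticipate is precisely this order of limits. The Lipschitz bound contains the factor $\epsilon^{-1}$, so $\epsilon$ cannot be allowed to shrink with $M$; it is essential to take $M\to\infty$ first and only then $\epsilon\downarrow 0$. The continuity hypothesis on $F$ at $a$ and $b$ is what makes the two-stage argument work, since it guarantees that the ``smoothing tail'' vanishes in the limit on both the $\Rm$- and $\Cm$-sides. A smaller technical point is that the Frobenius-norm matrix Lipschitz inequality for Hermitian arguments (due essentially to Kato and Birman--Solomjak) should be briefly cited; it is standard but not altogether elementary, and its failure in operator norm is precisely why the statement is formulated in Frobenius norm.
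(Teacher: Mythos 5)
The paper never proves this statement: Fact~\ref{eigenspace-approx} is quoted verbatim from Szeg\H{o}'s asymptotic theory (the citation to \cite{grenander1984toeplitz}), so there is no in-paper argument to compare against, and your proposal supplies a genuine, self-contained proof where the paper offers only a pointer to the literature. Your argument is correct and follows the standard route for such results: asymptotic Hilbert--Schmidt equivalence of $\Rm$ and $\Cm$, transferred to spectral projectors through a smoothed indicator, with the two transition strips controlled by Fact~\ref{eigenvalue-approx} and the continuity of $F$ at $a$ and $b$. The order of limits ($M\to\infty$ first, then $\epsilon\downarrow 0$) is handled exactly right, and the corner-entry computation $\|\Rm-\Cm\|_F^2=2\sum_{j=1}^{M-1}j\,|r_j|^2$ together with $\sum_j|r_j|^2<\infty$ (Parseval, $S$ bounded) does give $\frac{1}{M}\|\Rm-\Cm\|_F^2\to 0$ by the split-at-$\sqrt{M}$ argument. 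Two small remarks. First, the Frobenius-norm Lipschitz inequality you invoke does not need Kato or Birman--Solomjak: writing $A=\Um\Lambdam\Um^\herm$, $B=\Vm\Mm\Vm^\herm$ one has $(\Um^\herm(f(A)-f(B))\Vm)_{ij}=(f(\lambda_i)-f(\mu_j))(\Um^\herm\Vm)_{ij}$, so $\|f(A)-f(B)\|_F^2=\sum_{i,j}|f(\lambda_i)-f(\mu_j)|^2|(\Um^\herm\Vm)_{ij}|^2\le L^2\|A-B\|_F^2$ term by term; the deep theory is only needed for the operator norm, which you correctly avoid. Second, the squared Frobenius distance between $\phi_\epsilon(\Rm)$ and $\Um_{[a,b]}\Um_{[a,b]}^\herm$ is \emph{at most} (not equal to) the number of eigenvalues in the transition strips, since $|\phi_\epsilon(\lambda)-1_{[a,b]}(\lambda)|\le 1$ there; this is the direction you need anyway. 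With those cosmetic fixes the proof is complete.
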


A well-known property of circulant matrices \cite{gray2006toeplitz}
is that their eigenvectors form a unitary DFT matrix, i.e., the
matrix whose $(\ell,m)$-th element is given by $[\Fm]_{\ell,m} =
\frac{e^{-j2\pi \ell m/M}}{\sqrt{M}}$. This has an important
consequence for JSDM with large ULAs: in the regime of large $M$
where the Toeplitz channel correlation matrix $\Rm$ is well
approximated by its circulant version $\Cm$,  we can approximate
$\Um$, the tall unitary matrix of the channel covariance
eigenvectors, with a submatrix of $\Fm$, formed by a selection of
columns of $\Fm$. Hence, we can design the pre-beamforming stage of
JSDM by replacing $\Um$ with its DFT approximation, avoiding the
need of a precise estimation of the actual channel covariance
matrix.  In order to understand how to select the columns of $\Fm$,
we need to gain more insight into the asymptotic behavior of the
eigenvalues of $\Rm$.

For $r_m = [\Rm]_{\ell,\ell-m}$ with $[\Rm]_{m,p}$ given by (\ref{correlation-matrix-ULA}),
and $\Cm$ defined as in (\ref{circ-approx}),
the eigenvalues $\{\lambda_k(\Cm)\}$ can be given explicitly for any finite $M$ as follows:
\begin{eqnarray} \label{eigv-circ}
\lambda_k(\Cm)
& = & \sum_{m=0}^{M-1} c_m e^{-j\frac{2\pi}{M} m k} \nonumber \\
& = & r_0 + \sum_{m=1}^{M-1} [r_m + r_{m-M}] e^{-j\frac{2\pi}{M} m k} \nonumber \\
& = & r_0 + \sum_{m=1}^{M-1} r_m e^{-j\frac{2\pi}{M} m k} + \sum_{m=1}^{M-1} r_{m}^* e^{j\frac{2\pi}{M} m k} \nonumber \\
& = & \frac{1}{2\Delta} \int_{-\Delta+\theta}^{\Delta+\theta} \left
[ 1 +
2\Re\left \{ \sum_{m=0}^{M-1} e^{-j 2\pi m \omega_k(D,\alpha)} - 1 \right \} \right ] d\alpha \nonumber \\
& = & -1 +  \frac{1}{\Delta} \int_{-\Delta+\theta}^{\Delta+\theta}
\cos\left( \pi \omega_k(D,\alpha)(M-1)\right ) \frac{\sin\left (\pi
\omega_k(D,\alpha)M\right )}{\sin(\pi\omega_k(D,\alpha))} d\alpha,
\end{eqnarray}
where we define the quantity $\omega_k(D,\alpha) = D\sin(\alpha) + k/M$.

In order to obtain the limiting CDF of the eigenvalues of $\Rm$ and find a simple formula
for the asymptotic rank $\rho$,  we obtain an explicit expression of $S(\xi)$ for the autocorrelation
function $r_m = [\Rm]_{\ell,\ell-m}$.
Using (\ref{correlation-matrix-ULA}) and invoking the Lebesgue dominated convergence  theorem,  we have
\begin{eqnarray} \label{super-figo}
S(\xi) & = &
\sum_{m=-\infty}^\infty r_m e^{-j2\pi \xi m} \nonumber \\
& = & \frac{1}{2\Delta} \int_{-\Delta+\theta}^{\Delta+\theta} \left [ \sum_{m=-\infty}^\infty e^{-j2\pi m (D \sin(\alpha) + \xi)} \right ] d\alpha \nonumber \\
& \stackrel{(a)}{=} & \frac{1}{2\Delta} \int_{-\Delta+\theta}^{\Delta+\theta} \left [ \sum_{m=-\infty}^\infty \delta (D \sin(\alpha) + \xi - m) \right ] d\alpha \nonumber \\
& \stackrel{(b)}{=} & \frac{1}{2\Delta} \int_{D
\sin(-\Delta+\theta)}^{D \sin(\Delta+\theta)} \left [
\sum_{m=-\infty}^\infty \delta (z + \xi - m) \right ]
\frac{dz}{\sqrt{D^2 - z^2}},
\end{eqnarray}
where in (a) we used the Poisson sum formula (also known as ``picked
fence miracle'' \cite{lapidoth2009foundation}), in (b) we made the
change of variable $z = D\sin(\alpha)$. The expression
(\ref{super-figo}) is valid for $- \frac{\pi}{2} \leq \theta -
\Delta < \theta + \Delta \leq \frac{\pi}{2}$.
A more general formula, able to recover the classical Bessel $J_0$ autocorrelation function \cite{bello1963characterization}
in the case of uniform isotropic scattering, is provided in Appendix \ref{bessel-derive}.
Owing to the property of the Dirac delta function, we arrive at
\begin{equation}  \label{fikissimo}
S(\xi) = \frac{1}{2\Delta} \sum_{m \in [D\sin(-\Delta+\theta) + \xi, D\sin(\Delta+\theta) + \xi]} \frac{1}{\sqrt{D^2 - (m - \xi)^2}}.
\end{equation}
We have:
\begin{lem} \label{lemma:boundedness}
The function $S(\xi)$ is non-constant over its support and uniformly
bounded, provided that $D \in [0,1/2]$ and $-\phi \leq \theta -
\Delta < \theta + \Delta \leq \phi$ for some constant angle $\phi
\in [0, \pi/2)$.
\end{lem}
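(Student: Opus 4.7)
My plan is to work directly from the explicit formula~(\ref{fikissimo}) for $S(\xi)$ and to analyze its support and the size of each summand separately. The two hypotheses enter independently: $|\theta \pm \Delta| \leq \phi < \pi/2$ is what keeps $|m-\xi|$ strictly bounded away from $D$ (giving uniform boundedness on the support), while $D \leq 1/2$ is what keeps the support a proper subset of each period (giving non-constancy).

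First I would identify the support. From~(\ref{fikissimo}), $S(\xi)$ is nonzero exactly when some integer $m$ satisfies $m-\xi \in [D\sin(\theta-\Delta),\, D\sin(\theta+\Delta)]$, so
$$\mathrm{supp}(S) \;=\; \bigcup_{m\in\ZZ}\Bigl[\,m - D\sin(\theta+\Delta),\; m - D\sin(\theta-\Delta)\,\Bigr],$$
a $1$-periodic union of intervals of common length $L = D[\sin(\theta+\Delta)-\sin(\theta-\Delta)] = 2D\cos(\theta)\sin(\Delta)$. Since $D\leq 1/2$ and $\Delta\leq\phi$ (the latter follows from $-\phi\leq\theta-\Delta<\theta+\Delta\leq\phi$), we get $L \leq \sin(\phi) < 1$; hence within one period $S$ vanishes on a set of positive Lebesgue measure, while on the support each summand $(2\Delta)^{-1}(D^2-(m-\xi)^2)^{-1/2}$ is strictly positive. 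Thus $S$ takes both the value $0$ and strictly positive values on sets of positive measure, proving non-constancy.

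For the uniform upper bound, fix $\xi\in\mathrm{supp}(S)$ and any integer $m$ appearing in the sum in~(\ref{fikissimo}). Then $m-\xi \in [D\sin(\theta-\Delta),\, D\sin(\theta+\Delta)]$, and the hypothesis $|\theta\pm\Delta|\leq\phi$ gives $|m-\xi|\leq D\sin(\phi)$, so $D^2-(m-\xi)^2 \geq D^2\cos^2(\phi) > 0$ and each summand is bounded by $\frac{1}{2\Delta D\cos(\phi)}$. Since $L < 1$, the interval $[\xi + D\sin(\theta-\Delta),\, \xi + D\sin(\theta+\Delta)]$ contains at most two integers for any $\xi$, so the number of summands is uniformly bounded (by $2$). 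Combining, $S(\xi)\leq \frac{1}{\Delta D\cos(\phi)}$ uniformly on the support, and $S(\xi)=0$ off the support, yielding the required uniform bound.

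The main point that requires care is the interplay between the two hypotheses at the boundary of the support: one must verify that even at the endpoints $m-\xi = D\sin(\theta\pm\Delta)$ the denominator $\sqrt{D^2 - (m-\xi)^2}$ stays bounded away from zero, which is exactly where the \emph{strict} inequality $\phi<\pi/2$ is used rather than $\phi\leq\pi/2$. The corner cases $D=0$ or $\Delta=0$ are not covered by the argument, but they correspond to physically degenerate covariance matrices (the all-ones matrix, or a line-of-sight channel with zero angular spread) and are implicitly excluded by the modelling assumptions of Section~\ref{sec:channel-model}.
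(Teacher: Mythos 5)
Your overall strategy is the same as the paper's: work directly from the explicit formula (\ref{fikissimo}), use $\phi<\pi/2$ to keep $|m-\xi|\leq D\sin\phi$ strictly below $D$ so that each summand is at most $\frac{1}{2\Delta D\cos\phi}$, and use $D\leq 1/2$ to control which integers can enter the sum. The boundedness half of your argument is correct. (Two small remarks there: a closed interval of length $L<1$ contains at most \emph{one} integer, not two, and the paper exploits exactly this --- for $\xi\in[-1/2,1/2]$ the only candidate is $m=0$, yielding the single-term expression (\ref{ziofanalissimo}); your looser count of two still gives a valid uniform bound.)

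The genuine gap is in the non-constancy half. You prove that $S$ is non-constant on a full period by exhibiting a positive-measure set where $S=0$ and a positive-measure set where $S>0$. But the lemma asserts that $S$ is non-constant \emph{over its support} $\Sc$, i.e., that the restriction of $S$ to $\Sc$ is not a constant function, and your argument does not rule out $S\equiv c>0$ on $\Sc$. This restricted non-constancy is the substantive content of the lemma: it is what guarantees, in the following subsection, that the limiting eigenvalue CDF $F(\lambda)$ has no mass point other than at $\lambda=0$ and is therefore continuous on $(0,\kappa_2]$, which is exactly what is needed to invoke the eigenspace approximation of Fact \ref{eigenspace-approx} on the interval $[\kappa',\kappa_2]$ and obtain Corollary \ref{eigenspace-corollary}. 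The repair is immediate and is what the paper does: since exactly one integer contributes, $S(\xi)=\frac{1}{2\Delta}(D^2-\xi^2)^{-1/2}$ for $\xi\in\Sc\subseteq[-1/2,1/2]$, a differentiable function whose derivative is not identically zero on the positive-length interval $\Sc$, hence non-constant there. Your closing observations on the role of the strict inequality $\phi<\pi/2$ and on the degenerate cases $D=0$, $\Delta=0$ are apt; the paper's own bounds $\frac{1}{D}\leq\kappa'<\kappa_2\leq\frac{1}{D\cos(\phi)}$ likewise implicitly assume $D>0$.
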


\begin{proof}
$S(\xi)$ is periodic and it is sufficient to restrict $\xi$ to the
interval  $[-1/2,1/2]$. As observed before, if $-\frac{\pi}{2} \leq
\theta - \Delta < \theta + \Delta \leq \frac{\pi}{2}$, the general
expression of $S(\xi)$ given in Appendix \ref{bessel-derive}
coincides with (\ref{fikissimo}),
and we have $-D < - D \sin(\phi) \leq D\sin(-\Delta+\theta) < D\sin(\Delta+\theta) \leq D\sin(\phi) < D$.
Since $-1/2 \leq \xi \leq 1/2$ and $D \in [0,1/2]$, the following inequalities hold:
\begin{eqnarray*}
D\sin(-\Delta+\theta) + \xi &\geq& D\sin(-\Delta+\theta) - 1/2 > -D
- 1/2 \geq -1\\
D\sin(\Delta+\theta) + \xi &\leq& D\sin(\Delta+\theta) + 1/2 < D +
1/2 \leq 1.
\end{eqnarray*}
Since $-1 < D\sin(-\Delta+\theta) + \xi <
D\sin(\Delta+\theta) + \xi < 1$, the only integer in the interval $[D\sin(-\Delta+\theta) + \xi, D\sin(\Delta+\theta) +
\xi]$ is 0. Thus,
\begin{equation} \label{ziofanalissimo}
S(\xi) = \frac{1}{2\Delta} \sum_{0 \in [D\sin(-\Delta+\theta) + \xi,
D\sin(\Delta+\theta) + \xi]} \frac{1}{\sqrt{D^2 - \xi^2}}.
\end{equation}
The support $\Sc$ of $S(\xi)$ is the set of values $\xi \in
[-1/2,1/2]$ for which the interval $[D\sin(-\Delta+\theta) + \xi,
D\sin(\Delta+\theta) + \xi]$ contains the point 0, i.e., $\Sc =
[-D\sin(\Delta+\theta), -D\sin(-\Delta+\theta)]$. It is clear by
inspection that $S(\xi)$ is not constant over $\Sc$ (it is
sufficient to observe that $S(\xi)$ is differentiable, and its
derivative is not identically zero over a set of non-zero measure).
To prove that $S(\xi)$ is uniformly bounded, it is sufficient to
notice that the term $\frac{1}{\sqrt{D^2 - \xi^2}}$ in
(\ref{ziofanalissimo}) is real, continuous and finite for all $\xi
\in (-D,D) \supset [-D\sin(\phi), D\sin(\phi)] \supseteq \Sc$.
Hence, it attains its minimum $\kappa'$ and maximum $\kappa_2$ on
$\Sc$, and these are uniformly bounded as~\footnote{We use $\kappa'$
instead of $\kappa_1$ to denote the minimum of $S(\xi)$ on its
support since the minimum eigenvalue, denoted previously by
$\kappa_1$, is generally equal to 0 whenever $\Sc$ is strictly
included in $[-1/2,1/2]$.}
\[ \frac{1}{D} \leq \kappa' < \kappa_2 \leq \frac{1}{D \cos(\phi)} < \infty. \]
\end{proof}

Notice that the assumptions of Lemma \ref{lemma:boundedness} are satisfied for antenna spacing not larger than $\lambda/2$ and
in the assumption, made here, that the ULA receives/transmits energy only in a 120 deg sector (i.e., for AoAs in $[-\pi/3,\pi/3]$).
As a corollary of (\ref{fikissimo}), we obtain the asymptotic rank in closed form:

\begin{thm} \label{asympt-rank}
The asymptotic normalized rank of the channel covariance matrix $\Rm$ with elements defined in (\ref{correlation-matrix-ULA}),
with antenna separation $\lambda D$, AoA $\theta$ and AS $\Delta$, is given by
\begin{equation}
\rho = \min \{ 1, B(D,\theta,\Delta)\},
\end{equation}
where
\begin{equation} \label{interval-size}
B(D,\theta,\Delta) = \left | D\sin(-\Delta+\theta) - D\sin(\Delta+\theta) \right |.
\end{equation}
\end{thm}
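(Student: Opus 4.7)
}

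The plan is to reduce the computation of the asymptotic normalized rank to the Lebesgue measure of the support of the power spectral density $S(\xi)$, and then read off that measure from the explicit formula (\ref{fikissimo}). Since the asymptotic rank counts the fraction of nonzero eigenvalues of the Toeplitz matrix $\Rm$, we have, by definition,
\begin{equation}
\rho = \lim_{M\to\infty} \frac{r}{M} = \lim_{M\to\infty} \frac{1}{M}\sum_{m=0}^{M-1} \mathbf{1}\{\lambda_m(\Rm) > 0\}.
\end{equation}
Fact \ref{eigenvalue-approx} (Szego's asymptotic equidistribution) says that, for any continuous $f$ on $[\kappa_1,\kappa_2]$, the empirical distribution of $\{\lambda_m(\Rm)\}$ converges to the push-forward of the uniform distribution on $[-1/2,1/2]$ under $S$. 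Thus the plan is to identify $\rho$ with $\int_{-1/2}^{1/2} \mathbf{1}\{S(\xi) > 0\}\,d\xi$, the Lebesgue measure of the support $\Sc$ of $S(\xi)$.

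Next I would evaluate this measure using the closed-form expression (\ref{fikissimo}). Since $S(\xi)$ is periodic with period $1$, it suffices to consider $\xi \in [-1/2,1/2]$. The representation (\ref{fikissimo}) shows that $S(\xi)>0$ iff the translated interval $[D\sin(-\Delta+\theta)+\xi, D\sin(\Delta+\theta)+\xi]$ contains at least one integer. Under the assumptions $D \in [0,1/2]$ and $-\pi/2 \leq \theta-\Delta < \theta + \Delta \leq \pi/2$, the reasoning in the proof of Lemma \ref{lemma:boundedness} shows that the only integer that can belong to this interval (for $\xi\in[-1/2,1/2]$) is $m=0$, and the set of $\xi$ for which $0$ belongs to the interval is exactly
\begin{equation}
\Sc = \bigl[-D\sin(\Delta+\theta),\, -D\sin(-\Delta+\theta)\bigr],
\end{equation}
whose Lebesgue measure is $B(D,\theta,\Delta)$. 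For the general statement, when $B(D,\theta,\Delta) > 1$ the shifted interval always contains at least one integer for every $\xi$, so the support fills all of $[-1/2,1/2]$ and the measure saturates at $1$. Combining the two regimes yields $\rho = \min\{1, B(D,\theta,\Delta)\}$.

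The main technical nuisance is that the indicator $x\mapsto \mathbf{1}\{x>0\}$ is discontinuous at $0$, so Fact \ref{eigenvalue-approx} cannot be applied to it directly. The way around this is to exploit Lemma \ref{lemma:boundedness}, which shows that on its support $S$ is bounded below by $\kappa' \geq 1/D > 0$, hence there is a uniform gap between $0$ and the positive values of $S$. I would therefore sandwich $\mathbf{1}\{x>0\}$ between two continuous test functions $f_\varepsilon^-, f_\varepsilon^+$ that are zero on $[0,\varepsilon]$ respectively equal to $1$ on $[\kappa',\kappa_2]$ (taking $\varepsilon < \kappa'$), apply Fact \ref{eigenvalue-approx} to each, and observe that both integrals $\int f_\varepsilon^\pm(S(\xi))\,d\xi$ equal $|\Sc \cap [-1/2,1/2]|$ since $S$ never takes values in $(0,\kappa')$. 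This pinches $\rho$ to the claimed value and completes the proof.
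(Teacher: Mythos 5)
Your proof is correct and follows essentially the same route as the paper's: both identify $\rho$ with the Lebesgue measure of the support of $S(\xi)$ and read that measure off from (\ref{fikissimo}), obtaining $B(D,\theta,\Delta)$ when the translated interval can capture only the integer $m=0$, and saturation at $1$ when $B(D,\theta,\Delta)\geq 1$. Your sandwich argument handling the discontinuity of the indicator $x \mapsto \mathbf{1}\{x>0\}$ via the spectral gap $\kappa' \geq 1/D$ from Lemma \ref{lemma:boundedness} is a welcome extra step of rigor that the paper's one-paragraph proof silently skips.
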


\begin{proof}
Notice that $B(D,\theta,\Delta)$ is the size of the interval for $m$ in the summation appearing
in (\ref{fikissimo}). If $B(D,\theta,\Delta) \geq 1$, for any $\xi \in [-1/2,1/2]$ the sum in (\ref{fikissimo})
is non-empty. It follows that $S(\xi) > 0$ for all $\xi$ and the asymptotic normalized rank is $\rho = 1$.
In contrast, if $B(D,\theta,\Delta) < 1$, there exist a set $\Sc^c \subseteq [-1/2,1/2]$ of measure
$1 - B(D,\theta,\Delta)$ for which if $\xi \in \Sc^c$ then the sum in (\ref{fikissimo}) is empty.
Therefore, in this case we have $\rho = B(D,\theta,\Delta)$.
\end{proof}

A good approximation of the actual rank $r$ for large but finite $M$ is given
by $r \approx \rho M$, where $\rho$ is given by  Theorem \ref{asympt-rank}.
Hence, we can predict accurately the rank of the channel covariance from the system geometric
parameters $(D, \theta, \Delta)$.

The empirical CDF of the eigenvalues of $\Rm$ is defined by
\begin{equation} \label{Rspectrum}
F_{\Rm}^{(M)}(\lambda) = \frac{1}{M} \sum_{m=1}^M 1\{ \lambda_m(\Rm) \leq \lambda\}.
\end{equation}
For large $M$, $F_{\Rm}^{(M)}(\lambda)$ can be approximated  either
using (\ref{eigv-circ}) or the collection of samples $\{S([m/M]) : m
= 0,\ldots, M-1\}$, where $[x]$ indicates $x$ modulo the interval
$[-1/2,1/2]$. In both cases, using the resulting collection of $M$
values in (\ref{Rspectrum}), we obtain a convergent approximation
$\widetilde{F}_{\Rm}^{(M)}(\lambda)$ of the empirical CDF
(\ref{Rspectrum}) such that~\cite{grenander1984toeplitz}
\[ \lim_{M \rightarrow \infty} \widetilde{F}_{\Rm}^{(M)}(\lambda) = \lim_{M \rightarrow \infty} F_{\Rm}^{(M)}(\lambda)  = F(\lambda). \]
As an example,  Fig. \ref{cdf1} shows the exact empirical CDF of $\Rm$, its circulant approximation obtained
by (\ref{eigv-circ})) and the asymptotic approximation obtained from the set
$\{S([m/M]) : m = 0,\ldots, M-1\}$,  for a specific choice of the system parameters.
It is apparent that, in this regime, both approximations are very accurate.

\begin{figure}
\centerline{
  \includegraphics[width=10cm]{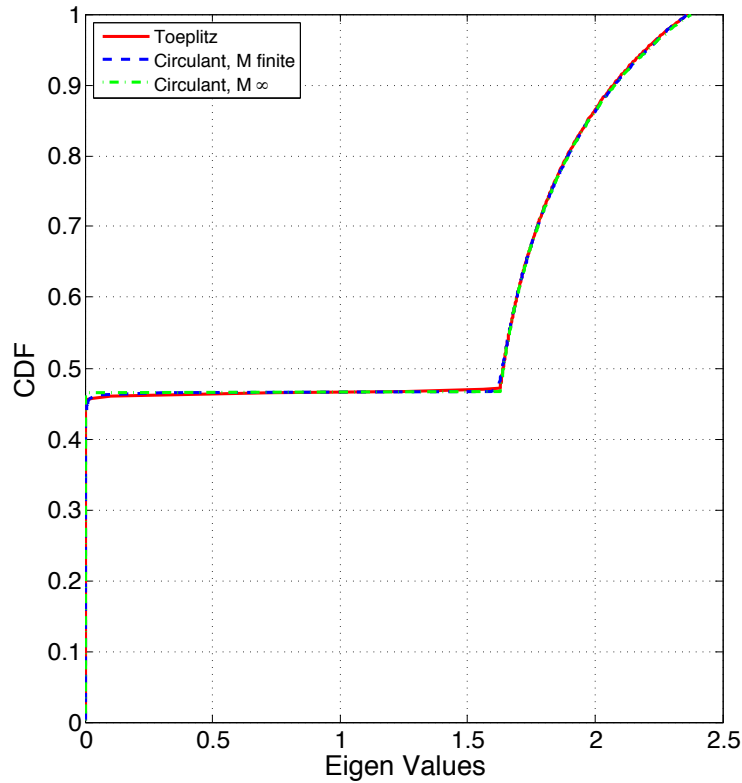}}
  \caption{$M = 400, \theta = \pi/6, D = 1, \Delta = \pi/10$.
  Exact empirical eigenvalue cdf of $\Rm$ (red), its approximation (\ref{eigv-circ}) based on
  the circulant matrix $\Cm$ (dashed blue) and its approximation from the samples of $S(\xi)$ (dashed green).}
  \label{cdf1}
\end{figure}

\subsection{Approximating the channel eigenspace}

Going back to the problem of approximating the eigenvectors of $\Rm$ with a set of DFT columns,
we notice the following properties of $S(\xi)$ in (\ref{fikissimo}):
\begin{enumerate}
\item $S(\xi)$ has support on an interval $\Sc \subseteq [-1/2,1/2]$, of length $\rho$ (see proof of Theorem \ref{asympt-rank}).
\item $S(\xi)$ is non-constant and bounded over its support (see Lemma \ref{lemma:boundedness}).
\end{enumerate}
It follows that $F(\lambda)$ has a single discontinuity at $\lambda= 0$, with jump of height $1 - \rho$, corresponding to the mass-point
of the zero eigenvalues of $\Rm$. For $\rho < 1$, $F(\lambda)$ is continuous over $(0, \kappa_2]$ where $\kappa_2 = \max S(\xi) < \infty$ by Lemma \ref{lemma:boundedness}.
Hence, any interval $[a,b]$ with $0 < a < b \leq \kappa_2$ is a continuity interval of $F(\lambda)$, and the eigenspace
approximation property of Fact \ref{eigenspace-approx} holds. In particular, we have established the following:

\begin{cor} \label{eigenspace-corollary}
Let $\Sc$ denote the support of $S(\xi)$, let $\Jc_{\Sc} = \{ m : [m/M] \in \Sc, m = 0,\ldots, M-1\}$
be the set of indices for which the corresponding ``angular frequency'' $\xi_m = [m/M]$ belongs to $\Sc$,
let $\fv_m$ denote the $m$-th column of the unitary DFT matrix $\Fm$,
%
and let $\Fm_{\Sc} = (\fv_m : m \in \Jc_{\Sc})$ be the DFT submatrix containing
the columns with indices in $\Jc_{\Sc}$.  Then,
\begin{equation}
\lim_{M \rightarrow \infty} \frac{1}{M} \left \| \Um\Um^\herm  - \Fm_{\Sc} \Fm_{\Sc}^\herm \right \|_F^2 = 0,
\end{equation}
where $\Um$ is the $M \times r$ ``tall unitary'' matrix of the non-zero eigenvectors of $\Rm$.
\end{cor}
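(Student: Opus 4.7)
The plan is to derive the corollary from the two facts just stated, after choosing an appropriate continuity interval of $F$ and then reconciling the two different indexings of the DFT columns. By Lemma~\ref{lemma:boundedness}, $S(\xi)$ vanishes off its support $\Sc$ and is uniformly bounded below by $\kappa'>0$ on $\Sc$. Consequently the asymptotic eigenvalue CDF $F(\lambda)=\int_{S(\xi)\le\lambda}d\xi$ has a single jump of height $1-\rho$ at $\lambda=0$, is constant on $[0,\kappa')$, and is continuous on $(0,\kappa_2]$. Hence any $[a,b]$ with $0<a<\kappa'$ and $b\ge\kappa_2$ is a continuity interval of $F$; I would fix for instance $a=\kappa'/2$ and $b=\kappa_2$. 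Fact~\ref{eigenspace-approx} then yields
\[
\frac{1}{M}\bigl\|\Um_{[a,b]}\Um_{[a,b]}^\herm - \Fm_{[a,b]}\Fm_{[a,b]}^\herm\bigr\|_F^2 \longrightarrow 0.
\]

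Next I would upgrade each of the two projectors above to the one of interest. Since the columns of $\Um_{[a,b]}$ are a subset of the columns of $\Um$, a direct expansion gives
\[
\bigl\|\Um\Um^\herm - \Um_{[a,b]}\Um_{[a,b]}^\herm\bigr\|_F^2 = r - |\Ic_{[a,b]}|,
\]
and this quantity counts the eigenvalues of $\Rm$ lying in $(0,a)$. Applying Fact~\ref{eigenvalue-approx} to continuous functions sandwiching the indicator of $(0,a)$ from above and below shows that this count is $o(M)$, precisely because $F$ is constant on $[0,\kappa')$. A parallel argument on the circulant side would give $\frac{1}{M}\|\Fm_{[a,b]}\Fm_{[a,b]}^\herm - \Fm_{\Sc}\Fm_{\Sc}^\herm\|_F^2 \to 0$, once one shows that the two index sets $\Jc_{[a,b]}=\{m:\lambda_m(\Cm)\in[a,b]\}$ and $\Jc_{\Sc}=\{m:[m/M]\in\Sc\}$ have symmetric difference of size $o(M)$. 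A single triangle inequality in Frobenius norm then closes the argument.

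The delicate point is this last set-cardinality comparison: the DFT column $\fv_m$ carries two distinct labels in the statement of Fact~\ref{eigenspace-approx}, namely its natural ``angular frequency'' index $m$ and its rank in the sorted list of eigenvalues $\{\lambda_m(\Cm)\}$; the latter is the one used to define $\Fm_{[a,b]}$, whereas $\Fm_{\Sc}$ is defined by the former. To bridge the two I would use the explicit formula~(\ref{eigv-circ}) to show that $\lambda_m(\Cm)$ is close to $S([m/M])$ outside a vanishing neighbourhood of the boundary of $\Sc$ (a Dirichlet-kernel / Szego-type estimate), which, combined with the strict separation between $0$ and $\kappa'$ on $\Sc$ guaranteed by Lemma~\ref{lemma:boundedness}, forces $|\Jc_{[a,b]}\triangle\Jc_{\Sc}|=o(M)$. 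This boundary analysis is essentially the only step requiring genuine analytical work beyond what is already packaged in Facts~\ref{eigenvalue-approx}--\ref{eigenspace-approx}.
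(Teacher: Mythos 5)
Your proposal is correct and follows essentially the same route as the paper's own proof, which simply sets $a=\kappa'$, $b=\kappa_2$ and invokes the eigenspace approximation of Fact~\ref{eigenspace-approx}. The additional steps you supply --- comparing $\Um_{[a,b]}$ with the full matrix $\Um$ of non-zero eigenvectors via Fact~\ref{eigenvalue-approx}, and reconciling the eigenvalue-ordered indexing of $\Fm_{[a,b]}$ with the frequency indexing of $\Fm_{\Sc}$ --- are precisely the details the paper's two-line argument leaves implicit, and your treatment of them is sound.
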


\begin{proof}
Since $S(\xi)$ is uniformly bounded and strictly positive over $\Sc$, we have
$0 < \min_{\xi \in \Sc}  S(\xi) = \kappa' < \max S(\xi) = \kappa_2$. Hence, letting
$a = \kappa'$ and $b = \kappa_2$,  and using the eigenspace approximation property of Fact \ref{eigenspace-approx} yields the result.
\end{proof}

Consider now a JSDM configuration with an ULA serving $G$ groups
with AoAs within a 120 deg sector. For each group $g$,  we can
approximate the eigenmodes $\Um_g$ by the DFT submatrix
$\Fm_{\Sc_g}$, where $\Sc_g$ denotes the support of $S_g(\xi)$,
given by (\ref{fikissimo}) for AoA $\theta_g$ and AS $\Delta$ (for
simplicity we assume that the AS is common to all groups, although
this can be easily generalized). Corollary
(\ref{eigenspace-corollary}) implies that if $\Sc_g \cap \Sc_{g'} =
\emptyset$ (disjoint angular frequency support), then
$\Fm_{\Sc_g}^\herm \Fm_{\Sc_g'} = \zerov$. It follows that if the
$G$ groups are chosen to have spectra with disjoint support,  then
$[\Fm_{\Sc_1}, \ldots, \Fm_{\Sc_G}]$ is {\em exactly} tall unitary
and, because of Fact \ref{eigenspace-approx}, $\underline{\Um} =
[\Um_1, \ldots, \Um_G]$ is approximately tall unitary, for large
$M$. The following result provides such condition expressed directly
in terms of the AoA intervals.

\begin{thm} \label{nonoverlapping}
Groups $g$ and $g'$ with angle of arrival $\theta_g$ and $\theta_{g'}$
and common angular spread $\Delta$ have spectra with disjoint support if their AoA intervals
$[\theta_g - \Delta,\theta_g + \Delta]$ and $[\theta_{g'} - \Delta,\theta_{g'} + \Delta]$ are disjoint.
\end{thm}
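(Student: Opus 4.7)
The plan is to reduce the claim to a statement about images of disjoint intervals under a single monotone map. From the proof of Lemma \ref{lemma:boundedness}, whenever $D \in [0,1/2]$ and $[\theta-\Delta,\theta+\Delta] \subseteq [-\pi/2,\pi/2]$ (which is ensured by the standing 120 deg sector assumption, since $[-\pi/3,\pi/3] \subset [-\pi/2,\pi/2]$), the only integer that can possibly lie in the shifted interval $[D\sin(-\Delta+\theta)+\xi,\,D\sin(\Delta+\theta)+\xi]$ appearing in (\ref{fikissimo}) is $m=0$. Consequently the support $\Sc_g$ of $S_g(\xi)$ collapses to the explicit interval
\[
\Sc_g \;=\; \bigl[-D\sin(\theta_g+\Delta),\,-D\sin(\theta_g-\Delta)\bigr],
\]
exactly as derived for a single group in the proof of Lemma \ref{lemma:boundedness}.

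The key observation is then that $\Sc_g$ is precisely the image of the AoA interval $[\theta_g-\Delta,\theta_g+\Delta]$ under the map $\varphi:\alpha \mapsto -D\sin(\alpha)$. First I would note that $\varphi$ is continuous and strictly decreasing on $[-\pi/2,\pi/2]$, hence a bijection from that domain onto $[-D,D]$. Next, since bijections preserve disjointness of subsets of their domain, the disjointness of $[\theta_g-\Delta,\theta_g+\Delta]$ and $[\theta_{g'}-\Delta,\theta_{g'}+\Delta]$ (both contained in $[-\pi/2,\pi/2]$ by the sector assumption) immediately yields $\varphi([\theta_g-\Delta,\theta_g+\Delta]) \cap \varphi([\theta_{g'}-\Delta,\theta_{g'}+\Delta]) = \emptyset$, i.e., $\Sc_g \cap \Sc_{g'}=\emptyset$.

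There is no real obstacle beyond bookkeeping: the only thing worth double-checking is that the $m=0$ reduction of (\ref{fikissimo}) genuinely applies to \emph{both} groups simultaneously under the hypotheses of the theorem, so that we are comparing intervals of the same simple form rather than unions arising from additional integer translates. This is exactly the content of the inequality chain in the proof of Lemma \ref{lemma:boundedness}, which only uses $D\leq 1/2$ and $|\theta_g \pm \Delta|\leq \phi <\pi/2$, and therefore applies verbatim to each group $g$ and $g'$. Having secured this, the conclusion follows at once from the monotonicity argument above.
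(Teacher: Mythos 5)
Your proposal is correct and follows essentially the same route as the paper: both identify the support $\Sc_g$ as the image of the AoA interval under the monotone map $\alpha \mapsto -D\sin(\alpha)$ (via the $m=0$ reduction of (\ref{fikissimo}) from Lemma \ref{lemma:boundedness}) and then invoke injectivity of the sine on the sector to transfer disjointness. The paper merely phrases the last step as an interval-ordering condition ($A_g \leq B_{g'}$ or $A_{g'} \leq B_g$) rather than as preservation of disjointness under a bijection, which is the same argument.
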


\begin{proof}
Define:
\begin{eqnarray*}
A_g &=& \max(D \sin (\theta_g + \Delta), D \sin(\theta_g - \Delta))\\
B_g &=& \min(D \sin (\theta_g + \Delta), D \sin(\theta_g - \Delta))\\
A_{g'} &=& \max(D \sin (\theta_{g'} + \Delta), D \sin(\theta_{g'} - \Delta))\\
B_{g'} &=& \min(D \sin (\theta_{g'} + \Delta), D \sin(\theta_{g'} - \Delta)).
\end{eqnarray*}
From (\ref{fikissimo}) we notice that $S_g(\xi)$ and $S_{g'}(\xi)$
have disjoint supports if $A_g \leq B_{g'}$ or $A_{g'} \leq B_g$.
Since the mapping $x \mapsto \sin(x)$ is one-to-one in the interval $[-\pi/3,\pi/3]$, this condition corresponds to
$[\theta_g - \Delta,\theta_g + \Delta] \cap [\theta_{g'} - \Delta,\theta_{g'} + \Delta] = \emptyset$.
\end{proof}

\subsection{DFT pre-beamforming}

\begin{figure}
\centering
  \includegraphics[width=10cm]{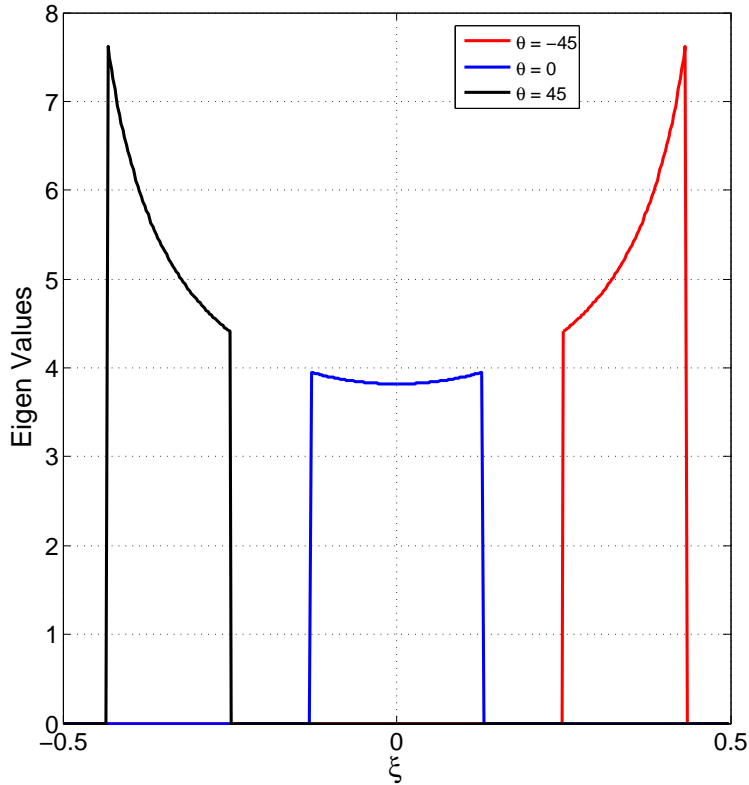}
  \caption{Eigenvalue spectra for a ULA with $M = 400$, $G = 3$, $\theta_1 = \frac{-\pi}{4}, \theta_2 = 0, \theta_3 = \frac{\pi}{4}$,
  $D = 1/2$ and $\Delta = 15$ deg.}
  \label{fig:eig-val-spectra}
\end{figure}

\begin{figure}
\centering
\includegraphics[width=10cm]{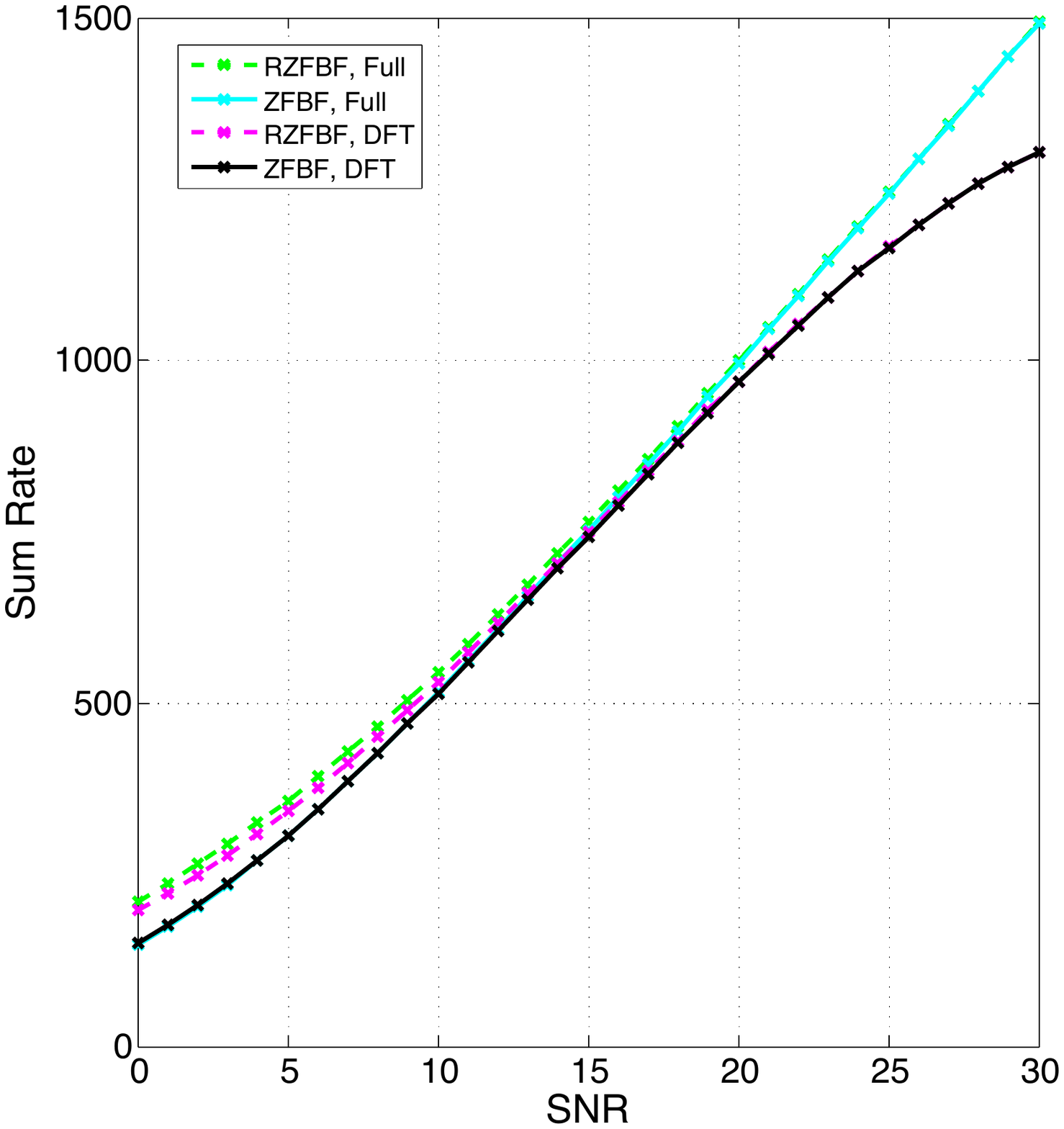}
 \caption{Sum spectral efficiency (bit/s/Hz) vs. SNR (dB) for JSDM (computed via deterministic equivalents)
  for DFT pre-beamforming and PGP, for the configuration with spectra shown in Fig.~\ref{fig:eig-val-spectra},
  choosing $b_g = r_g$ for all groups $g = 1,2,3$.}
  \label{fig:dft-results}
\end{figure}

Owing to the asymptotic eigenspace approximation and mutual orthogonality of the previous section, an efficient approach to JSDM design when the
BS is equipped with a large ULA per sector consists of selecting groups of users with (almost) identical AoA intervals, and find $G$ groups of such users
with non-overlapping AoA intervals. Then, we let $\Bm_g = \Fm_{\Sc_g}$, for $g = 1,\ldots, G$, with $\Fm_{\Sc_g}$ defined as in Corollary \ref{eigenspace-corollary}.
It follows that $\Fm_{\Sc_g}^\herm \Um_{g'} \approx \zerov$ for all $g \neq g'$, such that the sum spectral efficiency achieved by JSDM with PGP
is close to the sum spectral efficiency of the corresponding MU-MIMO downlink channel with full CSIT (see Theorem \ref{simple-opt}).
Notice that this approach is particularly attractive since only a coarse parametric knowledge (AoA interval) for each user is required, rather than an
accurate estimate of its channel covariance matrix.

Fig.~\ref{fig:eig-val-spectra} show the spectra $S_g(\xi)$  for $g = 1,2,3$,  $M
= 400$, and $\theta_1 = \frac{-\pi}{4}, \theta_2 = 0, \theta_3 =
\frac{\pi}{4}$, with $D = 1/2$ and $\Delta = 15$ deg.  The
performance of JSDM with PGP and DFT pre-beamforming is shown in Fig.~\ref{fig:dft-results}, indicating that up to 20 dB of SNR, DFT
pre-beamforming performs close to schemes with full CSIT.

\section{JSDM with 3D pre-beamforming}   \label{sec:super-massive}

So far we considered a planar geometry where each user group $g$ is
identified by its  AoA interval $[\theta_g - \Delta, \theta_g +
\Delta]$. For the sake of simplicity, we allocated equal power to
all $S$ downlink data streams. This is a near-optimal power
allocation in the high SNR (high spectral efficiency) regime and in
the case where the pathloss from the BS to all the UTs is
approximately equal. In practice, however, users with same (or very
similar) AoA interval may be located at different distances to the
BS. In this case, a simple alternative to the complicated and
generally non-convex power allocation optimization across different
users~\footnote{While for MU-MIMO with full CSIT and optimal
capacity achieving coding \cite{weingarten2006capacity} the power
allocation is  a convex optimization problem that can be efficiently
solved \cite{Wei06}, for JSDM with either JGP and PGP, the problem
is non-convex and the optimization is not amenable to a
computationally efficient solution.} consists of dividing the cell
into concentric annular regions, and serve simultaneously groups in
the same region, such that the pathloss is nearly equal for all
jointly processed groups. Groups in different annular regions can be
scheduled over the time-frequency slots. In this section, we
consider an extension of this approach where we assume that the BS
is elevated with respect to ground.   For example, antenna elements
could be placed on the window frames of a tall building forming a
rectangular array with $M$ antennas in each row (each row is an ULA)
and a total of $N$ rows in the vertical dimension. By exploiting the
vertical dimension, different annular regions can be served
simultaneously in the spatial domain.


Assuming a rectangular $N \times M$ array,  we consider using a {\em separable} 3D
pre-beamforming scheme: beamforming in the elevation angle dimension is used to form beams that
``look down'' at different angles, i.e., they illuminate concentric annular regions within the cell sector.
For each such region, precoding in the azimuth angle dimension is obtained by JSDM scheme with $M$ antennas,
as done before. Thanks to separability, we can optimize JSDM schemes independently, one for each
annular region.

The groups served simultaneously by JSDM in the same region
are now identified by two indices, $(l,g)$ where $l = 1,\ldots,L$
indicates the annular region and $g = 1, \ldots, G_l$ the group in
each $l$-th region. A set of groups served simultaneously, on the same time-frequency dimensions, is referred to as a ``pattern''.
A pattern does not necessarily cover the whole sector. In fact, it is usually better to allow for ``holes''
in the pattern, i.e., the group footprints  can be separated by gaps,
in order to guarantee near orthogonality between the dominant eigenmodes of the groups in the same pattern
and thus limiting inter-group interference with PGP.  In
order to provide coverage to the whole sector, different {\em intertwined patterns} can be multiplexed over the
time-frequency dimension, similarly to the intertwined cooperative pattern idea proposed in
\cite{ramprashad2009cellular},\cite{ramprashad2010joint},\cite{caire2010rethinking}.
The fraction of the time-frequency dimensions allocated to each
pattern can be further optimized in order to maximize a network
utility function, reflecting  some desired notion of fairness (see for example \cite{Huh11}).

For the time being, we focus on a single pattern comprising $L$
regions in the elevation angle dimension, and $G_l$ groups in the
azimuth angle dimension for each region $l = 1,\ldots, L$. We let $K_{l,g}$ denote the number of users
in group $(l,g)$. At the BS, an $N \times M$ rectangular
antenna array with $N$ rows and $M$ columns is used.
For each region $l$, we denote by $\Rm_{V,l} \in \CC^{N \times N}$ the vertical
channel covariance matrix~\footnote{We assume that the vertical
correlation does not depend on $g$, but just on $l$.}  and, for each
group $(l,g)$, we let $\Rm_{H,l,g} \in \CC^{M \times M}$ denote the
the horizontal channel covariance matrix. $\Rm_{V,l}$ and
$\Rm_{H,l,g}$ are modeled according to (\ref{eq:SM-4}), with the eigen-decompositions:
\begin{equation}
\Rm_{V,l} = \Um_{V,l} \Lambdam_{V,l} \Um_{V,l}^\herm, \;\;\; \mbox{and} \;\;\;
\Rm_{H,l,g} = \Um_{H,l,g} \Lambdam_{H,l,g} \Um_{H,l,g}^\herm.
\end{equation}
Letting $\hv_{l,g_k}$ denote the $MN \times 1$ the vectorized
channel from the $M \times N$ BS array to the $k^{\rm th}$ user in
group $(l,g)$, we have
\begin{equation}
\EE[\hv_{l,g_k} \hv_{l,g_k}^\herm] = \Rm_{l,g} = \Rm_{H,l,g} \otimes
\Rm_{V,l} = (\Um_{H,l,g} \otimes \Um_{V,l})(\Lambdam_{H,l,g} \otimes
\Lambdam_{V,l})(\Um_{H,l,g}^\herm \otimes \Um_{V,l}^\herm).
\end{equation}
This covariance matrix is common (by assumption) to all users
$g_k$ in group $(l,g)$. Denoting the ranks of $\Rm_{H,l,g}$ and $\Rm_{V,l}$ by
$r_{H,l,g}$ and $r_{V,l}$, respectively, we write $\hv_{l,g_k}$ as
\[ \hv_{l,g_k} = (\Um_{H,l,g} \otimes \Um_{V,l})(\Lambdam_{H,l,g}^{\frac{1}{2}} \otimes
\Lambdam_{V,l}^{\frac{1}{2}}) \wv_{l,g_k}, \]
where $\Um_{H,l,g}$ is $M \times r_{H,l,g}$, $\Um_{V,l}$ is $N \times r_{V,l}$,
$\Lambdam_{H,l,g}$ is $r_{H,l,g} \times r_{H,l,g}$ and
$\Lambdam_{V,l}$ is $r_{V,l} \times r_{V,l}$. The vector
$\wv_{l,g_k}$, of dimension $r_{H,l,g}r_{V,l} \times 1$, has i.i.d.  entries $\sim \Cc\Nc(0,1)$.

In JSDM with 3D pre-beamforming, the transmitted signal is given by
\begin{equation}
\xv = \sum_{l=1}^L (\Bm_l \Pm_l \dv_l) \otimes \qv_l,
\end{equation}
where $\qv_l$ denotes the $N \times 1$ pre-beamforming vector for
region $l$ in the elevation angle dimension, $\Bm_l$ is the $M
\times b_l$ pre-beamforming matrix of the form $\Bm_l =
[\Bm_{l,1},\ldots,\Bm_{l,G_l}]$, where $\Bm_{l,g}$ denotes the
pre-beamforming matrix of size $M \times b_{l,g}$ for group $(l,g)$
and $\Pm_l$ is the linear precoding matrix for the groups of region
$l$, that depends on the instantaneous effective channels as given
in Section \ref{sec:jsdm}. Notice that we allocate (by design) a
single dimension per region in the elevation angle direction (this
is why $\qv_l$ has dimensions $N \times 1$) since, because of the
relatively small angle under which the BS sees the different
regions, it is realistic to expect that $\Rm_{V,l}$ has a single
dominant eigenmode. Generalizations considering higher dimensional
vertical pre-beamforming for each region are conceptually
straightforward, although not very useful in typical practical
scenarios.

Using repeatedly the Kronecker product rule $(\Am \otimes \Bm) (\Cm \otimes \Dm) = (\Am\Cm) \otimes (\Bm \Dm)$,
the received signal for user $g_k$ in group $(l,g)$ can be written as
\begin{eqnarray}
\label{eq:1} y_{l,g_k} &=& \wv_{l,g_k}^\herm
(\Lambdam_{H,l,g}^{\frac{1}{2}} \otimes
\Lambdam_{V,l}^{\frac{1}{2}}) (\Um_{H,l,g}^\herm \otimes
\Um_{V,l}^\herm) \xv + z_{l,g_k}\nonumber\\
&=&  \wv_{l,g_k}^\herm (\Lambdam_{H,l,g}^{\frac{1}{2}} \otimes
\Lambdam_{V,l}^{\frac{1}{2}}) (\Um_{H,l,g}^\herm \otimes
\Um_{V,l}^\herm) \left[\sum_{m=1}^L (\Bm_m \Pm_m \dv_m) \otimes
\qv_m \right] + z_{l,g_k}\nonumber\\
&=& \wv_{l,g_k}^\herm (\Lambdam_{H,l,g}^{\frac{1}{2}} \otimes
\Lambdam_{V,l}^{\frac{1}{2}}) \sum_{m=1}^L\left[ (\Um_{H,l,g}^\herm
\Bm_m \Pm_m \dv_m) \otimes (\Um_{V,l}^\herm \qv_m) \right] + z_{l,g_k}\nonumber\\
&=& \wv_{l,g_k}^\herm (\Lambdam_{H,l,g}^{\frac{1}{2}} \otimes
\Lambdam_{V,l}^{\frac{1}{2}}) \sum_{m=1}^L\left[ (\Um_{H,l,g}^\herm
\Bm_m) \otimes (\Um_{V,l}^\herm \qv_m)
\right] \; \Pm_m \dv_m  + z_{l,g_k}.
\end{eqnarray}
If $\qv_m$ is chosen to be orthogonal to $\mbox{Span}(\{ \Um_{V,l} : l \neq m \})$,  (\ref{eq:1}) reduces to
\begin{equation}
\label{eq:2} y_{l,g_k} = \wv_{l,g_k}^\herm
(\Lambdam_{H,l,g}^{\frac{1}{2}} \otimes
\Lambdam_{V,l}^{\frac{1}{2}}) \left[ (\Um_{H,l,g}^\herm \Bm_l)
\otimes (\Um_{V,l}^\herm \qv_l) \right]  \Pm_l \dv_l  + z_{l,g_k}.
\end{equation}
Stacking the signals $y_{l,g_k}$ for all users $g_k$ in group
$(l,g)$ into a $K_{l,g} \times 1$ vector $\yv_{l,g}$, we obtain
\begin{equation}  \label{super-massive-group}
\yv_{l,g} = \Wm_{l,g}^\herm
(\Lambdam_{H,l,g}^{\frac{1}{2}} \otimes
\Lambdam_{V,l}^{\frac{1}{2}}) \left[ (\Um_{H,l,g}^\herm \Bm_l)
\otimes (\Um_{V,l}^\herm \qv_l) \right] \Pm_l \dv_l  + \zv_{l,g},
\end{equation}
where we let $\Wm_{l,g} = [\wv_{l,g_1}, \ldots,
\wv_{l,g_{K_{l,g}}}]$ and $\zv_{l,g} = [z_{l,g_1}, \ldots,
z_{l,g_{K_{l,g}}}]^\transp$.


If the regions are sufficiently separated in the elevation angle
dimension, it is possible to align $\qv_l$ with the dominant
eigenmode of $\Um_{V,l}$, while maintaining the orthogonality
condition $\Um_{V,m}^\herm \qv_l = 0$ for $m \neq l$. In this case,
we have $\Um_{V,l}^\herm \qv_l = (1,0,\ldots,0)^\transp$ and
(\ref{super-massive-group}) reduces to the same form treated
previously for the planar geometry,  with an additional
region-specific coefficient $\sqrt{\lambda_{V,1}}$, corresponding to
the largest eigenvalue of the matrix $\Lambdam_{V,l}$:
\begin{equation} \label{super-massive-group1}
\yv_{l,g} = \sqrt{\lambda_{V,1}}\Wm^\herm_{l,g}
\Lambdam_{H,l,g}^{\frac{1}{2}} \Um_{H,l,g}^\herm \Bm_l \Pm_l \dv_l + \zv_{l,g}.
\end{equation}
Stacking the vectors $\yv_{l,g}$ for all $g = 1,\ldots, G_l$, we obtain
\begin{equation}
\label{eq:jsdm-super-mimo} \yv_{l} = \sqrt{\lambda_{V,1}} \left[
\begin{array}{c}
\Wm^\herm_{l,1}\Lambdam_{H,l,1}^{\frac{1}{2}} \Um_{H,l,1}^\herm\\
\Wm^\herm_{l,2}\Lambdam_{H,l,2}^{\frac{1}{2}} \Um_{H,l,2}^\herm\\
\vdots\\
\Wm^\herm_{l,G_l}\Lambdam_{H,l,G_l}^{\frac{1}{2}} \Um_{H,l,G_l}^\herm\\
 \end{array}\right] \Bm_l \Pm_l \dv_l +
\zv_{l},
\end{equation}
which is of the same form as (\ref{eq:SM-4b}). At this point, the
pre-beamforming matrix $\Bm_l$ and the precoding matrix $\Pm_l$ can
be optimized independently for each region $l$, as described before
for the planar geometry. The coefficients $\lambda_{V,1}$
incorporate the effect of the different geometry of the annular
regions in the elevation angle dimension, including the path loss
due to different distances of the regions from the BS. The
allocation of the total transmit power over the regions can be
further optimized.

\subsection{Results with 3D pre-beamforming}

\begin{figure}
\centerline{\includegraphics[width=6cm]{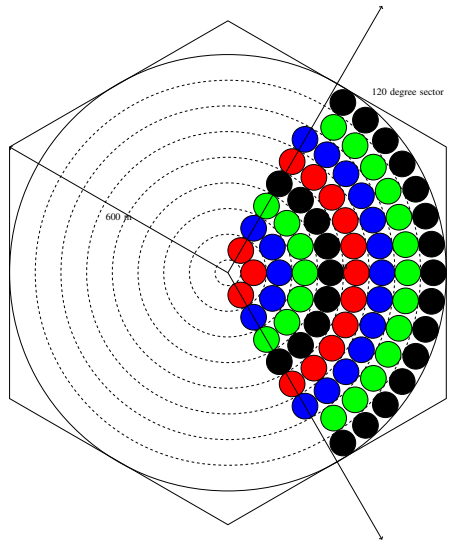}}
  \caption{The layout of one pattern for JSDM with 3D pre-beamforming. The concentric regions are separated by the vertical pre-beamforming.
  The circles indicate user groups. Same-color groups are served simultaneously using JSDM.}
\label{super-mimo-cellular-fig}
\end{figure}

\begin{figure}
\centerline{\includegraphics[width=10cm]{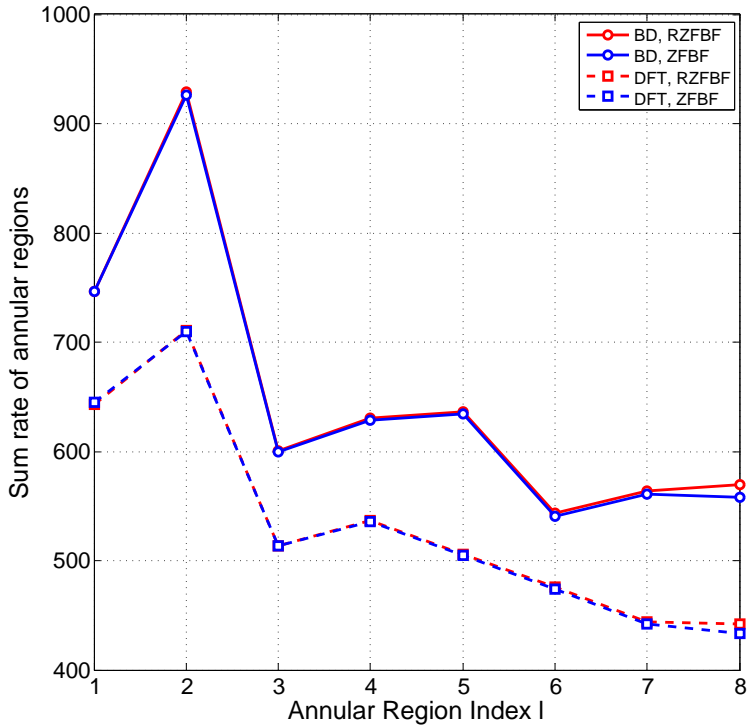}}
  \caption{Sum spectral efficiency $\bar{\Rc}_l$ for different annular regions $l = 1,\ldots,8$ with Regularized ZF and ZF for JSDM with 3D pre-beamforming and ideal CSIT.
  ``BD'' denotes PGP with approximate block diagonalization and ``DFT'' stands for PGP with DFT pre-beamforming.
  Equal power is allocated to all served users and the number of users (streams) in each group is optimized in order to maximize the overall spectral efficiency.}
\label{cellular-sim-fig-1}
\end{figure}

We present some results for JSDM with 3D pre-beamforming and PGP,
with either BD or DFT pre-beamforming in each region. The system
layout is shown in Fig.~\ref{super-mimo-cellular-fig}. We consider
one sector of a hexagonal cell of radius 600 m. The scattering rings
in the channel correlation model have radius ${\sf r}=30$ m. The BS
is located at the center of the cell with the antennas at an
elevation of ${\sf h} = 50$ m, and is equipped with a rectangular
array with $M = 200$ and $N = 300$. We partition the sector into 8
concentric regions at distance $60l$ m, $l \in \{1,\ldots,8\}$. Each
annular region is divided into small scattering rings, each defining
a group. The pathloss between the BS and a point at distance ${\sf
x}$ m is given by
\begin{equation}
g({\sf x}) = \frac{1}{1 + (\frac{{\sf x}}{{\sf d}_0})^\delta},
\end{equation}
with $\delta = 3.8$, ${\sf d}_0 = 30$ m. In these results we assume
ideal CSIT for computing the JSDM precoder. The horizontal
covariance matrix for all groups $(l,g)$ is given by (\ref{eq:SM-4})
with $\Delta_{H,l} = \arctan(\frac{r}{60l}) = \arctan(\frac{1}{2l})$
and $\theta_{H,l,g} \in [-\pi/3,\pi/3]$ such that for any two groups
$(l,g_1)$ and $(l,g_2)$, we have $|\theta_{H,l,g_1} -
\theta_{H,l,g_2}| > 2 \Delta_l$. It is easy to see from
Fig.~\ref{super-mimo-cellular-fig} that as the distance of the
concentric regions from the BS increases, more and more user groups
can be accommodated in the annular region, since $\Delta_{H,l}$
decreases. The vertical covariance matrix is again given by
(\ref{eq:SM-4}) with $\Delta_{V,l} = \frac{1}{2}(\arctan(\frac{60l +
{\sf r}}{{\sf h}}) - \arctan(\frac{60l - {\sf r}}{{\sf h}}))$ and
$\theta_{V,l} = \frac{1}{2}(\arctan(\frac{60l + {\sf r}}{{\sf h}})
+\arctan(\frac{60l-{\sf r}}{{\sf h}}))$. Since the total angle under
which the sector is seen from the elevation viewpoint is narrow, a
large number of antennas in the vertical direction is required in
order to achieve orthogonality between all annular regions
eigenmodes.

For finite $N$, in order to guarantee a desired angular separation between annular regions and therefore
have near-orthogonality in the elevation angle dimension, it is convenient
to partition the annular regions into maximally separated subsets (patterns) and apply BD in the
vertical pre-beamforming. Different patterns can be scheduled in different time-frequency slots.
We denote by $\Ac = \{\Ac_1,\Ac_2,\ldots\}$ the set of patterns. Finding the best possible pattern
partition is computationally hard, so for the sake of simplicity, we
consider a simple partitioning as shown in Fig.~\ref{super-mimo-cellular-fig}, where annular regions with the same
color belong to the same pattern. In our example (see Fig.~\ref{super-mimo-cellular-fig}), numbering the annular regions in ascending
order based on their proximity to the BS, we have $\Ac = \{ \{1,5\} , \{2,6\} , \{3,7\} , \{4,8\} \}$. In this way, (\ref{eq:jsdm-super-mimo}) is replaced by
\begin{equation}
\label{eq:jsdm-super-mimo-2} \yv_{l} = |
\Lambdam_{V,l}^{\frac{1}{2}}\Um_{V,l}^\herm \qv_l | \left[
\begin{array}{c}
\Wm^\herm_{l,1}\Lambdam_{H,l,1}^{\frac{1}{2}} \Um_{H,l,1}^\herm\\
\Wm^\herm_{l,2}\Lambdam_{H,l,2}^{\frac{1}{2}} \Um_{H,l,2}^\herm\\
\vdots\\
\Wm^\herm_{l,G_l}\Lambdam_{H,l,G_l}^{\frac{1}{2}} \Um_{H,l,G_l}^\herm\\
 \end{array}\right] \Bm_l \Pm_l \dv_l +
\zv_{l}.
\end{equation}
Notice that due to BD in the vertical direction, the inter-region interference is exactly zero since $\Um_{V,m}^\herm \qv_l  = \zerov$ for $m \neq l$.
Within each annular region, we use JSDM with PGP. The pre-beamforming matrices $\Bm_l$ for region $l$ are obtained using
approximate BD or the DFT method, as discussed in previous sections.
The dominant rank $r^\star_{l,g}$ for each group $(l,g)$  is given by
\begin{equation}
r^\star_{l,g} = MD(\sin(\theta_{H,l,g} + \Delta_{H,l}) -
\sin(\theta_{H,l,g} - \Delta_{H,l})),
\end{equation}
which is a good approximation for large $M$ motivated by Theorem \ref{asympt-rank}.
For simplicity, we do not consider noisy CSIT and assume that the BS has full knowledge of
the effective channels.  Hence, we let $b_{l,g} = r^\star_{l,g}$. In contrast,
in the case of noisy CSIT the parameter $b_{l,g}$ should be optimized for given channel coherence
block length $T$, as discussed in Remark \ref{choice-of-bprime-remark}.

Denoting by $\Rc_q$ the sum spectral efficiency of pattern $\Ac_q$, and letting $Q$ denote the number of patterns,
the network utility maximization problem is given by
\begin{eqnarray}
\max && g(\Rc_1,\ldots,\Rc_Q)\nonumber\\
\mbox{subject to} && \Rc_q \leq \nu_q \Rc_q^*,\ \ \mbox{for} \; q = 1,\ldots, Q, \nonumber\\
&& \sum_{q=1}^Q \nu_q = 1.
\end{eqnarray}
where $g(\cdot)$ is a concave component-wise non-decreasing network utility function capturing some desired
notion of {\em fairness}, and the optimization variables $\{\nu_q\}$ are the fractions of time-frequency dimensions allocated to each pattern.
We define
\[ \Rc_q^* = \sum_{l \in \Ac_q} \sum_{g = 1}^{G_l} \sum_{k = 1}^{S_{l,g}} R_{l,g,k}, \]
to be the spectral efficiency of each individual pattern, where $S_{l,g}$ is the number of downlink streams to group $(l,g)$ and
$R_{l,g,k}$ is the rate of the $k$-th stream of group $(l,g)$.
We have considered two cases of fairness: proportional fairness (PFS), and max-min fairness.
In both cases, the optimal dimension allocation fractions $\{\nu_q\}$ can be found in closed form.
For PFS, we have  $g(\Rc_1,\ldots,\Rc_Q) = \sum_{q=1}^Q \log (\Rc_q)$, yielding the solution $\nu_q = \frac{1}{Q}$ for all $q$.
For max-min fairness, we have $g(\Rc_1,\ldots,\Rc_Q) = \min_q \Rc_q$, yielding the solution
$\nu_q = \frac{\frac{1}{\Rc^*_q}}{\sum_q \frac{1}{\Rc^*_q}}$.

The spectral efficiency $\Rc_q^*$ can be optimized independently for
each pattern $\Ac_q$. For a given JSDM precoding  scheme, we need to
search over the number of downlink streams in each group. This is a
multi-dimensional integer search over the parameters $\{S_{l,g}\}$
for all groups $(l,g) \in \Ac_q$. In addition, we should optimize
with respect to the power allocation to the downlink data streams,
as mentioned before. In order to obtain a tractable problem, we
resort to good heuristics. Following the design guideline given in
Remark \ref{choice-of-Sprime-remark}, we know that the ratio
$S_{l,g}/b_{l,g}$ should be approximately the same for the optimal
$S_{l,g}$ for all groups $(l,g)$ with similar geometry, i.e.,
belonging to the same region. Hence, we fix this ratio to be the
same for all groups in the same region, and indicate it as
$\alpha_l$. In addition, as done before, we restrict to equal power
allocation to all the downlink streams. Indicating this common
per-stream power value by $\bar{P}$, and letting
$R_{l,g,k}(\bar{P})$ denote the rate of the $k$-th stream of group
$(l,g)$ as a function of $\bar{P}$, calculated according to the
methods given in Section \ref{sec:PERF} and Appendix
\ref{sec:determ-equiv-nonideal-csi}, for given MU-MIMO precoding
scheme, the optimization with respect to $\{\alpha_l\}$ is expressed
by
\begin{eqnarray} \label{optz-prob-3}
\max & & \sum_{l \in \Ac_q} \sum_{g = 1}^{G_l} \sum_{k = 1}^{S_{l,g}} R_{l,g,k}(\bar{P}) \nonumber\\
\mbox{subject to } & & S_{l,g} = \lfloor \alpha_l b_{l,g} \rfloor\nonumber\\
&& \bar{P} = \frac{P}{\sum_{l \in \Ac_q} \sum_{g=1}^{G_l} S_{l,g}}.
\end{eqnarray}
Notice that for a pattern with $|\Ac_q|$ regions, (\ref{optz-prob-3}) consists of a $|\Ac_q|$-dimensional search over the real parameters $\alpha_l \in [0,1]$, which
is tractable when $|\Ac_q|$ is small (in our case, $|\Ac_q| = 2$).

Fig.~\ref{cellular-sim-fig-1} shows the sum spectral efficiency $\sum_{g = 1}^{G_l} \sum_{k = 1}^{S_{l,g}} R_{l,g,k}(\bar{P})$
for each annular region $l = 1,\ldots,8$ in the setup of Fig.~\ref{super-mimo-cellular-fig} with system parameters given at the beginning of this section,
resulting from the above optimization for both DFT pre-beamforming and approximate BD using PGP with RZFBF and ZFBF precoding.
The corresponding sum spectral efficiencies under PFS and max-min fairness scheduling are reported in
Table \ref{tab:cellular-rates}.

\begin{table}
\centering \caption{Sum spectral efficiency (bit/s/Hz) under PFS and max-min fairness
scheduling for PGP and approximate BD/DFT.}
\begin{tabular}{|c|c|c|}
    \hline Scheme & Approximate BD & DFT based\\
    \hline PFS, RZFBF &  1304.4611 & 1067.9604\\
    \hline PFS, ZFBF & 1298.7944 & 1064.2678\\
    \hline MAXMIN, RZFBF & 1273.7203 & 1042.1833\\
    \hline MAXMIN, ZFBF & 1267.2368 & 1037.2915\\
    \hline
\end{tabular}
\label{tab:cellular-rates}
\end{table}

\section{Concluding remarks}

In this work we proposed Joint Space-Division and Multiplexing
(JSDM), a novel approach to MU-MIMO downlink that requires reduced
channel estimation downlink training overhead and CSIT feedback and
therefore is potentially suited to FDD systems, despite using a
large number of BS antennas. JSDM exploits the fact that for large
BSs, mounted on the top of a building or on a dedicated tower,
channel vectors are far from isotropically distributed. Instead,
their dominant eigenspace has dimension much smaller than the number
of BS antennas. Different groups of users are selected, such that
the users in each group share (approximately) the same dominant
eigenspace,  and  the eigenspaces of different groups are nearly
orthogonal. JSDM serves simultaneously such groups of users, and
multiple users in each group. The separation of the groups in the
spatial domain (space-division) is obtained through a
pre-beamforming matrix that depends only on the channel covariance
matrices, while the multiplexing of multiple users in each group is
obtained via linear MU-MIMO precoding based on the instantaneous
``effective'' channel, including the pre-beamforming. It turns out
that the effective channel has reduced dimensionality with respect
to the original multi-antenna multiuser channel, especially with a
``per-group processing'' (PGP) approach, i.e., where each group is
individually pre-coded, disregarding the inter-group interference.
JSDM with PGP can be regarded as a generalization of sectorization,
where each group acts as a directional sector, and in each sector we
apply MU-MIMO spatial multiplexing, disregarding inter-sector
interference.

We showed that when the collection of the channel covariance eigenvectors of the groups forms a tall unitary matrix,
then JSDM with PGP is optimal, in the sense that it can achieve the capacity of the underlying MU-MIMO channel with full instantaneous CSIT.
Then, using Szego's asymptotic theory of large Toeplitz random matrices, we showed that when the BS is equipped with a large linear uniform array,
this tall unitary condition is closely approached, and the pre-beamforming matrix can be obtained by selecting an appropriate subset of columns of
a unitary DFT matrix. In fact, under these assumptions the accurate estimation of the channel covariance matrix is not needed, and just a coarse
estimation of the AoA range for each group is sufficient, as long as the AoA ranges of different groups do not overlap in the azimuth angle domain.
Finally, we extended our approach to the case of 3D beamforming, considering rectangular arrays and pre-beamforming in the elevation angle (vertical) direction.
In this case, the proposed JSDM scheme partitions the cell into concentric annular regions, and serves groups of users with different azimuth angle
in each region. We demonstrated the effectiveness of the proposed scheme in the case of a typical cell size, typical propagation pathloss, and
a large rectangular antenna array mounted on the face of a tall building. In our case, under ideal CSIT, unprecedented
spectral efficiencies of the order of 1000 bit/s/Hz per sector are achieved under various fairness criteria and pre-beamforming techniques.

We also considered the problem of downlink channel estimation and provided formulas for the asymptotic ``deterministic equivalent''
approximation of the achievable receiver SINR, which allows efficient calculation of the system performance without resorting to lengthy
Monte Carlo simulation. For a realistic SNR range around 20 dB, the effect of noisy CSIT can be quantified in $\approx 30$\% loss with respect to
the ideal CSIT case. Hence, spectral efficiencies of $\approx 700$ bit/s/Hz can be expected for the massive 3D JSDM system scenario.

The design of a JSDM system involves many choices: effective rank
$r^\star_g$ of the channel covariance matrix for each group,
pre-beamforming dimension $b_g$, number of users (downlink streams)
for each group $S_g$, for given pre-beamforming design, operating
SNR, and MU-MIMO precoding scheme. In the case of 3D beamforming,
this optimization is significantly more complicated since it has to
be repeated for groups of annular regions served simultaneously by
the vertical beamforming. One of the main merits of this paper is to
provide simple and solid design criteria for such a system, based on
the insight gained by the asymptotic analysis. In fact, a
brute-force search over the whole parameter space becomes quickly
infeasible for practical system scenarios.

We conclude this work by pointing out two interesting related topics, which are left for future work: 1) user group formation; 2)
estimation of the channel covariance matrix dominant eigenspace. User group formation considers clustering algorithms that, given $K$ users each of which is
characterized by its channel covariance dominant eigenspace, forms groups of users that can be served simultaneously using JSDM, such that
the system spectral efficiency is maximized. In order to enable user group formation and JSDM, the dominant eigenspace of each user must
be estimated from noisy samples of the received signal. Here, the problem is that for a large number of BS antennas the channel covariance matrix
is high-dimensional, and the dimension is typically comparable with the number of samples. Hence, the common wisdom on ``sample covariance'' estimation
does not apply, and more sophisticated techniques must be used (e.g., \cite[Ch. 17]{couillet2011random}, \cite{mestre2008improved,marzetta2011random}).

\appendices

\section{Deterministic equivalents for the SINR with PGP and noisy CSIT} \label{sec:determ-equiv-nonideal-csi}

We provide fixed-point equations for the calculation of the
deterministic equivalent approximations of the SINR for JSDM with
PGP, noisy CSIT and the two types of linear precoding considered in
this paper, namely, RZFBF and ZFBF. Notice that these expressions
hold for arbitrary pre-beamforming matrices, as long as they are
fixed constants independent of the instantaneous channel matrix
realizations. In particular, they hold for (approximated) BD and DFT
pre-beamforming. We consider the general case of group parameters
$\{S_g\}$, $\{b_g\}$, with equal power per stream, $P_{g_k} =
\frac{P}{S}$ for all $g_k$. The formulas below are a direct
application of the results in \cite{debbah2012}. Their derivation is
lengthy but somehow straightforward after realizing that all the
assumption in \cite{debbah2012} apply to our case. In the spirit of
striking a good balance between usefulness, conciseness and
completeness, we report the formulas without the details of their
derivation.

\subsection{Regularized Zero Forcing Precoding} \label{sec:determ-equiv-rzf}

For users in group $g$, the regularized zero forcing precoding matrix is given by
\begin{equation}
\label{rzfbf-jsdm-sect-est}
\Pm_{g,{\rm rzf}} = \bar{\zeta}_g
(\widehat{\textsf{\pH}}_g \widehat{\textsf{\pH}}_g^\herm + b_g \alpha
\Id_{b_g})^{-1} \widehat{\textsf{\pH}}_g,
\end{equation}
where $\widehat{\textsf{\pH}}_g$ is the matrix formed by the channel estimates $\widehat{\textsf{\hv}}_{g_k}$ obtained as in
(\ref{messy-mmse-estimator}).
The power normalization factor $\bar{\zeta}_g$ is given by
\begin{equation} \label{power-factor}
\bar{\zeta}_g^2 = \frac{S_g}{\trace\left( \Pm_{g,{\rm rzf}}^\herm
\Bm_g^\herm \Bm_g \Pm_{g,{\rm rzf}}\right )}
\end{equation}
Letting $\widehat{\bar{\Km}}_g = (\widehat{\textsf{\pH}}_g
\widehat{\textsf{\pH}}_g^\herm + b_g \alpha \Id_{b_g})^{-1}$, the
SINR of user $g_k$ is given by
\begin{equation}
\label{sinr-jsdm-sect-rzf-1-est} \widehat{\gamma}_{g_k,{\rm
pgp,icsi}} = \frac{\frac{P}{S} \bar{\zeta}_g^2
|\widehat{\textsf{\hv}}_{g_k}^\herm \widehat{\bar{\Km}}_g
\widehat{\textsf{\hv}}_{g_k}|^2} {\frac{P}{S} \bar{\zeta}_g^2
|\widehat{\textsf{\ev}}_{g_k}^\herm \widehat{\bar{\Km}}_g
\widehat{\textsf{\hv}}_{g_k}|^2 + \sum_{j \neq k} \frac{P}{S}
\bar{\zeta}_g^2 |\hv_{g_k}^\herm \Bm_g \widehat{\bar{\Km}}_g
\widehat{\textsf{\hv}}_{g_j}|^2 + \sum_{g' \neq g, j} \frac{P}{S}
\bar{\zeta}_{g'}^2 |\hv_{g_k}^\herm \Bm_{g'}
\widehat{\bar{\Km}}_{g'} \widehat{\textsf{\hv}}_{g'_j}|^2 + 1}
\end{equation}
where ``csi'' denotes noisy CSIT. The deterministic equivalent of
the SINR in this case is given by
\begin{equation}
\widehat{\gamma}_{g_k,{\rm pgp,rzf,csi}} - \widehat{\gamma}_{g_k,{\rm
pgp,rzf,csi}}^o \stackrel{M \rightarrow \infty}{\longrightarrow} 0
\end{equation}
with
\begin{equation} \label{ansuman-1}
\widehat{\gamma}_{g_k,{\rm pgp,rzf,csi}}^o = \frac{\frac{P}{S}
\widehat{\bar{\zeta}}_g^2 (\widehat{\bar{m}}_{g}^o)^2}
{\frac{P}{S} \widehat{\bar{\zeta}}_g^2 \widehat{\bar{E}}_{g}^o + \frac{P}{S}
\widehat{\bar{\zeta}}_g^2 \widehat{\bar{\Upsilon}}_{g,g}^o +
\left (1 + \sum_{g'
\neq g} \frac{P}{S} \widehat{\bar{\zeta}}_{g'}^2
\widehat{\bar{\Upsilon}}_{g,g'}^o \right ) \left (1 + \widehat{\bar{m}}_{g}^o \right )^2}
\end{equation}
where $\widehat{\bar{\zeta}}_g^2 = \frac{1}{\widehat{\bar{\Gamma}}_g^o}$ and
the quantities $\widehat{\bar{m}}_{g}^o$,
$\widehat{\bar{\Upsilon}}_{g,g}^o$, $\widehat{\bar{\Upsilon}}_{g,g'}^o$ and
$\widehat{\bar{\Gamma}}_g^o$ are given by
\begin{eqnarray}
\label{fixed-pt-1-rzfbf-jsdm-sect-est} \widehat{\bar{m}}_{g}^o &=& \frac{1}{b_g} \trace \left ( \widehat{\bar{\Rm}}_{g} \widehat{\bar{\Tm}}_g \right )\\
\label{fixed-pt-2-rzfbf-jsdm-sect-est} \widehat{\bar{\Tm}}_g &=& \left(
\frac{S_g}{b_g}  \frac{\widehat{\bar{\Rm}}_{g}}{1 + \widehat{\bar{m}}_{g}^o}
+
\alpha \Id_{b_g}\right)^{-1}\\
\widehat{\bar{\Gamma}}_g^o &=& \frac{1}{b_g} \frac{\widehat{\bar{n}}_{g}}{(1
+
\widehat{\bar{m}}_{g}^o)^2}\\
\widehat{\bar{n}}_{g} &=& \frac{\frac{1}{b_g} \trace \left (
\widehat{\bar{\Rm}}_{g} \widehat{\bar{\Tm}}_g \Bm_{g}^\herm\Bm_{g}
\widehat{\bar{\Tm}}_g \right )}{1 - \frac{\frac{S_g}{b_g} \trace \left (
\widehat{\bar{\Rm}}_{g}
\widehat{\bar{\Tm}}_g \widehat{\bar{\Rm}}_{g} \widehat{\bar{\Tm}}_g  \right )}{b_g (1 + \widehat{\bar{m}}_{g}^o)^2}} \\
\widehat{\bar{E}}_{g}^o &=& \frac{1}{b_g} \frac{\frac{1}{b_g} \trace
\left ( \widehat{\bar{\Rm}}_{g} \widehat{\bar{\Tm}}_g (\bar{\Rm}_g -
\widehat{\bar{\Rm}}_g) \widehat{\bar{\Tm}}_g \right )}{1 -
\frac{\frac{S_g}{b_g} \trace \left ( \widehat{\bar{\Rm}}_{g}
\widehat{\bar{\Tm}}_g \widehat{\bar{\Rm}}_{g} \widehat{\bar{\Tm}}_g  \right )}{b_g (1 + \widehat{\bar{m}}_{g}^o)^2}}\\
\widehat{\bar{\Upsilon}}_{g,g}^o &=& (1 + \widehat{\bar{m}}_{g}^o)^2 A_1 - \left[ 2 \widehat{\bar{m}}_{g}^o(1 + \widehat{\bar{m}}_{g}^o) - (\widehat{\bar{m}}_{g}^o)^2 \right] A_2\\
A_1 &=& \frac{1}{b_g} (S_g - 1)
 \frac{\widehat{\bar{n}}_{g,g,1}}{(1 + \widehat{\bar{m}}_{g}^o)^2}\\
A_2 &=& \frac{1}{b_g} (S_g - 1)
 \frac{\widehat{\bar{n}}_{g,g,2}}{(1 + \widehat{\bar{m}}_{g}^o)^2}\\
\widehat{\bar{n}}_{g,g,1} &=& \frac{\frac{1}{b_g} \trace \left (
\widehat{\bar{\Rm}}_{g} \widehat{\bar{\Tm}}_g \bar{\Rm}_{g}
\widehat{\bar{\Tm}}_g \right )}{1 - \frac{\frac{S_g}{b_g} \trace\left (
\widehat{\bar{\Rm}}_{g} \widehat{\bar{\Tm}}_g \widehat{\bar{\Rm}}_{g}
\widehat{\bar{\Tm}}_g \right )}{b_g (1 +
\widehat{\bar{m}}_{g}^o)^2}}
\end{eqnarray}
\begin{eqnarray}
\widehat{\bar{n}}_{g,g,2} &=& \frac{\frac{1}{b_g} \trace \left (
\widehat{\bar{\Rm}}_{g} \widehat{\bar{\Tm}}_g \widehat{\bar{\Rm}}_{g}
\widehat{\bar{\Tm}}_g \right )}{1 - \frac{\frac{S_g}{b_g} \trace\left (
\widehat{\bar{\Rm}}_{g} \widehat{\bar{\Tm}}_g \widehat{\bar{\Rm}}_{g}
\widehat{\bar{\Tm}}_g \right )}{b_g (1 +
\widehat{\bar{m}}_{g}^o)^2}}\\
\widehat{\bar{\Upsilon}}_{g,g'}^o &=& \frac{S_{g'}}{b_{g'}} \frac{
\widehat{\bar{n}}_{g',g}}{(1 + \widehat{\bar{m}}_{g'}^o)^2}\\
\widehat{\bar{n}}_{g',g} &=& \frac{\frac{1}{b_{g'}} \trace \left (
\widehat{\bar{\Rm}}_{g'} \widehat{\bar{\Tm}}_{g'}
\Bm_{g'}^\herm\Rm_{g}\Bm_{g'} \widehat{\bar{\Tm}}_{g'}  \right )}{1 -
\frac{\frac{S_{g'}}{b_{g'}} \trace \left ( \widehat{\bar{\Rm}}_{g'}
\widehat{\bar{\Tm}}_{g'} \widehat{\bar{\Rm}}_{g'} \widehat{\bar{\Tm}}_{g'}
\right )}{b_{g'} (1 + \widehat{\bar{m}}_{g'}^o)^2}}
\end{eqnarray}

\subsection{Zero Forcing Precoding} \label{sec:determ-equiv-zf}

For $\alpha = 0$, the precoding matrix in
(\ref{rzfbf-jsdm-sect-est}) reduces to the zero forcing precoding
matrix given by
\begin{equation}
\Pm_{g,{\rm zf}} = \bar{\zeta}_g
\widehat{\textsf{\pH}}_g(\widehat{\textsf{\pH}}_g^\herm
\widehat{\textsf{\pH}}_g)^{-1}
\end{equation}
where $\bar{\zeta}_g$ is the power normalization
factor given by
\begin{equation}
\bar{\zeta}_g^2 = \frac{S_g}{{\rm tr} (\Pm_{g,{\rm zf}}^\herm \Bm_g \Bm_g^\herm \Pm_{g,{\rm zf}})}
\end{equation}
Letting $\widehat{\bar{\Km}}_g = \widehat{\textsf{\pH}}_g
(\widehat{\textsf{\pH}}_g^\herm \widehat{\textsf{\pH}}_g)^{-2}
\widehat{\textsf{\pH}}_g^\herm$, the SINR of user $g_k$ is given by
\begin{equation}
\label{sinr-zfbf-sect-jsdm-1-est}
\widehat{\gamma}_{g_k,{\rm pgp,zf,csi}} = \frac{\frac{P}{S} \widehat{\bar{\zeta}}_g^2
|\widehat{\textsf{\hv}}_{g_k}^\herm \widehat{\bar{\Km}}_g
\widehat{\textsf{\hv}}_{g_k} |^2}
{\frac{P}{S} \widehat{\bar{\zeta}}_g^2 |\widehat{\textsf{\ev}}_{g_k}^\herm
\widehat{\bar{\Km}}_g \widehat{\textsf{\hv}}_{g_k} |^2 + \sum_{j \neq
k} \frac{P}{S} \widehat{\bar{\zeta}}_g^2 |\hv_{g_k}^\herm
\Bm_g \widehat{\bar{\Km}}_g \widehat{\textsf{\hv}}_{g_j} |^2 +
\sum_{g' \neq g,j} \frac{P}{S} \widehat{\bar{\zeta}}_{g'}^2
|\hv_{g_k}^\herm \Bm_{g'} \widehat{\bar{\Km}}_{g'}
\widehat{\textsf{\hv}}_{g'_j} |^2 + 1}
\end{equation}
The deterministic equivalent of the SINR is given as
\begin{equation}
\widehat{\gamma}_{g_k,{\rm pgp,zf,csi}} - \widehat{\gamma}_{g_k,{\rm pgp,zf,csi}}^o \stackrel{M \rightarrow
\infty}{\longrightarrow} 0
\end{equation}
where $\widehat{\gamma}_{g_k,{\rm pgp,zf,csi}}^o$ is given by
\begin{eqnarray}  \label{ansuman-2}
\widehat{\gamma}_{g_k,{\rm pgp,zf,csi}}^o &=& \frac{\frac{P}{S}
\widehat{\bar{\zeta}}_g^2 } {1 + \frac{P}{S}
\widehat{\bar{\zeta}}_g^2
\frac{\widehat{\bar{E}}_{g}^o}{(\widehat{\bar{m}}_{g}^o)^2} +
\frac{P}{S} \widehat{\bar{\zeta}}_{g}^2
\widehat{\bar{\Upsilon}}_{g,g}^o + \sum_{g' \neq g} \frac{P}{S}
\widehat{\bar{\zeta}}_{g'}^2
\widehat{\bar{\Upsilon}}_{g,g'}^o}\nonumber\\
&=& \frac{\frac{P}{S} \widehat{\bar{\zeta}}_g^2 } {1 + \frac{P}{S}
\widehat{\bar{\zeta}}_g^2 S_g
\frac{\widehat{\bar{E}}_{g}^o}{(\widehat{\bar{m}}_{g}^o)^2} +
\sum_{g' \neq g} \frac{P}{S} \widehat{\bar{\zeta}}_{g'}^2
\widehat{\bar{\Upsilon}}_{g,g'}^o}
\end{eqnarray}
with $\widehat{\bar{\zeta}}_g^2 =
\frac{1}{\widehat{\bar{\Gamma}}_g^o}$ and the quantities
$\widehat{\bar{\Gamma}}_g^o$,
$\widehat{\bar{\Upsilon}}_{g,g'}^o$, and
$\widehat{\bar{m}}_{g}^o$ are given by~\footnote{It is easy
to see that when $\Bm_g^\herm \Bm_g = \Id_{b_g}$,
$\widehat{\bar{n}}_{g} = \widehat{\bar{m}}_{g}$}
\begin{eqnarray}
\label{fixed-pt-1-rzfbf-jsdm-sect-est}
\widehat{\bar{m}}_{g}^o &=& \frac{1}{b_g} \trace \left ( \widehat{\bar{\Rm}}_{g} \widehat{\bar{\Tm}}_g \right )\\
\label{fixed-pt-2-rzfbf-jsdm-sect-est} \widehat{\bar{\Tm}}_g
&=& \left( \frac{S_g}{b_g}
\frac{\widehat{\bar{\Rm}}_{g}}{\widehat{\bar{m}}_{g}^o} +
\Id_{b_g}\right)^{-1}\\
\widehat{\bar{\Gamma}}_g^o &=& \frac{1}{b_g}
\frac{\widehat{\bar{n}}_{g}}{
(\widehat{\bar{m}}_{g}^o)^2}\\
\widehat{\bar{\Upsilon}}_{g,g'}^o &=& \frac{S_{g'}}{b_{g'}}
 \frac{\widehat{\bar{n}}_{g',g}}{(\widehat{\bar{m}}_{g'}^o)^2}\\
\widehat{\bar{n}}_{g} &=& \frac{\frac{1}{b_g} \trace \left (
\widehat{\bar{\Rm}}_{g} \widehat{\bar{\Tm}}_g
\Bm_{g}^\herm\Bm_{g} \widehat{\bar{\Tm}}_g \right )}{1 -
\frac{\frac{S_g}{b_g} \trace \left ( \widehat{\bar{\Rm}}_{g}
\widehat{\bar{\Tm}}_g \widehat{\bar{\Rm}}_{g} \widehat{\bar{\Tm}}_g  \right )}{b_g (\widehat{\bar{m}}_{g}^o)^2}}\\
\widehat{\bar{n}}_{g',g} &=& \frac{\frac{1}{b_{g'}} \trace
\left ( \widehat{\bar{\Rm}}_{g'} \widehat{\bar{\Tm}}_{g'}
\Bm_{g'}^\herm\Rm_{g}\Bm_{g'} \widehat{\bar{\Tm}}_{g'}
\right )}{1 - \frac{\frac{S_{g'}}{b_{g'}} \trace \left (
\widehat{\bar{\Rm}}_{g'} \widehat{\bar{\Tm}}_{g'}
\widehat{\bar{\Rm}}_{g'} \widehat{\bar{\Tm}}_{g'} \right
)}{b_{g'}
(\widehat{\bar{m}}_{g'}^o)^2}}\\
\widehat{\bar{\Upsilon}}_{g,g}^o &=& A_1 - A_2\\
A_1 &=& \frac{1}{b_g} (S_g - 1)
 \frac{\widehat{\bar{n}}_{g,g,1}}{(\widehat{\bar{m}}_{g}^o)^2}\\
A_2 &=& \frac{1}{b_g} (S_g - 1) \frac{\widehat{\bar{n}}_{g,g,2}}{(\widehat{\bar{m}}_{g}^o)^2}\\
\widehat{\bar{n}}_{g,g,1} &=& \frac{\frac{1}{b_g} \trace
\left ( \widehat{\bar{\Rm}}_{g} \widehat{\bar{\Tm}}_g
\bar{\Rm}_{g} \widehat{\bar{\Tm}}_g \right )}{1 -
\frac{\frac{S_g}{b_g} \trace\left ( \widehat{\bar{\Rm}}_{g}
\widehat{\bar{\Tm}}_g \widehat{\bar{\Rm}}_{g}
\widehat{\bar{\Tm}}_g \right )}{b_g (\widehat{\bar{m}}_{g}^o)^2}}\\
\widehat{\bar{n}}_{g,g,2} &=& \frac{\frac{1}{b_g} \trace
\left ( \widehat{\bar{\Rm}}_{g} \widehat{\bar{\Tm}}_g
\widehat{\bar{\Rm}}_{g} \widehat{\bar{\Tm}}_g \right )}{1 -
\frac{\frac{S_g}{b_g} \trace\left ( \widehat{\bar{\Rm}}_{g}
\widehat{\bar{\Tm}}_g \widehat{\bar{\Rm}}_{g}
\widehat{\bar{\Tm}}_g \right )}{b_g (\widehat{\bar{m}}_{g}^o)^2}}\\
\widehat{\bar{E}}_{g}^o &=& \frac{1}{b_g}
\frac{\frac{1}{b_g} \trace \left ( \widehat{\bar{\Rm}}_{g}
\widehat{\bar{\Tm}}_g (\bar{\Rm}_g - \widehat{\bar{\Rm}}_g)
\widehat{\bar{\Tm}}_g \right )}{1 - \frac{\frac{S_g}{b_g}
\trace \left ( \widehat{\bar{\Rm}}_{g} \widehat{\bar{\Tm}}_g
\widehat{\bar{\Rm}}_{g} \widehat{\bar{\Tm}}_g  \right )}{b_g
(\widehat{\bar{m}}_{g}^o)^2}}
\end{eqnarray}
In order to obtain the desired expression (\ref{ansuman-2}), we notice that
\begin{equation}
\frac{\widehat{\bar{E}}_{g}^o}{(\widehat{\bar{m}}_{g}^o)^2}
+ \widehat{\bar{\Upsilon}}_{g,g}^o =
S_g\frac{\widehat{\bar{E}}_{g}^o}{(\widehat{\bar{m}}_{g}^o)^2}
\end{equation}

\section{General formula for $S(\xi)$}
\label{bessel-derive}

We find the general expression for $S(\xi)$ defined in (\ref{PSD}) for $r_m = [\Rm]_{\ell,\ell-m}$ with $[\Rm]_{m,p}$ given by (\ref{correlation-matrix-ULA}),
without any restriction on the AoA range. We have:
\begin{eqnarray} \label{eqn:s-xi}
S(\xi) & = & \sum_{m=-\infty}^\infty \left [ \frac{1}{2\Delta} \int_{-\Delta+\theta}^{\Delta+\theta}   e^{-j2\pi D m \sin(\alpha)} d\alpha \right ] e^{-j2\pi \xi m} \nonumber \\
& = & \frac{1}{2\Delta} \int_{-\Delta+\theta}^{\Delta+\theta} \left [ \sum_{m=-\infty}^\infty e^{-j2\pi m (D \sin(\alpha) + \xi)} \right ] d\alpha \nonumber \\
& = & \frac{1}{2\Delta} \int_{-\Delta+\theta}^{\Delta+\theta} \left [ \sum_{m=-\infty}^\infty \delta (D \sin(\alpha) + \xi - m) \right ] d\alpha \nonumber \\
& = & \frac{1}{2\Delta} \int \left [ \sum_{m=-\infty}^\infty \delta
(z + \xi - m) \right ] \frac{dz}{\sqrt{D^2 - z^2}},
\end{eqnarray}

The limits in (\ref{eqn:s-xi}) depend on the range of
$[\theta-\Delta,\theta+\Delta]$. We distinguish the following cases:
\begin{enumerate}

\item For $\theta+\Delta < -\frac{\pi}{2}, \theta-\Delta > \frac{\pi}{2}$ and $-\frac{\pi}{2} \leq \theta-\Delta < \theta+\Delta \leq \frac{\pi}{2}$, (\ref{eqn:s-xi}) becomes
\begin{equation}
\sum_{m=-\infty}^\infty \left [ \int_{\min (D \sin(\theta-\Delta),D
\sin(\theta+\Delta))}^{\max (D \sin(\theta-\Delta),D
\sin(\theta+\Delta))}
\delta (z + \xi - m) \frac{dz}{\sqrt{D^2 - z^2}} \right ]
\end{equation}

\item For $\theta-\Delta < -\frac{\pi}{2}, \theta+\Delta > \frac{\pi}{2}$, (\ref{eqn:s-xi}) becomes
\begin{align}
& \sum_{m=-\infty}^\infty \left [ \int_{-D}^{D \sin(\theta-\Delta)}  \delta (z + \xi - m) \frac{dz}{\sqrt{D^2 - z^2}} + \int_{-D}^{D} \delta (z + \xi - m) \frac{dz}{\sqrt{D^2 - z^2}} \right . \\
& + \left . \int_{D \sin(\theta+\Delta)}^{D}  \delta (z + \xi - m) \frac{dz}{\sqrt{D^2 - z^2}} \right ]
\end{align}

\item For $\theta-\Delta < -\frac{\pi}{2}, -\frac{\pi}{2} \leq \theta+\Delta \leq \frac{\pi}{2}$, (\ref{eqn:s-xi}) becomes
\begin{equation}
\sum_{m=-\infty}^\infty \left[ \int_{-D}^{D \sin(\theta-\Delta)}
 \delta (z + \xi - m) \frac{dz}{\sqrt{D^2 - z^2}} +
\int_{-D}^{D \sin(\theta+\Delta)}
 \delta (z + \xi - m) \frac{dz}{\sqrt{D^2 - z^2}} \right ]
\end{equation}

\item For $-\frac{\pi}{2} \leq \theta-\Delta \leq \frac{\pi}{2}, \theta+\Delta > \frac{\pi}{2}$, (\ref{eqn:s-xi}) becomes
\begin{equation}
\sum_{m=-\infty}^\infty \left[ \int_{D \sin(\theta-\Delta)}^{D}
\delta (z + \xi - m) \frac{dz}{\sqrt{D^2 - z^2}} + \int_{D
\sin(\theta+\Delta)}^{D}
 \delta (z + \xi - m) \frac{dz}{\sqrt{D^2 - z^2}} \right ]
\end{equation}
\end{enumerate}
Now, owing to the property of the Dirac delta function, we have
\begin{equation}
\sum_{m=-\infty}^\infty \left [ \int_{A}^{B} \delta (z + \xi - m)
\frac{dz}{\sqrt{D^2 - z^2}} \right ] = \sum_{m \in [A + \xi, B +
\xi]} \frac{1}{\sqrt{D^2 - (m - \xi)^2}}
\end{equation}
as a result of which we can write $S(\xi)$ for the cases identified above as
\begin{enumerate}

\item Case $\theta+\Delta < -\frac{\pi}{2}, \theta-\Delta >
\frac{\pi}{2}$ and $-\frac{\pi}{2} \leq \theta-\Delta <
\theta+\Delta \leq \frac{\pi}{2}$
\begin{equation}
S(\xi) = \frac{1}{2\Delta} \sum_{m \in
[\min(D\sin(-\Delta+\theta),D\sin(\Delta+\theta)) + \xi,
\max(D\sin(-\Delta+\theta),D\sin(\Delta+\theta)) + \xi]}
\frac{1}{\sqrt{D^2 - (m - \xi)^2}}
\end{equation}

\item Case $\theta-\Delta < -\frac{\pi}{2}, \theta+\Delta >
\frac{\pi}{2}$
\begin{eqnarray}
S(\xi) &=& \frac{1}{2\Delta} \sum_{m \in [-D + \xi,
D\sin(-\Delta+\theta) + \xi]} \frac{1}{\sqrt{D^2 - (m - \xi)^2}} +
\frac{1}{2\Delta} \sum_{m \in (-D + \xi, D + \xi)}
\frac{1}{\sqrt{D^2 - (m - \xi)^2}}
\nonumber\\
&& + \frac{1}{2\Delta} \sum_{m \in [D\sin(\Delta+\theta) + \xi,D +
\xi]} \frac{1}{\sqrt{D^2 - (m - \xi)^2}} \nonumber
\end{eqnarray}

\item Case $\theta-\Delta < -\frac{\pi}{2}, -\frac{\pi}{2} \leq \theta+\Delta
\leq \frac{\pi}{2}$
\begin{equation}
S(\xi) = \frac{1}{2\Delta} \sum_{m \in [-D + \xi,
D\sin(-\Delta+\theta) + \xi]} \frac{1}{\sqrt{D^2 - (m - \xi)^2}} +
\frac{1}{2\Delta} \sum_{m \in (-D + \xi, D\sin(\Delta+\theta) +
\xi]} \frac{1}{\sqrt{D^2 - (m - \xi)^2}}
\end{equation}

\item Case $-\frac{\pi}{2} \leq \theta-\Delta \leq \frac{\pi}{2}, \theta+\Delta >
\frac{\pi}{2}$
\begin{equation}
S(\xi) = \frac{1}{2\Delta} \sum_{m \in [D\sin(-\Delta+\theta) +
\xi,D + \xi]} \frac{1}{\sqrt{D^2 - (m - \xi)^2}} + \frac{1}{2\Delta}
\sum_{m \in [D\sin(\Delta+\theta) + \xi,D + \xi)} \frac{1}{\sqrt{D^2
- (m - \xi)^2}}
\end{equation}
\end{enumerate}
It is easy to see that the formula reduces to (\ref{fikissimo}) when $-\frac{\pi}{2} \leq \theta-\Delta < \theta+\Delta \leq
\frac{\pi}{2}$. Taking the limits from $-\pi$ to $\pi$ recovers the Fourier transform of the Bessel $J_0$ function commonly used to model
correlated Rayleigh fading in an isotropic scattering environment \cite{bello1963characterization},
given by   $\frac{1}{\pi} \frac{{\rm rect}(\xi/2D)}{\sqrt{D^2 - \xi^2}}, \xi \in [-1/2,1/2]$ for $D \in [0,\frac{1}{2}]$.

\bibliographystyle{IEEEtran}
\bibliography{jsdm_ciss}

\end{document}